\documentclass[11pt]{article} 
\usepackage[margin=1in]{geometry} 
\usepackage{amsmath, amssymb, amsthm} 
\usepackage{mathrsfs}
\usepackage{bbm} 
\usepackage{hyperref} 
\usepackage[dvipsnames]{xcolor} 
\usepackage{dsfont} 
\usepackage{natbib}
\usepackage{tikz} 
\usepackage{enumitem}
\usepackage{cleveref}
\usepackage{comment}
\usepackage{thmtools, thm-restate}
\usepackage{subcaption} 

\usepackage{float}

\newcommand{\one}{\mathbf{1}}

\usepackage[textwidth=3cm]{todonotes}

\newcommand{\PMmx}{\ensuremath{{\mathbb{P}\left(M=m \mid X=x\right)}}}
\newcommand{\PMmxm}{\ensuremath{{\mathbb{P}\left(M=m \mid X^{(m)}=x^{(m)}\right)}}}
\newcommand{\PMzerox}{\ensuremath{{\mathbb{P}\left(M=0 \mid X=x\right)}}}

\newcommand{\E}{\ensuremath{{\mathbb E}}}
\newcommand{\R}{\ensuremath{{\mathbb R}}}
\newcommand{\Prob}{\ensuremath{{\mathbb P}}}

\newcommand{\fulldata}{\mathcal{S}_n}
\newcommand{\data}{\tilde{\mathcal{S}}_n}

\newcommand{\Pjointn}{\ensuremath{\mathbb{P}_0^n}{}}
\newcommand{\Pjoint}{\ensuremath{\mathbb{P}_0}}
\newcommand{\Pthetajoint}{\ensuremath{\mathbb{P}}}
\newcommand{\Gauss}[2]{\mathcal{N}\left(#1,#2\right)}

\newcommand{\claim}[1]{\par\medskip\noindent\textit{#1}\ }

\newcommand{\KLMAR}{\ensuremath{\widetilde{\textnormal{KL}}}}

\newcommand{\KL}{\ensuremath{{\textnormal{KL}}}}
\newcommand{\He}{\ensuremath{\textnormal{H}}}
\newcommand{\HeMAR}{\ensuremath{\widetilde{\textnormal{H}}}}

\newcommand{\logm}{\ensuremath{\log_{-}}}
\newcommand{\Pnsieve}{\mathcal{P}_n}
\newcommand{\Pn}{\ensuremath{\mathbb{P}_n}}

\newcommand{\Pobsloc}[1]{{\widetilde{\mathcal{P}}_{n,#1}}}
\newcommand{\Pnloc}[1]{{\mathcal{P}_{n,#1}}}

\newcommand{\order}{\lesssim}
\DeclareMathOperator*{\argmax}{arg\,max}
\newcommand{\Jt}{J}
\newcommand{\Vt}{\widetilde{V}}

\newcommand{\m}[2]{\ell_{#1, #2}}
\newcommand{\tildem}[2]{\tilde{\ell}_{#1, #2}}
\newcommand{\mm}[2]{\ell^{(m)}_{#1, #2}}

\usepackage{algorithm}
\usepackage{algpseudocode}

\newcommand{\CIV}{C_4}

\newcommand{\N}{N}
\newcommand{\NI}{N_1}
\newcommand{\NII}{N_2}

\title{Asymptotics of Nonparametric Estimation under General Non-monotone MAR Missingness: A Nonparametric Maximum Likelihood Approach}

\author{Yating Zou$^1$, Huimin Hu$^2$, Jeffrey Näf$^2$, \vspace{0.2cm} \\
$ ^1$ University of North Carolina at Chapel Hill \\
        $ ^2$Research Institute for Statistics and Information Science,\\
        University of Geneva \vspace{0.1cm}\\
}
\date{} 

\newtheorem{thm}{Theorem}[section] 
\newtheorem{asm}{Assumption}[section] 
\newtheorem{prop}[thm]{Proposition} 
\newtheorem{lemma}[thm]{Lemma} 
\newtheorem{cor}[thm]{Corollary} 
\newtheorem{dfn}{Definition}[section]

\theoremstyle{remark}
\newtheorem{remark}{Remark}

\begin{document}

\maketitle

\begin{abstract}
Missing data constitute a pervasive challenge in empirical research. Consequently, there is an ever-growing number of methods designed to address this challenge, with multiple imputation and inverse probability weighting the dominant strategies. Despite this, theoretical guarantees remain limited, particularly in the challenging case of non-monotone missing at random (MAR). When guarantees exist, they are often confined to simplified settings such as missing completely at random, monotone or block-wise missingness, or rest on restrictive assumptions about the missingness mechanism. In this paper, we utilize the theory of sieve maximum likelihood to establish a general rate of convergence under MAR that requires no modeling of the missingness mechanism and no restriction on the configuration of missing patterns, beyond MAR itself and a natural positivity condition. 
Applying this result to density estimation, we show that the complete-data density can be estimated at the minimax rate over a H\"{o}lder class, up to a logarithmic factor, for any prescribed smoothness level. The missingness does not affect the rate and enters only through a constant.
The estimator is approximated in practice by a simple expectation-maximization (EM) algorithm operating on the incomplete data directly. In simulations, it performs comparably to the kernel density estimator supplied with the complete data across a wide range of missingness levels.
\end{abstract}

\section{Introduction}

In any data analysis, some values might not be recorded. Such missing values represent a pervasive challenge across empirical disciplines. Far from being a mere nuisance, they can introduce systematic biases that severely compromise the validity of statistical inference. The formal statistical analysis of missing values arguably started with the inception of missing at random (MAR) in 1976 by Rubin in the seminal paper \cite{Rubin_Inferenceandmissing}. MAR allows missingness to depend on the data only through the recorded values, neither as restrictive as missing completely at random (MCAR), which requires missingness to be unrelated to the data altogether, nor as permissive as missing not at random (MNAR), which prohibits identification in general \citep{ourresult}. However, there has been a lot of confusion about the MAR condition, what it implies and what it does not. As such, despite a wealth of research and despite important theoretical results in special instances of MAR, such as when the probability of missingness only depends on a subset of fully-observed variables (\cite{EmpiricalLik1, EmpiricalLik2, Quantile2012, Quantile2014, Quantile2015, EmpiricalLik3, han2019quantile, Responsequantileimputation,imputationpoweredinference, pervin2026calibrationframeworkinferencepartially}), the theoretical properties of estimation under general MAR are to date poorly understood. This is partly due to the fact that MAR does not restrict the possible patterns, so that missingness can be non-monotone, as illustrated in Figure \ref{fig:illustrationfocond}. In this case, the seemingly simple MAR condition presents several (estimation) issues, as indicated by the lack of theoretical results and the search for alternatives to MAR (\cite{Malinsky2022, chen2022pattern, MNARcontamination, MMDmissingnesspaper}).

Despite this inherent difficulty, MAR rose to prominence due to the ``ignorability result'' in \cite{Rubin_Inferenceandmissing}; using maximum likelihood estimation (MLE) and Bayesian analysis, the missing data mechanism (MDM) can be ``ignored'' under MAR and an additional condition on the parameter space, leading to the so-called ``ignoring MLE'', a version of MLE where only the observed marginal distributions are modeled. Though this ignorability result itself does not imply anything about the statistical properties of the ignoring MLE, it points towards a special relationship of MAR with likelihood modeling. However, it took almost 40 years after the introduction of MAR until \cite{Takai2013} showed consistency and asymptotic normality of the ignoring MLE, thus providing a formal basis to the long-standing claim that maximum likelihood estimation is sensible under MAR. However, these results do not extend to nonparametric estimation and, in particular, not to general M-estimation \citep{Mestimatormissingvalues}. Direct estimation strategies to remedy this, such as inverse probability weighting (IPW), appear infeasible under general MAR without additional assumptions (see, e.g., the discussion in \cite{MARinverseweighting, chen2022pattern, Bayes}, and Section \ref{Sec_Motivating}), and the same appears to hold true for alternative approaches such as \cite{EmpiricalLik4}.  Consequently, seemingly simple questions such as, ``can the distribution or density of the complete-data be consistently (and nonparametrically) recovered under MAR?'' and ``How does one obtain a consistent $M$-estimator under MAR, without parametric assumptions?" have not yet been answered in a frequentist setting, to the best of our knowledge. The only general theoretical results we are aware of were recently presented in \cite{Bayes}, providing an asymptotic contraction rate result for nonparametric Bayesian estimation. These results show in particular that, despite non-monotone MAR missingness, the density of a set of independently and identically distributed data can be consistently estimated, in the sense that the posterior contracts around the true density. This in turn implies a posterior contraction result of any (M-) estimator that is a continuous functional of the density. 

In this paper, we derive an analogous result in a frequentist setting. Using the principle of the ``sieve'' MLE, we derive general nonparametric rate results under MAR. Just as in \cite{Bayes} we achieve the same rates as in the complete-data case, only increased by a constant related to a natural positivity assumption. We then apply this general result to a new ignoring density estimator that is able to take missing data and return a consistent density estimate from which a sample without missing values can be generated. While the density itself may be of interest, this also leads to (frequentist) consistency of any (M-) estimator that can be expressed as a continuous functional of the density. Given the difficulty of direct estimation, this two-step strategy may be unavoidable under general non-monotone MAR. That is, to obtain a finite-dimensional parameter of interest, one must first recover the complete-data distribution. In fact, this is (implicitly) what imputation tries to achieve; creating a data set without missing values that approximates the (unobserved) distribution of the complete data set, see e.g., \cite{OTimputation, näf2024goodimputationmarmissingness}. However, instead of having to impute, our estimator is able to directly obtain a density estimate through the ignoring principle. In contrast to the above weighting approaches, our nonparametric ignoring approach makes it possible to obtain consistency under minimal assumption on the missingness mechanism; all we need is MAR and a natural positivity condition. In particular, we can obtain results under general non-monotone MAR; we do not need to make assumptions about dependencies between variables (such as in the graphical modeling literature, see e.g., \cite{nabi2025define} and the references therein), or relationship of patterns (such as in the general methodology introduced in \cite{chen2022pattern}); and we do not need to adapt our estimation procedure on identifying assumptions (such as in \cite{nonparametricpmm, chen2022pattern}) or to rely on potentially unstable inverse probability weighting. Finally, compared to \cite{FLOWGEM}, who combine the ignoring approach with Wasserstein gradient flows, without providing consistency results, we do not need to assume any smoothness conditions on the missingness mechanism.\footnote{Though not directly spelled out in \cite{FLOWGEM}, these smoothness conditions arise naturally in the approximation of the gradient.} 

Aside from the MAR missingness, our sieve MLE result relies on rather standard and mild assumptions. In particular, compared to the standard result in \cite[Theorem 3.4.12]{vandervaart2023weak}, we are able to utilize the recent result in \cite{kaji2026hellinger} to avoid the stringent assumption of bounded density ratios. The theorem will likely allow one to show rate consistency of a large class of (new) ignoring estimators. Further, our density estimator result may be of independent interest, even for complete-data. We use a sieve of finite Gaussian mixtures inspired by \cite[Chapter 9]{Fundamentals}, with a covariance matrix whose eigenvalues are allowed to diverge with $n$. Although in practice this estimator is simply an estimator of a finite Gaussian mixture, which has been studied widely for complete-data \cite{EM0, EM1, EM2, MixtureofGaussians}, we show that it is able to estimate a density satisfying a Hölder condition with the minimax rate of estimation for any level of smoothness. Avoiding the assumption of bounded density ratios is crucial to achieve this result. Although ignoring estimators in parametric MLE are known to increase computational complexity, compared to the computationally heavy estimator introduced in \cite{Bayes}, the estimator presented here allows for closed-form iterations and is quick to calculate. Moreover, a simple BIC criterion, that does not change the consistency properties of the algorithm, appears to be quite effective in choosing the number of components in practice. It is also easily possible to parallelize the computation of our estimator over the number of components, rendering the calculations highly scalable. As such, apart from providing the frequentist version of the general contraction rate results of \cite{Bayes}, this also opens the door to develop efficient and well-performing new estimators.


The paper proceeds as follows. Section \ref{sec_Background} introduces background and notation, while Section \ref{sec_literature_lontributions} gives further comparisons to the literature and discusses a motivating example, as well as our contributions. Section \ref{sec_contractionresults} then introduces the first main theorem, a general convergence result for the sieve MLE under MAR, while Section \ref{sec_densityestimation} derives the convergence result for density estimation. Finally, Section \ref{sec_empirical} discusses an implementation of the estimator and presents simulation results, and Section \ref{sec_conclusion} concludes. Code for the density estimation method as well as the experiments can be found in \url{https://github.com/huimin-hu-stat/MAR}.

\section{Background and Notation}\label{sec_Background}

Let $X_1, \dots, X_n$ be random vectors taking values in a measurable space $(\mathcal{X}, \mathcal{A})$, $\mathcal{X} \subset \mathbb{R}^d$. The $X_i$'s are assumed to be independently and identically distributed (i.i.d.) from a distribution $P_0$ belonging to a statistical model $\mathcal{P}$. We assume that each $P \in \mathcal{P}$ admits a density $p$ with respect to Lebesgue's measure on $\mathcal{X}$. Instead of observing the $X_i$'s directly, for each $i \in \{1, \dots, n\}$, we only observe a subset of its $d$ components. Let $M_i \in \{0,1\}^d$ be a random binary mask that indicates which coordinates of $X_i$ are observed:
\[
(M_i)_j = 
\begin{cases}
0 & \text{if the } j\text{-th component of } X_i \text{ is observed}, \\
1 & \text{if the } j\text{-th component of } X_i \text{ is missing}.
\end{cases}
\]
We assume that the pairs $(X_i, M_i)$ are i.i.d.\ with a joint distribution $\mathbb{P}_{(X,M)}$ on $\mathcal{X} \times \{0,1\}^d$ that is assumed to be absolutely continuous with respect to the product of the Lebesgue and counting measures. We denote by $X_i^{(M_i)}$ the restriction of $X_i$ to its observed components, that is, the subvector containing only the entries for which $(M_i)_j = 0$. The available data thus consist of the partially observed vectors along with their positions 
$$
\data = \left\{ \left(X_1^{(M_1)},M_1\right),  \dots, \left(X_n^{(M_n)},M_n\right) \right\} \, .
$$
For convenience, we also define the corresponding (unobserved) complete data samples as 
$$
\fulldata = \left\{ \left(X_1,M_1\right),  \dots, \left(X_n,M_n\right) \right\} \, .
$$

We denote by \( \Prob \) the joint distribution of \( (X,M) \) when \( X \sim P \) and \( M \) follows the unknown conditional missingness mechanism. In particular, $\Prob_{(X,M)} = \Pjoint$ and $\fulldata \sim \Pjointn$.

\subsection{Missingness at Random and Ignorability}
We focus in this paper on a specific family of missingness mechanisms usually referred to as \emph{Missing at Random}, which is encoded in the following assumption:

\begin{asm}[MAR]
\label{asm_true_MDM}
    The true conditional distribution of $M$ given $X$ is \emph{Missing at Random} (MAR), meaning that for almost any $x\in\mathcal{X}$, the probability mass function of $M$ given $X=x$ only depends on its observed components: for almost any $x\in\mathcal{X}$, for any $m\in\{0,1\}^d$, 
    $$
   \PMmx = \PMmxm .
    $$
\end{asm}
Assumption \ref{asm_true_MDM} might be seen as a ``population version'' of the original MAR version of \cite{Rubin_Inferenceandmissing}. For a discussion we refer to \cite{whatismeant3}. Crucially, it is the weakest MAR assumption, aside from the original definition in \cite{Rubin_Inferenceandmissing}. For instance, it is often assumed that $\Prob\left(M=m\mid X\right)$ only depends on a set of fully observed variables, a much stronger assumption (see e.g., the discussion in \cite{näf2024goodimputationmarmissingness}). 

\vspace{0.2cm}
We do not assume anything on the structure of patterns in $M$, other than that the completely empty pattern $M=\mathbbm{1}$ (where $\mathbbm{1}=(1,\ldots,1)$) has probability zero:

\begin{asm}[Disregarding the empty pattern]
\label{asm_no_empty_MDM}
    We assume $\Prob(M=\mathbbm{1})=0$.
\end{asm}

In particular, missingness can be non-monotone, as illustrated in Figure \ref{fig:illustrationfocond}. The first example exhibits monotone missingness, but the second already violates this property, with $X_2$ missing in the second pattern and $X_3$ missing in the third. If the missingness mechanism $\Prob\left(M=m\mid X\right)$ depends only on $X_3$, learning $\Prob\left(M=m\mid X\right)$ and applying weighting estimators remains relatively straightforward. Under MAR, however, missingness may instead depend on $X_2$ in the second pattern and on $X_1$ in the third, which complicates matters considerably. This example is discussed further in Sections \ref{Sec_Motivating} and \ref{sec_empirical}.


\vspace{0.2cm}
We also introduce a technical assumption, standard in the literature, requiring the so-called \emph{propensity score}, defined as the function $x \mapsto \PMzerox$ giving the conditional probability of being fully observed, to be bounded away from zero:
\begin{asm}[Positivity of the propensity score]
\label{asm_positive_MDM}
    The propensity score $x\mapsto \PMzerox$ is bounded away from zero: there exists some constant $c_0 > 0$ such that for almost any $x$, we have $\PMzerox \geq c_0$.
\end{asm}

This is a rather standard assumption for theoretical analysis of missing values (see, e.g., \cite{MARinverseweighting, Malinsky2022}). 

\begin{figure*}
    \centering
\begin{tikzpicture}


\node at (0,0) {$ \begin{pmatrix}
x_{1,1} & x_{1,2} & x_{1,3} \\
\textrm{NA} & x_{2,2} & x_{2,3} \\
\textrm{NA} & \textrm{NA} & x_{3,3}
\end{pmatrix}$};

\node at (4,0) {$\begin{pmatrix}
x_{1,1} & x_{1,2} & x_{1,3} \\
 \textrm{NA} & x_{2,2} & x_{2,3} \\
x_{3,1} & \textrm{NA} & x_{3,3}
\end{pmatrix}$};


\node at (8,0) {$\begin{pmatrix}
x_{1,1} & x_{1,2} & x_{1,3} \\
 x_{2,1} &\textrm{NA} & x_{2,3} \\
\textrm{NA} & x_{3,2} &\textrm{NA} 
\end{pmatrix}$};

\end{tikzpicture}
    \caption{Three Data matrices with missing values, each with three different patterns. Each contains the fully observed pattern $M=0$ and does not contain the completely unobserved pattern ($M\neq \mathbbm{1}$).}
    \label{fig:illustrationfocond}
\end{figure*}



Let $\Theta \subset \R^d$ be a set of parameters for a model $\mathcal{P}=\{P_{\theta}\}_\theta$. Since the pair $(X,M)$ is random, the central idea is to model its joint distribution by combining the complete-data model $\mathcal{P}$ for $X$ with a model for the missingness mechanism parameterized by some $\phi$. Given the partially observed dataset $\data$, the maximum likelihood estimator is obtained by maximizing the likelihood of $\data$ with respect to $(\theta,\phi)$. A fundamental result in the missing-data literature, known as the \emph{ignorability} property of the MLE and established by \cite{Rubin_Inferenceandmissing}, shows that inference based on the estimator
\begin{align}\label{eq_ignoring}
    \widetilde{\theta}_n = \arg\max_{\theta\in\Theta} \sum_{i=1}^{n} \log p_{\theta}^{(M_i)}\!\left(X_i^{(M_i)}\right) \, ,
\end{align}
where $p_{\theta}^{(m)}(x)$ denotes the marginal density of $x$ corresponding to the components observed under mask $m\in\{0,1\}^d$, is equivalent to the maximum likelihood estimator computed from the full observed dataset $\data$ under the MAR (Assumption~\ref{asm_true_MDM}), provided that $\theta$ and $\phi$ are distinct, meaning that they do not share any parameters. This maximum (ignorable) likelihood estimator (referred to simply as the MLE from now on) optimizes over $\theta$ while ignoring the missingness mechanism. It is particularly convenient in practice, as it avoids the often difficult task of specifying a model for the missingness mechanism. We now extend this principle into a fully nonparametric approach.

\subsection{Notation}\label{sec_notation}

We now introduce and summarize the notation used throughout the paper.

\begin{itemize}
\item We assume to observe masked i.i.d. data along with their positions, $$\data = \left\{ \left(X_1^{(M_1)},M_1\right),  \dots, \left(X_n^{(M_n)},M_n\right) \right\}.$$
   \item $\Pjoint$ is the joint distribution of $(X,M)$, with $\fulldata \sim \Pjointn$. We denote the conditional distribution of $M \mid X$ as $\PMmx$ for $m \in \{0,1\}^d$ and almost all $x \in \mathcal{X}$, which is well defined on $\mathcal{X} \subset \R^d$. We assume that $\Pjoint$ is absolutely continuous with respect to the product of Lebesgue's and counting measures, with joint density $(x,m) \mapsto  p_0(x) \PMmx$.
   \item We assume a model class $\mathcal{P}$ for the distribution of $X \sim P_0$, with $P_0 \in \mathcal{P}$ (i.e. the well-specified case). In addition, $\mathcal{P}$ also induces a model class $\{\Pthetajoint: P \in \mathcal{P}\} $ of $(X,M)$, whereby $\Pthetajoint$ is the distribution induced by the density $(x,m) \mapsto  p(x) \PMmx$.
   \item For $m \in \{0,1\}^d$, $X^{(m)}$ is the subvector of $X$ corresponding to the variables such that $m_j=0$, while $X^{(\bar{m})}$ is the subvector of $X$ corresponding to the variables such that $m_j=1$. We denote the corresponding marginal densities as $p^{(m)}$ and $p^{(\bar{m})}$.
    \item For complete data, we denote by $\KL(P_1 \|P_2)$, the KL divergence between $P_1,P_2$,
    \begin{align*}
        \KL(P_1 \| P_2)=\E_{X \sim P_1}\left[ \log \left( \frac{p_1(X)}{p_2(X)} \right) \right].
    \end{align*}
    We note that this expectation is always defined and nonnegative, though $\KL(P_1 \| P_2)=\infty$, if $P_1(\{p_2(X)=0\}) > 0$.
    \item For complete data, the Hellinger distance between $P_1, P_2$ is given as:
\[
\He^2(p_1, p_2) = \int \left( \sqrt{p_1}(x)- \sqrt{p_2}(x)\right)^2 \mathrm{d}x.
\]
\item  For a class $\mathcal{F}$ and an appropriate pseudometric $d$, write $N_{[\,]}(\varepsilon,\mathcal{F},d)$ for the
$\varepsilon$-bracketing number and
\[
  J_{[\,]}(\delta,\mathcal{F},d):=\int_0^\delta\sqrt{1+\log N_{[\,]}(\varepsilon,\mathcal{F},d)}\,
  d\varepsilon
\]
for the (nonstandardized) bracketing integral. It is increasing and concave in $\delta$, so
$J_{[\,]}(c\delta,\mathcal{F},d)\le c\,J_{[\,]}(\delta,\mathcal{F},d)$ for $c\ge1$.
\end{itemize}

Given any collection
$(f^{(m)})_{m\in\{0,1\}^d}$ of functions, each $f^{(m)}$ a measurable function of the observed
block $x^{(m)}$, define on the joint space
$\mathcal{X}\times\{0,1\}^d$ the measurable function
\begin{equation}\label{eq:glue}
  \tilde f(x,m)\;:=\;\sum_{m'} f^{(m')}(x^{(m')})\,\one\{m=m'\}.
\end{equation}
For $(X,M)\sim\Pjoint$ and the i.i.d.\ sample, this definition allows us to express
\begin{equation}\label{eq:ep}
  \E_{(X,M)\sim\Pjoint}\bigl[f^{(M)}(X^{(M)})\bigr]=\Pjoint\tilde f,
  \qquad
  \frac1n\sum_{i=1}^{n} f^{(M_i)}\bigl(X_i^{(M_i)}\bigr)=\Pn\tilde f,
\end{equation}
mirroring empirical-process notation.

\section{Related Literature and Contributions}\label{sec_literature_lontributions}

Let $\Pnsieve \subset \mathcal{P}$ be a sieve for the complete-data density. The goal of this paper is to demonstrate that, under MAR and positivity (Assumptions \ref{asm_true_MDM}--\ref{asm_positive_MDM}) and further rather standard assumptions, a (approximate) maximizer $\hat{p}_n$ of the log-likelihood over the sieve $\Pnsieve$,
\begin{equation}
\hat{p}_n = \argmax_{p \in \Pnsieve} \sum_{i=1}^{n} \log p^{(M_i)}\!\left(X_i^{(M_i)}\right),
\label{eq:ourmle}
\end{equation}
has $d(\hat{p}_n,p_0)$ going to zero with a rate $\delta_n$. Thus, despite the MAR missingness, the nonparametric MLE is consistent. Moreover, the rate $\delta_n$ will be the same as in the complete data case up to the constant $1/\sqrt{c_0}$, where $c_0$ is the lower bound in Assumption \ref{asm_positive_MDM}.

In this section, we first dive deeper into the related literature and then discuss our contributions.

\subsection{Related Literature}

The term MAR has been a frequent cause of confusion in the literature. As such, it is often claimed that \textit{``Procedure XX is valid under MAR''}. However, to the best of our knowledge, frequentist validity in the sense of consistency has not been formally established in general models, and has only relatively recently been addressed for regular parametric complete-data models in \cite{Takai2013}. The difficulty surrounding the MAR condition has led to a split in the literature. On the one hand, an ever increasing number of empirical (imputation) methods are proposed, often claiming to ``work under MAR'' without theoretical guarantees, such as \cite{stekhoven_missoforest, CARTpaper0, GAIN, directcompetitor1, MIWAE,VAE1, misgan, missdiff, MIRI} among others. We refer to \cite{näf2024goodimputationmarmissingness, OneBenchmarktorulethemall, practical} for a discussion of the pitfalls of some of these methods. On the other hand, theoretical results under MAR typically focus on special cases, for instance, by assuming that there is a set of fully observed variables that completely determines the missingness (e.g., \cite{EmpiricalLik1, EmpiricalLik2, EmpiricalLik3, Responsequantileimputation, imputationpoweredinference}). One interesting recent example of this is \cite{EmpiricalLik4}, who use empirical likelihood methods as an alternative to IPWs. Since the empirical likelihood approach is a version of nonparametric MLE, it shares some connection to our work. However, the approach is again only applicable in the restricted MAR settings that we are looking to overcome. Finally, one of the most general sets of sufficiency conditions for IPW and imputation-based estimators are the assumptions encoded in \cite{chen2022pattern}. In particular, this paper generalizes several prior identification restrictions, such as donor-based restrictions of \cite{nonparametricpmm}, into a single framework. In this approach, the so-called ``pattern graphs'' encode the missingness assumptions, which can even be missing not at random (MNAR). ``Regularity'' of this pattern graph is sufficient to construct IPW and even imputation-based (M-) estimators, which in turn depend on the graph. Although this is another viable alternative to the ignoring approach we pursue, it comes at the price of assuming a specific graph, and, as discussed in \cite{chen2022pattern}, MAR in general does not define a regular pattern graph. In particular, the example in Section \ref{Sec_Motivating} cannot be written as such a regular graph.

This difficulty of dealing with MAR may be one of the reasons why the MAR condition has fallen somewhat out of favor and instead robust MCAR version are considered \cite{MNARcontamination, MMDmissingnesspaper}. However, the work in \cite{Rubin_Inferenceandmissing, Takai2013, Bayes} and others has shown that MAR naturally aligns with likelihood maximization, or equivalently, with KL divergence minimization. This is the angle that we consider to derive our convergence results. While some papers understood the importance of this connection between likelihood optimization and MAR, such as \cite{MIWAE, MIRI}, they appear to present only limited theoretical considerations. One crucial exception is the paper \cite{FLOWGEM} which builds on \cite{Bayes} and appears to be the first to show that a (population) minimizer of the Kullback-Leibler (KL) divergence corresponds to the true distribution, even without parametric restrictions. A crucial condition for this, which was not mentioned in the previous papers, is positivity, i.e. Assumption \ref{asm_positive_MDM}. However, they then only provide a result on their gradient approximation, which is an important first step, but leaves crucial theoretical questions open. Moreover, inspecting their assumptions in more detail reveals that even this first step needs a relatively strong smoothness condition on $\Prob(M=m \mid x)$ for all $m$ with $\Prob(M=m) > 0$, namely that the second derivative exists and is uniformly bounded. This does not hold in the Example in Section \ref{Sec_Motivating}. 

In contrast, we provide a fairly general convergence result with minimal assumptions on $\Prob(M=m \mid x)$. To illustrate the generality of our results, we derive a new nonparametric ignoring density estimator able to estimate the density under MAR missingness and positivity without the need for further identification conditions or propensity score estimates. To underscore the significance of this density estimation result, it is instructive to consider the recent paper \cite{nonparametricpmm}. In this paper, kernel density estimators (KDEs) per pattern are used to establish rates of estimation for both the density and cdf of the data. This is close in spirit to the density estimator we obtain as a natural application of our theory. However, in \cite{nonparametricpmm} a density per pattern $m$ is estimated, making it necessary to use complex identification conditions, to be able to estimate the correct conditional densities from the remaining patterns. In particular, these identification conditions are not met under general MAR. Thus, while their approach is promising, it also demonstrates the difficulty of non-monotone missingness. In contrast, we develop a straightforward ignoring approach that uses all available data to estimate the density and attains the minimax rate of estimation, up to logarithmic factors, under general non-monotone MAR. In addition, this new estimator has advantages over the KDE even on complete data, as our sieve construction allows to attain this minimax rate for any smoothness level of the true density.

\subsection{Motivating Example: MAR Challenge}\label{Sec_Motivating}

We study an adaptation of the original MAR challenge formulated in \cite{näf2024goodimputationmarmissingness, practical}. For $d=3$, we take $X \sim \Gauss{0}{\Sigma}$, where $\Gauss{\mu}{\Sigma}$ denotes the Gaussian distribution with mean $\mu$ and covariance matrix $\Sigma$. Here, $\Sigma$ is chosen to have diagonal elements equal to one and a correlation of 0.7 between $X_1, X_2$, with no correlation between $X_1,X_2$ and $X_3$. We then define the following missingness mechanism: For $(m_1,m_2, m_3)=((0,0,0), (0,1,0), (1,0,0))$, let
            \begin{align}\label{eq_MARmissing0}
        &\Prob(M=m_1\mid X=x)=(G(x_1)+G(x_2))/3, \nonumber \\
        &\Prob(M=m_2\mid X=x)=(2-G(x_1))/3 \nonumber \\
        &\Prob(M=m_3\mid X=x)=(1-G(x_2))/3,
    \end{align}
    with
    \begin{align}\label{eq_Gx}
            G(x)=\max(\Phi(x), 2c_0),
    \end{align}
    where $\Phi$ is the standard Gaussian distribution function. This mechanism meets the MAR condition~\eqref{asm_true_MDM}. Similarly, the conditional probability of sampling the fully observed pattern is strictly bounded away from zero ($\Prob(M=m_1\mid X=x) \geq c_0 > 0$). However, as outlined in \cite{näf2024goodimputationmarmissingness}, this is a rather complex non-monotone mechanism, designed to be difficult for imputation and estimation. The original goal of the challenge was to estimate a specific M-estimator, namely the $0.1$ quantile of $X_1$. In this example, it is unclear how to estimate $\Prob(M=m_1 \mid X=x)$ for IPW approaches. For instance, it is one of the MAR examples discussed in \cite{chen2022pattern}, which cannot be written in terms of regular pattern graphs. As such, it is unclear how to use IPW strategies in this example. Moreover, since there is no fully observed set of variables on which $\Prob(M=m \mid X=x)$ depends, methods such as in \cite{Quantile2012, Quantile2014, Quantile2015, han2019quantile} cannot be used. Figure~\ref{fig:MotivatingExample} shows the result for several widely-used techniques to deal with missing values such as mean imputation. The only method that is consistent is the ignoring MLE defined in \eqref{eq_ignoring}, though at the price of a stringent parametric assumption. 
    
    The goal of this paper is to develop a nonparametric estimation framework that can tackle such general MAR situations with asymptotic guarantees. We note that we do not claim that this is a particularly realistic example. However, it highlights the difficulty of estimation under general MAR. While there are (nonparametric) methods that perform well in this example, such as a subset of MICE methods \citep{practical} and the recent methods in \cite{Bayes, FLOWGEM}, none of them provides (frequentist) guarantees.

\begin{figure}
    \centering
    \includegraphics[width=0.95\linewidth]{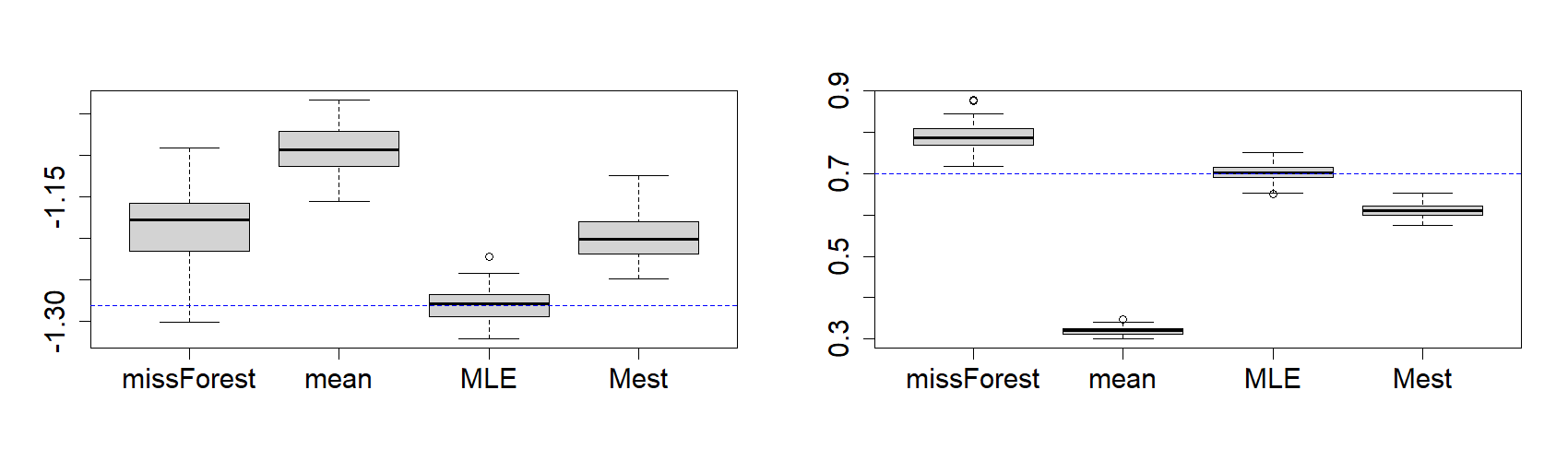}
    \caption{Estimation of the 0.1 quantile of $X_1$ (left) and correlation between $X_1$, $X_2$ (right), for $B=50$, with $n=5000$ observations each. The dashed blue line corresponds to the correct value. ``missForest'' refers to the widely-used method of \cite{stekhoven_missoforest}, ``mean'' to mean imputation, where missing values are replaced by the sample mean, ``MLE'' to the ignoring MLE in~\eqref{eq_ignoring} and ``Mest'' to the ignoring M-estimator, a straightforward generalization of the ignoring MLE where the loss function is optimized on the observed part of the different patterns. Here only the ignoring MLE is consistent, though at the price of a stringent parametric assumption.}
    \label{fig:MotivatingExample}
\end{figure}

\subsection{Contributions}

\begin{enumerate}
    \item We derive a general sieve MLE estimation rate result under non-monotone MAR, showing that essentially the same rate can be attained as in the complete data setting, up to the constant $1/\sqrt{c_0}$ related to the positivity assumption. We use recent results in \cite{kaji2026hellinger} to remove the stringent assumption of bounded density ratios.
    \item We apply this result to density estimation, developing a new (ignoring) estimator that achieves minimax optimal rates up to logarithmic factors in a Hölder class, inspired by the Bayesian estimator discussed in \cite{Fundamentals, Bayes}. Compared with the KDE on complete data, this estimator achieves this near-minimax rate for any smoothness parameter. Moreover, this estimator allows to draw new data points from the estimated complete-data distribution.
    \item We implement this new ignoring density estimator using an Expectation Maximization (EM) algorithm that optimizes the parameters of the mixture distribution. This EM algorithm operates on the missing data directly and is not related to the usual parametric EM algorithms that use implicit imputations of the missing values, such as in \cite{EM0, EM2}.
\end{enumerate}


\section{Consistency of the Sieve MLE under MAR}\label{sec_contractionresults}


We introduce the main result in this section, beginning with the adapted 
quantities it requires. First, the population value of a likelihood 
criterion is a Kullback--Leibler divergence. Since our MLE maximizes a criterion 
built from the marginals $p^{(m)}$, the relevant divergence is an adapted one, 
as in \cite{Bayes}:
\begin{dfn}[KL relative to $\Pjoint$]\label{definition_KL0}
We define for any $P \in \mathcal{P}$,
\[
\KLMAR(P_0 \| P):=\E_{(X,M)\sim \Pjoint}\left[ \log \left( \frac{p_0^{(M)}(X^{(M)})}{p^{(M)}(X^{(M)})} \right) \right].
\]
$\KLMAR$ is referred to as the Kullback-Leibler (KL) divergence relative to $\Pjoint$.
\end{dfn}



Minimizing the adapted divergence recovers $P_0$, the distribution of the complete data $X$, even though $\KLMAR$ is computed from the observed marginals alone. This is what makes $\eqref{eq:ourmle}$ a legitimate target:
\begin{restatable}[Proposition 4.2 \cite{Bayes}]{prop}{KLloss}\label{informationloss_KL}
Under Assumptions \ref{asm_true_MDM}, \ref{asm_no_empty_MDM}, for any $P \in \mathcal{P}$, it holds that
    \begin{align*}
     0 \leq   \KLMAR(P_0\| P) \leq  \KL(P_0\| P).
    \end{align*}
    Moreover, $\KL(P_0\| P) > 0$ implies $\KLMAR(P_0\| P)>0$ under the additional Assumption \ref{asm_positive_MDM}.
    In this case, we thus have
\[
\widetilde{\textnormal{KL}}(P_0 \| P) = 0 \quad \text{if and only if} \quad P = P_0 .
\]
\end{restatable}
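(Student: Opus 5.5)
The approach is to write $\KL(P_0\|P)$ as $\KLMAR(P_0\|P)$ plus a nonnegative remainder, via the chain rule for KL applied pattern by pattern. First, I use the joint density $(x,m)\mapsto p_0(x)\PMmx$ and MAR (Assumption \ref{asm_true_MDM}) to condition on $M=m$. For each $m$ with $\Prob(M=m)>0$, the conditional density of $X$ given $M=m$ is $p_0(x)\PMmxm/\Prob(M=m)$. By MAR, the missing block therefore has conditional density $p_0^{(\bar m)\mid(m)}(x^{(\bar m)}\mid x^{(m)}) = p_0(x)/p_0^{(m)}(x^{(m)})$, and this does not depend on $m$ beyond the split. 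Next, for every $x$ I decompose
\[
\log\frac{p_0(x)}{p(x)} = \log\frac{p_0^{(m)}(x^{(m)})}{p^{(m)}(x^{(m)})} + \log\frac{p_0(x)/p_0^{(m)}(x^{(m)})}{p(x)/p^{(m)}(x^{(m)})},
\]
and take expectations under $\Pjoint$. The left side averages to $\KL(P_0\|P)$, because the marginal of $X$ under $\Pjoint$ is $P_0$. The first term on the right averages to $\KLMAR(P_0\|P)$.

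For the second term, I condition on $(M=m,X^{(m)}=x^{(m)})$. The conditional law of $X^{(\bar m)}$ is then the true conditional $p_0(\cdot\mid x^{(m)})$, again by MAR. The inner expectation is therefore $\KL\bigl(p_0(\cdot\mid x^{(m)})\,\|\,p(\cdot\mid x^{(m)})\bigr)\ge 0$, which gives $\KLMAR\le\KL$. Nonnegativity of $\KLMAR$ follows in the same way. Conditional on $M=m$, the observed block $X^{(m)}$ has density $q_m(x^{(m)})=p_0^{(m)}(x^{(m)})\PMmxm/\Prob(M=m)$. Hence
\[
\KLMAR(P_0\|P)=\sum_m \Prob(M=m)\,\E_{q_m}\!\left[\log\frac{p_0^{(m)}}{p^{(m)}}\right].
\]
This sum is not termwise a KL divergence, because the weights $\PMmxm$ tilt each term. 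So I instead argue through the identity above with the right-hand remainder: $\KLMAR = \KL - R$. That identity alone does not give $\KLMAR\ge0$. I will handle nonnegativity directly with $\log u\le u-1$. Since $-\log(p^{(m)}/p_0^{(m)})\ge 1-p^{(m)}/p_0^{(m)}$,
\[
\KLMAR\ \ge\ 1-\sum_m\int p^{(m)}(x^{(m)})\PMmxm\,dx^{(m)} = 1-\int p(x)\sum_m\PMmx\,dx = 0,
\]
using MAR to rewrite $\PMmxm=\PMmx$ inside the integral over the full $x$. Assumption \ref{asm_no_empty_MDM} ensures that every pattern with positive probability has a nonempty observed block, so the marginals $p^{(m)}$ are well defined. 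Integrability issues, where $\KL=\infty$ or the remainder is infinite, need a short truncation remark: the inequality is trivial if $\KL=\infty$, and otherwise the decomposition holds in $[0,\infty]$.

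For strict positivity under Assumption \ref{asm_positive_MDM}, I isolate the $m=0$ contribution. Equality in $\log u\le u-1$ requires $p^{(m)}=p_0^{(m)}$ almost everywhere on the support of $q_m$ for every $m$ with positive probability. In particular, for $m=0$, where $\PMzerox\ge c_0>0$ almost everywhere on the support of $p_0$, I would get $p=p_0$ $P_0$-a.e. Since both are densities, this gives $p=p_0$ a.e., contradicting $\KL(P_0\|P)>0$. To make this rigorous, the slack $\E[u-1-\log u]\ge0$ is a sum of nonnegative pattern terms. If $\KLMAR=0$, each term vanishes, including the $m=0$ term $\int p_0(x)\PMzerox\,\bigl(u_0-1-\log u_0\bigr)\,dx$ with $u_0=p/p_0$. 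The positivity lower bound then forces $u_0=1$ $P_0$-a.e.

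The main obstacle is that the slack computation only shows $\KLMAR\ge 1-\int p\,\mathbf 1\{p_0>0\}$-type terms, so I must check that mass of $p$ outside the support of $p_0$ is still accounted for. I expect the inequality $\KLMAR\ge \int p(x)\sum_m\PMmx\,\mathbf 1\{p_0^{(m)}>0\}$-slack to handle this. The final "if and only if" then follows by combining the two directions, with $P=P_0$ trivially giving $0$.
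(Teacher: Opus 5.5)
Your proof is correct. The paper gives no proof of its own for this statement (it cites \cite{Bayes}), but its appendix contains the relevant tools, and your route overlaps with them only in part.

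\textbf{Nonnegativity.} This step is the same computation as the $T$-step in Proposition~\ref{lem:kaji2obs}(i). Under MAR both $p_0^{(m)}w_m$ and $p^{(m)}w_m$, with $w_m(x^{(m)})=\Prob(M=m\mid x^{(m)})$, have total mass one summed over patterns, and $t-1-\log t\ge 0$ does the rest.

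\textbf{Strict positivity.} Here the paper's machinery would instead use the affinity bound $\KLMAR(P_1\|P_2)\ge 2\bigl(1-\widetilde\rho(p_1,p_2)\bigr)=\HeMAR^2(p_1,p_2)$, proved in Requirement~(i) of the proof of Theorem~\ref{thm:8mar}. It would combine this with Proposition~\ref{prop:sandwich} or with the separation property in Proposition~\ref{prop:pseudometric}. That route gives the quantitative bound $\KLMAR(P_0\|P)\ge c_0\He^2(p_0,p)$. It also needs no support bookkeeping, because $\sqrt{p_0^{(m)}p^{(m)}}$ vanishes wherever $p_0^{(m)}$ does. Your isolation of the $m=\mathbf 0$ slack term is more elementary and uses only $\PMzerox>0$ a.e., in line with the remark after the statement, but it yields no rate.

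\textbf{Upper bound.} This is not reproved in the paper. Your chain-rule argument is sound: it rests on the fact that under MAR the law of $X^{(\bar m)}$ given $(M=m,X^{(m)}=x^{(m)})$ is $p_0(\cdot\mid x^{(m)})$.

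\textbf{A shortcut.} You could shorten both inequalities. The individual pattern terms are tilted, but the whole sum is a genuine divergence: $\KLMAR(P_0\|P)=\KL(Q_0\|Q)$, where $Q_0,Q$ are the laws of $(M,X^{(M)})$ with densities $p_0^{(m)}w_m$ and $p^{(m)}w_m$. Nonnegativity is then Gibbs' inequality. The bound $\KLMAR\le\KL$ is data processing under $(X,M)\mapsto(M,X^{(M)})$, since $\KL(\Pjoint\|\Pthetajoint)=\KL(P_0\|P)$ when the mechanism is shared.

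\textbf{Your ``main obstacle'' is not one.} With $u_m=p^{(m)}/p_0^{(m)}$ on $\{p_0^{(m)}>0\}$,
\[
\KLMAR(P_0\|P)=\sum_m\int_{\{p_0^{(m)}>0\}}p_0^{(m)}w_m\,(u_m-1-\log u_m)\,dx^{(m)}+\Bigl(1-\sum_m\int_{\{p_0^{(m)}>0\}}p^{(m)}w_m\,dx^{(m)}\Bigr).
\]
Both summands are nonnegative: the first because its integrand is, the second because $\sum_m\int p^{(m)}w_m\,dx^{(m)}=1$. So mass of $p$ off the support of $p_0$ only adds a nonnegative term.

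If $\KLMAR=0$, both summands vanish. The $m=\mathbf 0$ part of the first then forces $p=p_0$ a.e.\ on $\{p_0>0\}$. Since $\int p=1$, this forces $p=0$ a.e.\ off that set, as you already noted.

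The appendix's $T$-step equates the first summand alone with $\KLMAR$. That is exact only when the support term vanishes, so your instinct to track it is the more careful one.
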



For the second part, only the pointwise consequence $\PMzerox > 0$ of Assumption~\ref{asm_positive_MDM} is used; the uniform floor $c_0$ enters later, in the rate of Theorem~\ref{thm:8mar}. The pointwise condition is sharp; if $\PMzerox = 0$ on a set of positive $P_0$-measure, it is possible to construct a $P_1$ with $\KL(P_0\|P_1) > 0$ but $\KLMAR(P_0\|P_1) = 0$, so that $\KLMAR$ no longer identifies $P_0$.

The rate analysis requires a quantity that behaves as a distance and admits bracketing entropy bounds. The adapted Hellinger functional supplies one, generalizing the adapted Hellinger distance of \cite{Bayes} to arbitrary nonnegative integrable functions.

\begin{dfn}[Adapted Hellinger functional]\label{def:Ht}
For nonnegative integrable functions $f,g$ on $\mathcal X$, we define the \emph{adapted Hellinger functional} as
\begin{align}\label{eq:defHt}
  \HeMAR^2(f,g)
  :=\sum_{m}\int\Bigl(\sqrt{f^{(m)}(x^{(m)})}-\sqrt{g^{(m)}(x^{(m)})}\Bigr)^2
  \,\Prob(M=m\mid x^{(m)})\,dx^{(m)}, \;\text{ where }
\end{align}
\[
  f^{(m)}(x^{(m)}):=\int f\bigl(x^{(m)},x^{(\bar m)}\bigr)\,dx^{(\bar m)},
  \qquad
  g^{(m)}(x^{(m)}):=\int g\bigl(x^{(m)},x^{(\bar m)}\bigr)\,dx^{(\bar m)}
\]
are their observed-block marginals, obtained by integrating out the missing block
$x^{(\bar m)}$ (cf.\ \eqref{eq:marg}).
\end{dfn}


$\HeMAR$ is a pseudometric and crucially compatible with the pointwise partial order, allowing us to define bracketing numbers with respect to $\HeMAR$.

\begin{restatable}[Adapted Hellinger property]{prop}{propppseudometric}\label{prop:pseudometric}
For all nonnegative integrable functions $f,g,h,\ell,u$ on $\mathcal X$,
\[
  \HeMAR(f,f)=0,
  \qquad
  \HeMAR(f,g)=\HeMAR(g,f)\ge0,
  \qquad
  \HeMAR(f,h)\le\HeMAR(f,g)+\HeMAR(g,h), \text{ and }
\]
\[
\ell\le f,g\le u\ \Longrightarrow\ \HeMAR(f,g)\le\HeMAR(\ell,u).
\]
In particular $\HeMAR$ is a pseudometric compatible with the pointwise partial
order. Under Assumption~\ref{asm_positive_MDM} it is further a metric with $\HeMAR(f,g)=0$ implying $f=g$ almost everywhere.
 \end{restatable}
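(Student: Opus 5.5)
The plan is to view $\HeMAR$ as an $L^2$ distance after a fixed transformation, which gives the pseudometric axioms immediately. For each pattern $m$, let $\mu_m$ be the measure on the $x^{(m)}$-space with density $\Prob(M=m\mid x^{(m)})$ with respect to Lebesgue measure. Under Assumption~\ref{asm_true_MDM}, $\Prob(M=m\mid x)=\Prob(M=m\mid x^{(m)})$ is a function of $x^{(m)}$ alone, so $\mu_m$ is well defined and nonnegative. Define $T(f):=\bigl(\sqrt{f^{(m)}}\bigr)_{m}$, an element of the Hilbert space $\bigoplus_m L^2(\mu_m)$. It lies there because $\int f^{(m)}\,d\mu_m\le\int f^{(m)}\,dx^{(m)}=\int f<\infty$, using $\Prob(M=m\mid\cdot)\le 1$ and Tonelli for the marginal. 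Then $\HeMAR(f,g)=\|T(f)-T(g)\|$, and $\HeMAR(f,f)=0$, symmetry, nonnegativity and the triangle inequality follow from the Minkowski inequality in this Hilbert space.

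For the order property, suppose $\ell\le f,g\le u$ pointwise. Integrating out $x^{(\bar m)}$ preserves the order, so $\ell^{(m)}\le f^{(m)},g^{(m)}\le u^{(m)}$ for every $m$. Since $\sqrt{\cdot}$ is monotone, both $\sqrt{f^{(m)}}$ and $\sqrt{g^{(m)}}$ lie in $[\sqrt{\ell^{(m)}},\sqrt{u^{(m)}}]$. Hence
\[
\bigl|\sqrt{f^{(m)}}-\sqrt{g^{(m)}}\bigr|\le\sqrt{u^{(m)}}-\sqrt{\ell^{(m)}}
\]
pointwise. Squaring, integrating against $\mu_m$ and summing over $m$ gives $\HeMAR(f,g)\le\HeMAR(\ell,u)$.

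For the metric statement, assume Assumption~\ref{asm_positive_MDM} and $\HeMAR(f,g)=0$. Every summand is nonnegative, so in particular the $m=0$ term vanishes. For $m=0$ there is no missing block, so $f^{(0)}=f$, $g^{(0)}=g$, and the weight is $\Prob(M=0\mid x)\ge c_0>0$ almost everywhere. Therefore
\[
0=\int(\sqrt f-\sqrt g)^2\,\Prob(M=0\mid x)\,dx\ge c_0\int(\sqrt f-\sqrt g)^2\,dx,
\]
which gives $\sqrt f=\sqrt g$ and hence $f=g$ almost everywhere.

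The main obstacle is mostly measure-theoretic bookkeeping. One point is that the marginals $f^{(m)}$ may be $+\infty$ on a null set; integrability of $f$ ensures they are finite almost everywhere, which is all that is needed. The other is that the weight $\Prob(M=m\mid x^{(m)})$ must be a genuine function of the observed block. Both are handled by Tonelli's theorem and Assumption~\ref{asm_true_MDM}. The empty pattern needs no special treatment: either its probability is zero by Assumption~\ref{asm_no_empty_MDM}, or it contributes the term $(\sqrt{\int f}-\sqrt{\int g})^2$ times a constant, which fits the same framework.
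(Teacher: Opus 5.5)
Your proof is correct and follows essentially the same route as the paper. Both use an isometric embedding into a finite direct sum of $L^2$ spaces for the pseudometric axioms; the paper uses unweighted $L^2(dx^{(m)})$ with $\Phi(f)=\bigl(\sqrt{f^{(m)}\,\Prob(M=m\mid x^{(m)})}\bigr)_m$, which is equivalent to your weighted measures $\mu_m$. Both then use monotonicity of marginalization plus a pointwise bound for order compatibility, and the $m=\mathbf 0$ term with the floor $c_0$ for separation. Your order step, noting that $\sqrt{f^{(m)}}$ and $\sqrt{g^{(m)}}$ lie in the common interval $[\sqrt{\ell^{(m)}},\sqrt{u^{(m)}}]$, is a more direct form of the paper's argument that $(\sqrt s-\sqrt t)^2$ is maximized at a corner of the square $[\ell^{(m)},u^{(m)}]^2$.
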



$\HeMAR$ relates to full-data Hellinger distance $\He$ in the following way.
\begin{prop}[\citet{Bayes}, Proposition B.1]\label{prop:sandwich}
        Assume Assumption \ref{asm_true_MDM} holds true with $\Prob(M=0 \mid x) \geq c_0 > 0$. Then for any $p_1, p_2 \in \mathcal{P}$
    \begin{align*}
    c_0 \He^2(p_1, p_2)\leq \HeMAR^2(p_1, p_2)\leq \He^2(p_1, p_2).
\end{align*}
\end{prop}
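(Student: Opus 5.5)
The plan is to prove the two inequalities separately, both by working pattern by pattern in the definition \eqref{eq:defHt} of $\HeMAR^2(p_1,p_2)$. Write $\pi_m(x^{(m)}):=\Prob(M=m\mid x^{(m)})$; by Assumption~\ref{asm_true_MDM}, $\pi_m(x^{(m)})=\Prob(M=m\mid X=x)$ for almost every $x$. Then
\[
\HeMAR^2(p_1,p_2)=\sum_m\int\Bigl(\sqrt{p_1^{(m)}}-\sqrt{p_2^{(m)}}\Bigr)^2\pi_m\,dx^{(m)} .
\]

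\textbf{Lower bound.} I will simply discard all patterns except $m=0$. Every summand is nonnegative. For $m=0$ the observed marginal is the full density, so $p_i^{(0)}=p_i$, and $\pi_0(x)=\Prob(M=0\mid x)\ge c_0$. This gives
\[
\HeMAR^2\ge\int(\sqrt{p_1}-\sqrt{p_2})^2\pi_0\,dx\ge c_0\He^2(p_1,p_2).
\]

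\textbf{Upper bound.} The key step is a per-pattern comparison between the marginal Hellinger integrand and the full one. For fixed $m$ and $x^{(m)}$, I will show
\[
\Bigl(\sqrt{p_1^{(m)}(x^{(m)})}-\sqrt{p_2^{(m)}(x^{(m)})}\Bigr)^2\le\int\Bigl(\sqrt{p_1(x)}-\sqrt{p_2(x)}\Bigr)^2dx^{(\bar m)} .
\]
Expanding both sides, this reduces to
\[
\int\sqrt{p_1p_2}\,dx^{(\bar m)}\le\sqrt{p_1^{(m)}p_2^{(m)}},
\]
which is exactly the Cauchy--Schwarz inequality in the missing block $x^{(\bar m)}$. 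In words, marginalization cannot increase the Hellinger affinity defect; this is the data-processing inequality at the level of conditional integrals.

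I then multiply by $\pi_m(x^{(m)})$. Since $\pi_m$ depends only on $x^{(m)}$, which is where MAR is used, it can be moved inside the $x^{(\bar m)}$ integral as $\Prob(M=m\mid X=x)$, giving
\[
\int\Bigl(\sqrt{p_1^{(m)}}-\sqrt{p_2^{(m)}}\Bigr)^2\pi_m\,dx^{(m)}\le\int(\sqrt{p_1}-\sqrt{p_2})^2\,\Prob(M=m\mid X=x)\,dx .
\]
Summing over $m$ and using $\sum_m\Prob(M=m\mid X=x)=1$ yields $\HeMAR^2\le\He^2$.

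\textbf{Main obstacle.} The only real subtlety is justifying the move of $\pi_m$ inside the integral. This requires that the MAR identity hold for almost every $x$, including the measurability and Fubini exchange. Some care is also needed with the pattern $m=\mathbbm{1}$, whose marginal is the constant $1$ and whose term is zero anyway, consistent with Assumption~\ref{asm_no_empty_MDM}.
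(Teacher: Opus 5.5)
Your proof is correct and is essentially the paper's argument. The paper cites \citet{Bayes} for this proposition and notes that the cited proof goes through joint observed-data densities and their normalization; it then reproves the upper bound in Step~1 of Lemma~\ref{lem:entropyinherit} exactly as you do, via Cauchy--Schwarz in the missing block $x^{(\bar m)}$, then Tonelli and $\sum_m \Prob(M=m\mid x^{(m)})=1$ under MAR, a route that also covers non-normalized nonnegative functions such as bracket endpoints. Your lower bound, keeping only the $m=\mathbf{0}$ term where $p^{(\mathbf{0})}=p$ and $\Prob(M=\mathbf{0}\mid x)\ge c_0$, is the natural one and needs nothing further.
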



For a radius
$r>0$ we write the localized sieves
\[
  \Pnloc{r}:=\bigl\{p\in\Pnsieve:\He(p_0,p)\le r\bigr\},
  \qquad
  \Pobsloc{r}:=\bigl\{p\in\Pnsieve:\HeMAR(p_0,p)\le r\bigr\},
\]
the complete-data $\He$-ball and the observed-data $\HeMAR$-ball, respectively. We recall the notation of Section \ref{sec_notation}, in particular, for $(X,M)\sim\Pjoint$,
\begin{equation*}
  \E_{(X,M)\sim\Pjoint}\bigl[f^{(M)}(X^{(M)})\bigr]=\Pjoint\tilde f,
  \qquad
  \frac1n\sum_{i=1}^{n} f^{(M_i)}\bigl(X_i^{(M_i)}\bigr)=\Pn\tilde f.
\end{equation*}
 We now apply this construction to likelihood
ratios. For a density $p$ on $\mathcal{X}$ and a fixed pattern
$m\in\{0,1\}^d$, write
\begin{equation}\label{eq:marg}
  p^{(m)}(x^{(m)})\;:=\;\int p\bigl(x^{(m)},x^{(\bar m)}\bigr)\,
  dx^{(\bar m)}
\end{equation}
for the marginal density of the observed block. The full-data and observed-data likelihood ratios are written as
\begin{equation}\label{eq:rho}
  \rho_{p_0,p}(x)\;:=\;\frac{p_0(x)}{p(x)},
  \qquad
  \rho^{(m)}_{p_0,p}(x^{(m)})
  \;:=\;\frac{p^{(m)}_0(x^{(m)})}{p^{(m)}(x^{(m)})}.
\end{equation}
For a Borel function $g:\R\to\R$, the collection
$(g(\rho^{(m)}_{p_0,p}))_m$ yields the ``glued'' form
$g(\tilde\rho_{p_0,p})$; likewise $(p^{(m)})_m$ yields
$\tilde p$. We can accordingly write
$\Pn\,g(\tilde\rho_{p_0,p})$,
$\Pjoint\,g(\tilde\rho_{p_0,p})$,
$\Pn\,g(\tilde p)$,
$\Pjoint\,g(\tilde p)$.



\bigskip    

These preliminaries lead to our main result.

\begin{restatable}[Rate of convergence of the observed-data sieve MLE under MAR]{thm}{mainthmMAR}\label{thm:8mar}
  Let $(M_1,X_1^{(M_1)}),$ $(M_2,X_2^{(M_2)})$ $,\dots$ be i.i.d.\ from the observed-data law
induced by $\Pjoint$ under Assumptions~\textup{\ref{asm_true_MDM}--\ref{asm_positive_MDM}}.
Let $\Pnsieve$ be a sequence of sieves and $\Pobsloc{\delta_n}$ the $\HeMAR$-ball of
radius $\delta_n$ about $p_0$.
Suppose there exist $p_n \in \Pnsieve$, a sequence $\delta_n\ge0$, and a
constant $M_{\mathrm{MAR}}\in[0,\infty)$ such that:
\begin{align}
  \HeMAR(p_0,p_n) &\order \delta_n,
  \tag{4$'$}\label{eq:approx}\\[2pt]
    \E_{(X,M) \sim \Pjoint}\!\left[\Bigl(\log\rho_{ p_0, p_n}^{(M)}(X^{(M)})\Bigr)^2\,\one\{\rho_{ p_0, p_n}^{(M)}(X^{(M)})>4\}\right]
    &\le M_{\mathrm{MAR}}\,\delta_n^2,
  \tag{5$'$}\label{eq:tail}\\[2pt]
  \Jt_{[\,]}\;\!\bigl(\delta_n,\Pnloc{\delta_n/\sqrt{c_0}},\;\He\bigr) &\lesssim \delta_n^2\sqrt{n}.
  \tag{6$'$}\label{eq:entropy}
\end{align}
Then any approximate maximizer $\hat p_n\in\Pnsieve$ of the observed-data
likelihood $p\mapsto\prod_{i=1}^n p^{(M_i)}(X_i^{(M_i)})$, in the sense that
\begin{equation}
  \frac{1}{n} \sum_{i=1}^{n} \log(\hat p_n^{(M_i)}(X_i^{(M_i)}))
  \;\ge\;
  \Pn\log \tilde{p}_n - O_P(\delta_n^2),
  \tag{7$'$}\label{eq:approxmax}
\end{equation}
satisfies
\[
  \HeMAR\bigl(p_0,\hat p_n\bigr)
  \;=\; O_P\;\!\bigl(\delta_n\vee n^{-1/2}\bigr),
\]
and consequently, by Proposition~\textup{\ref{prop:sandwich}},
\[
  \He\bigl(p_0,\hat p_n\bigr)
  \;=\; O_P\;\!\Bigl(\tfrac{1}{\sqrt{c_0}}\bigl(\delta_n\vee n^{-1/2}\bigr)\Bigr).
\]
 \end{restatable}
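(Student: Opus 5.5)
The plan is to transfer the proof of the classical sieve MLE rate theorem (van der Vaart--Wellner Theorem 3.4.12, in the version of \cite{kaji2026hellinger} that replaces bounded likelihood ratios by the tail condition \eqref{eq:tail}) to the glued observed-data model on $\mathcal X\times\{0,1\}^d$. The key observation is that, under Assumption~\ref{asm_true_MDM}, the observed data $(X^{(M)},M)$ has density $p_0^{(m)}(x^{(m)})\,\Prob(M=m\mid x^{(m)})$ with respect to the product of Lebesgue and counting measure. The missingness factor is common to every model $\Pthetajoint$, so it cancels in all likelihood ratios. Two consequences follow. First, the Hellinger distance between the induced observed-data laws is exactly $\HeMAR(p_0,p)$. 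Second, the log-likelihood ratio $\log\tilde\rho_{p_0,p}$ is the glued function built from $\log\rho^{(m)}_{p_0,p}$. Thus the problem becomes a complete-data sieve MLE problem for the family $\{\tilde p\}$ in the metric $\HeMAR$. In that formulation \eqref{eq:approx} is the approximation condition, \eqref{eq:tail} is exactly the Kaji tail condition for $p_n$, and \eqref{eq:approxmax} is the approximate-maximizer condition.

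The main step is the usual peeling argument. I would use the concave criterion $m_p=\log\bigl((\tilde p+\tilde p_n)/(2\tilde p_n)\bigr)$, or the Kaji variant that handles ratios above $4$ through \eqref{eq:tail}. On the shell $\{p\in\Pnsieve: 2^{j}\delta_n<\HeMAR(p_0,p)\le 2^{j+1}\delta_n\}$, the drift satisfies $\Pjoint(m_p-m_{p_0})\lesssim -\HeMAR^2(p_0,p)+O(\delta_n^2)$. This follows from $\log x\le 2(\sqrt x-1)$ together with the triangle inequality and \eqref{eq:approx}. I would then control the empirical process $\sqrt n(\Pn-\Pjoint)$ uniformly over each shell by a maximal inequality with bracketing (Lemma 3.4.3 type), which requires the bracketing integral of the criterion class in the Bernstein ``norm''. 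That norm is dominated by $\HeMAR$ brackets of $\tilde p$. Summing the shell probabilities then gives $\HeMAR(p_0,\hat p_n)=O_P(\delta_n\vee n^{-1/2})$, provided $J_{[\,]}(\delta,\Pobsloc{\delta},\HeMAR)\lesssim\delta^2\sqrt n$ holds with $\delta^{-2}J$ decreasing. The final claim about $\He$ then follows directly from the lower bound in Proposition~\ref{prop:sandwich}.

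The remaining and main obstacle is to convert the entropy condition \eqref{eq:entropy}, which is stated for complete-data $\He$-balls, into this observed-data condition. I would proceed in two steps. First, by Proposition~\ref{prop:sandwich}, $\Pobsloc{\delta}\subset\Pnloc{\delta/\sqrt{c_0}}$. Second, any $\He$-bracket $[\ell,u]$ of nonnegative integrable functions yields a $\HeMAR$-bracket of no larger size, because marginalization is monotone and by Proposition~\ref{prop:pseudometric}. The delicate point is the second step. I would show $\HeMAR^2(\ell,u)\le\He^2(\ell,u)$ for nonnegative, not necessarily normalized, functions by reproducing the proof of the upper bound in Proposition~\ref{prop:sandwich}. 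That proof rests on the convexity inequality $\bigl(\sqrt{\int u}-\sqrt{\int \ell}\bigr)^2\le\int(\sqrt u-\sqrt\ell)^2$ applied fibrewise, together with $\sum_m\Prob(M=m\mid x^{(m)})\le1$ after MAR reweighting. Combining the two steps gives $N_{[\,]}(\varepsilon,\Pobsloc{\delta_n},\HeMAR)\le N_{[\,]}(\varepsilon,\Pnloc{\delta_n/\sqrt{c_0}},\He)$, so \eqref{eq:entropy} supplies the required bound at $\delta=\delta_n$. Extending it to $\delta\ge\delta_n$ should follow by monotonicity and concavity of $J$, as noted in Section~\ref{sec_notation}. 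I would also double-check that the Kaji criterion class's brackets are indeed controlled by these density brackets, including the truncation at ratio $4$.
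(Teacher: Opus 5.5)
The gap is in the choice of criterion and, correspondingly, in where \eqref{eq:tail} is used.

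\textbf{Why your criterion fails.} The function you commit to, $m_p=\log\bigl((\tilde p+\tilde p_n)/(2\tilde p_n)\bigr)$, is the device of van der Vaart--Wellner's Theorem~3.4.12. Both steps you attach to it silently require $\sup_{m,x}\rho^{(m)}_{p_0,p_n}<\infty$, which is exactly what this theorem is designed to avoid.

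\emph{Drift (repairable).} The inequality $\log x\le 2(\sqrt x-1)$ gives $\Pjoint m_p\le 2\,\Pjoint\bigl(\sqrt{\bar p^{(M)}/p_n^{(M)}}-1\bigr)$ with $\bar p=(p+p_n)/2$. Splitting this into $-\HeMAR^2(\bar p,p_n)$ plus a cross term of order $\HeMAR(\bar p,p_n)\HeMAR(p_0,p_n)$ requires Cauchy--Schwarz against $\sup\sqrt{\rho^{(m)}_{p_0,p_n}}$. Without that bound the right-hand side can be $+\infty$. You can fix this by writing $\Pjoint m_p=-\KLMAR\bigl(P_0\|\tfrac{P+P_n}{2}\bigr)+\KLMAR(P_0\|P_n)$; the last term is $\lesssim\delta_n^2$ by Proposition~\ref{lem:kaji2obs} and \eqref{eq:tail}.

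\emph{Empirical process (not repairable).} The Bernstein norm $\|m_p\|_{\Pjoint,B}$ is infinite as soon as
$\Pjoint e^{m_p}=\tfrac12+\tfrac12\sum_m\int p_0^{(m)}p^{(m)}\bigl(p_n^{(m)}\bigr)^{-1}\Prob(M=m\mid x^{(m)})\,dx^{(m)}$
is infinite. This is a $\chi^2$-type functional of $p_0/p_n$ that neither \eqref{eq:approx} nor \eqref{eq:tail} controls; the latter only bounds $(\log\rho)^2$ on $\{\rho>4\}$. So your claim that the Bernstein norm of the criterion class is dominated by $\HeMAR$-brackets of $\tilde p$ is false under $\Pjoint$. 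It holds under $P_n$, or with a ratio bound, but not here, so the Bernstein maximal inequality cannot be applied.

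\textbf{The missing idea: anchor at the truth.} Use $\tildem{p_0}{p}=\log\bigl((p_0^{(m)}+p^{(m)})/(2p_0^{(m)})\bigr)$ instead. Then $\Pjoint e^{\tildem{p_0}{p}}=1$ and $e^{-\tildem{p_0}{p}}\le 2$. This is what makes Proposition~\ref{prop:bn} work: the Bernstein norm is at most $3\HeMAR(p_0,p)$, and Bernstein brackets come from $\He$-brackets via Lemma~\ref{lem:shift} and your fibrewise contraction. The drift is $-\KLMAR\bigl(P_0\|\tfrac{P_0+P}{2}\bigr)\le-(1-\tfrac{1}{\sqrt2})^2\HeMAR^2(p_0,p)$ and needs no ratio control at all.

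\textbf{The price: a centering step.} Because $p_0\notin\Pnsieve$, \eqref{eq:approxmax} no longer yields the basic inequality for free. Concavity only gives $\Pn\tildem{p_0}{\hat p_n}\ge-\tfrac12\Pn\log\tilde\rho_{p_0,p_n}-O_P(\delta_n^2)$. You must then show $-\tfrac12\Pn\log\tilde\rho_{p_0,p_n}\ge\Pn\tildem{p_0}{p_n}-O_P(\delta_n^2)$.
\begin{itemize}
\item The mean part needs $\KLMAR(P_0\|P_n)\le(3+4\sqrt{M_{\mathrm{MAR}}})\delta_n^2$, from Proposition~\ref{lem:kaji2obs}(i) and (iii) with $k=1$, $k'=2$.
\item The fluctuation part needs $\Vt_2(P_0\|P_n)\le(8+M_{\mathrm{MAR}})\delta_n^2$, from Proposition~\ref{lem:kaji2obs}(ii), plus Chebyshev.
\end{itemize}
This centering step is the only place \eqref{eq:tail} enters, and it is absent from your plan. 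There is also no truncation at ratio $4$ in the criterion itself; the threshold appears only in those pointwise KL and second-moment inequalities.

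\textbf{What is correct.} Your entropy transfer is right and coincides with Lemma~\ref{lem:entropyinherit}. Your exact reduction to the observed-data model would essentially let you invoke Kaji's Theorem~8 directly, leaving only that transfer to prove. What breaks is re-running the peeling argument with the $p_n$-anchored criterion.
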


\noindent \textit{Proof Strategy:}
We adapt the proof of \cite[Theorem 8]{kaji2026hellinger} to the MAR setting, where the relevant divergences are sums over the $2^d$ missingness patterns rather than expectations under a single measure. This requires extending several of its key inequalities. In particular, we show that
{\small
\begin{align*}
   \KLMAR(P_0 \| P_n)&\leq 3 \HeMAR^2(p_0,p_n)+ \E_{(X,M) \sim \Pjoint}\!\left[\log\rho_{ p_0, p_n}^{(M)}(X^{(M)})\,\one\{\rho_{ p_0, p_n}^{(M)}(X^{(M)})>4\}\right]\\
   \E_{(X,M)\sim \Pjoint}\left[ \log \left( \frac{p_0^{(M)}(X^{(M)})}{p_n^{(M)}(X^{(M)})} \right)^2 \right]&\leq C \HeMAR^2(p_0,p_n)+ \E_{(X,M) \sim \Pjoint}\!\left[\Bigl(\log\rho_{ p_0, p_n}^{(M)}(X^{(M)})\Bigr)^2\,\one\{\rho_{ p_0, p_n}^{(M)}(X^{(M)})>4\}\right],
\end{align*}
}
in Lemma \ref{lem:kaji2obs}, relating $\KLMAR$ and the second moment of the observed-data log-likelihood ratio to $\HeMAR$. This lets us use \eqref{eq:approx} and \eqref{eq:tail} in key steps of the proof. Moreover, \eqref{eq:entropy} concerns brackets in $\HeMAR$, while only $\He$ is tractable on concrete sieves. This requires establishing several properties of bracketing numbers under $\HeMAR$, in particular, that $\Jt_{[\,]}\!\bigl(\delta,\Pobsloc{\delta},\HeMAR\bigr)
  \;\le\;
  \Jt_{[\,]}\!\bigl(\delta,\Pnloc{\delta/\sqrt{c_0}},\He\bigr)$, in Lemma \ref{lem:entropyinherit}. These properties then allow us to follow the same proof logic as in \cite[Theorem 8]{kaji2026hellinger}, by checking the conditions of the general \cite[Theorem 3.4.1]{vandervaart2023weak}.
The full argument is given in Appendix~\ref{sec_proofs}.\\

\begin{remark}[What MAR changes, and what it does not]\label{rem:main}
\
\begin{enumerate}[label=\textup{(\roman*)},leftmargin=2.4em]

\item \textit{Conditions: Only \eqref{eq:tail} must be verified through $p^{(M)}$.}
Condition \eqref{eq:approx} follows from its complete-data counterpart
$\He(p_0,p_n)\order\delta_n$, since $\HeMAR\le\He$ by
Proposition~\ref{prop:sandwich}; $\HeMAR(p_0,p_n)$ may in fact be of smaller
order than $\He(p_0,p_n)$, in which case \eqref{eq:approx} holds at a faster
rate than this bound certifies. Condition \eqref{eq:entropy} is the usual
bracketing entropy condition, taken in $\He$ rather than in $\HeMAR$; the
global restriction $\Jt_{[\,]}(\delta_n,\Pnsieve,\He)\order\delta_n^{2}\sqrt n$
is a stronger but sufficient condition, since
$\Pnloc{\delta_n/\sqrt{c_0}}\subseteq\Pnsieve$. Condition \eqref{eq:tail}
constrains the observed-block ratio
$\rho^{(m)}_{p_0,p_n}=p_0^{(m)}/p_n^{(m)}$, which is a conditional average of
the complete-data ratio,
\[
  \rho^{(m)}_{p_0,p_n}\bigl(x^{(m)}\bigr)
  =\int \rho_{p_0,p_n}\bigl(x^{(m)},x^{(\bar m)}\bigr)\,
    p_n\bigl(x^{(\bar m)}\mid x^{(m)}\bigr)\,dx^{(\bar m)} .
\]
A uniformly bounded complete-data ratio, as required by \citet[Theorem~3.4.12]{vandervaart2023weak}, together with \eqref{eq:approx} suffices. However, uniform boundedness can be restrictive. 
Consider the Gaussian location family $\{p_\theta = \phi(\cdot-\theta) : \theta \in \R\}$ on $\R$, where $\phi$ is the standard normal density and $\theta$ indexes the location, with $\theta_0$ the true value. The ratio $p_{\theta_0}/p_{\theta}$ equals
$\exp\{(\theta_0-\theta)x+(\theta^{2}-\theta_0^{2})/2\}$, unbounded whenever
$\theta\neq\theta_0$. Alternatively, \eqref{eq:tail} can be verified through a pointwise
majorant of $\rho^{(m)}$, as in the proof of Theorem~\ref{Thm:DensityResult}.

\item \textit{Conclusion: The rate is of complete-data order, and MAR enters only as a constant.}
This arises from the empirical-process step, where the criterion class $\widetilde{\mathcal M}_{n,\delta}$ (built from a transform of $\widetilde{\mathcal P}_{n,\delta}$) is built from a glued family of pattern marginals rather than from a single density. Its entropy is measured in the Bernstein norm, which is dominated by $\HeMAR$
(Proposition~\ref{prop:bn}), and $\HeMAR$-entropy in turn by $\He$-entropy
(Lemma~\ref{lem:entropyinherit}). Assumption~\ref{asm_positive_MDM} turns a rate in the pseudometric $\HeMAR$ into a rate for the full-data density; the factor $1/\sqrt{c_0}$ is the price of that promotion.

\item \textit{Scope: MAR does not narrow the scope of the nonparametric sieve MLE theory.}
Nothing additional is required of the sieve, and nothing beyond the ordinary likelihood construction is required of the estimator. No propensity model has to be fitted. 
On the missingness side, we require MAR itself, together with two minimal conditions, that we never observe nothing (Assumption~\eqref{asm_no_empty_MDM}), and that every unit has at least a fixed chance of being seen in full (Assumption~\eqref{asm_positive_MDM}). No further identification restriction is imposed. Missing patterns may be non-monotone, and no coordinate need be observed throughout. 
This permits application to a wide range of sieves, with the Gaussian mixture studied in Section~\ref{sec_densityestimation} serving as one example.

\end{enumerate}
\end{remark}

\section{Application: A new (ignoring) Density Estimator}\label{sec_densityestimation}


    

In the following, we apply the general rate result to derive a density estimation procedure under MAR based on Gaussian mixtures. Let $\phi(\cdot, \Sigma)$ be the density of the $\Gauss{0}{\Sigma}$-distribution where $\Sigma$ is a positive definite $d \times d$ matrix. Let moreover $I_d$ be the identity matrix in $d$ dimensions and
\[
p_{F, \Sigma}(x)=\int \phi(x-z, \Sigma) d F(z).
\]
We study the following sieve space adapted from \cite[Chapter 9.4]{Fundamentals}:
\begin{align*}
    \Pnsieve=\{p_{F, \Sigma}: F \in \mathcal{F}_{\N, a}, \Sigma \in \mathcal{D}_{\sigma, \delta_n} \}
\end{align*}
with $\delta_n$, $a=a(n)$, $\sigma=\sigma(n)$ positive and $\N=\N(n)$ integer,
\begin{align*}
\mathcal{F}_{\N, a}&=\left\{ \sum_{j=1}^{\N} w_j \delta_{z_j}: z_1, \ldots, z_{\N} \in [-a,a]^d, w \in \Delta_{\N} \right\}\\
    \mathcal{D}_{\sigma, \delta_n}&=\{\Sigma: \sigma^2 \leq \mathrm{eig}_{1}(\Sigma) \leq \mathrm{eig}_{d}(\Sigma) < \sigma^2 (1+ \delta_n^2)^n\},
\end{align*}
where $\delta_z$ denotes the Dirac measure at $z$, $\Delta_{\N}:=\bigl\{w\in[0,1]^{\N}:\sum_{j=1}^{\N}w_j=1\bigr\}$ is the $(\N-1)$-dimensional unit simplex, and $\mathrm{eig}_{1}(\Sigma), \ldots, \mathrm{eig}_{d}(\Sigma)$ denote the eigenvalues of a matrix $\Sigma$ ordered by size. In the following, $\sigma, \delta_n$ will decrease with $n$ at rates specified below, while $a$, $\N$ are intended to increase. We apply Theorem \ref{thm:8mar} with the sieve $\Pnsieve$. Let $t$ be a number with 
\begin{equation}\label{eq_tcondition}
    t > \frac{\beta d + \beta d/\tau + d + \beta}{2 \beta + d},
\end{equation}
$\CIV$ be a large enough constant, and
\begin{align}
\delta_n := C_4 n^{-\beta/(2\beta + d)} (\log(n))^t, \qquad    \N := \frac{n\delta_n^2}{\log(n\delta_n^2)},\qquad
    a := n \delta_n^2, \qquad
    \sigma^{-2} := n\delta_n^2.
    \label{eq:paramrates}
\end{align}



We moreover assume that the true density $p_0$ follows the same Hölder class as in \cite[Chapter 9]{Fundamentals}.

\begin{asm}[Hölder Class]\label{asm_Holder}
Let $k=(k_1, \ldots, k_d) \in \mathbb{N}^d$ be a multi-index of nonnegative integers, write $|k|=\sum_{j=1}^{d} k_j$, and define the derivative operator
\[
D^k :=\frac{\partial ^{|k|} }{\partial x_1^{k_1} \cdots \partial x_d^{k_d}}.
\]
Then $p_0$ has mixed partial derivatives $D^k p_0$ of order up to $|k| \leq \underline{\beta} := \lceil \beta - 1 \rceil$, satisfying for a function $L: \mathbb{R}^d \to [0, \infty)$,
\begin{equation*}
|D^k p_0(x_1 + x_2) - D^k p_0(x_1)| \leq L(x_1) e^{\varkappa \|x_2\|^2} \|x_2\|^{\beta - \underline{\beta}}, \quad |k| = \underline{\beta}, \; x_1, x_2 \in \mathbb{R}^d,
\end{equation*}
\begin{equation*}
P_0 \left[ \left(\frac{L}{p_0}\right)^2 + \left(\frac{|D^k p_0|}{p_0}\right)^{2\beta/|k|} \right] < \infty, \quad 1 \leq |k| \leq \underline{\beta}.
\end{equation*}
Here $\varkappa > 0$ is a fixed constant. Furthermore, $p_0(x) \leq c e^{-b\|x\|^{\tau}}$, for every $\|x\| > R_0$, for positive constants $R_0, b, c, \tau$.
\end{asm}

\noindent

Each $p_{F,\Sigma}\in\mathcal{P}_n$ is indexed by
$\theta=(w,z,\Sigma)\in\Theta_n
:=\Delta_{\N}\times[-a,a]^{d \N}\times\mathcal D_{\sigma,\delta_{n}}$. We
metrize it by
\begin{equation}\label{eq:paramnorm}
\varrho(\theta,\theta')
   :=\lVert w-w' \rVert_{1}
     +\max_{1\le j\le \N}\lVert z_{j}-z_{j}' \rVert_{\infty}
     +\lVert \Sigma-\Sigma' \rVert_{\mathrm{op}},
\end{equation}
where $\max_{1\le j\le \N}\lVert z_{j}-z_{j}'\rVert_{\infty} = \max_{j,k}|z_{j,k} - z^{'}_{j,k}|$. 

This upper bound on the bracketing number then leads to the following density estimation result:

\begin{restatable}[Density Result]{thm}{densityresult}\label{Thm:DensityResult}
Assume that $d \geq 2$ and that Assumptions \ref{asm_true_MDM} -- \ref{asm_positive_MDM} and Assumption \ref{asm_Holder}, with parameters $\beta, \tau$, hold. Then any approximate maximizer $\hat p_n\in\Pnsieve$ of the observed-data
likelihood $p\mapsto\prod_{i=1}^n p^{(M_i)}(X_i^{(M_i)})$, in the sense that
\begin{equation}
  \frac{1}{n} \sum_{i=1}^{n} \log(\hat p_n^{(M_i)}(X_i^{(M_i)}))
  \;\ge\;
  \Pn\log \tilde{p}_n - O_P(n^{-2\beta/(2\beta + d)} (\log(n))^{2t}),
  \tag{7$'$}
\end{equation}
satisfies
\begin{align}
\He(p_0, \hat{p}_n)=O_P\left(\frac{\CIV}{\sqrt{c_0}} n^{-\beta/(2\beta + d)} (\log(n))^t\right). 
\end{align}
\end{restatable}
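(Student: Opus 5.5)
The plan is to apply Theorem~\ref{thm:8mar} with the Gaussian mixture sieve $\Pnsieve$ and the choices in \eqref{eq:paramrates}. The conclusion of Theorem~\ref{Thm:DensityResult} then follows directly, since $\delta_n \ge n^{-1/2}$ and the approximate-maximizer condition there is exactly \eqref{eq:approxmax} with $\delta_n^2 = \CIV^2 n^{-2\beta/(2\beta+d)}(\log n)^{2t}$. So it suffices to exhibit $p_n\in\Pnsieve$ satisfying \eqref{eq:approx} and \eqref{eq:tail}, and to verify the entropy condition \eqref{eq:entropy}.

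\textbf{Approximation \eqref{eq:approx}.} Since $\HeMAR\le\He$ by Proposition~\ref{prop:sandwich}, it is enough to find $p_n=p_{F_n,\Sigma_n}$ with $\He(p_0,p_n)\lesssim\delta_n$. I would follow the construction of \cite[Chapter 9.4]{Fundamentals}, and \cite{Bayes}, under Assumption~\ref{asm_Holder}. First build a modified density $h_\beta$ from $p_0$ and its derivatives so that $\phi_\sigma * h_\beta$ approximates $p_0$ to order $\sigma^\beta$ in Hellinger, using the integrability conditions on $L/p_0$ and $D^kp_0/p_0$. Then truncate to $[-a_\sigma,a_\sigma]^d$ with $a_\sigma\asymp(\log(1/\sigma))^{1/\tau}$, using the exponential tail. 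Finally discretize the mixing measure by moment matching to obtain $\N\asymp\sigma^{-d}(\log(1/\sigma))^{d+d/\tau}$ atoms. I would take $\Sigma_n=\sigma_n^2 I_d$ with $\sigma_n$ a suitable multiple of $\delta_n^{1/\beta}$, which lies in $\mathcal D_{\sigma,\delta_n}$ because $\sigma_n\ge\sigma$. The condition \eqref{eq_tcondition} on $t$ is precisely what makes these $\N$ and $a_\sigma$ fit inside the sieve sizes in \eqref{eq:paramrates}.

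\textbf{Tail condition \eqref{eq:tail}.} This is the MAR-specific part and the one I expect to be the main obstacle. I would bound $\rho^{(m)}_{p_0,p_n}$ pointwise by a majorant. On the one hand, $p_0^{(m)}(x^{(m)})\le c\,e^{-b'\|x^{(m)}\|^\tau}$ for large arguments, obtained by integrating the tail bound, and $p_0^{(m)}$ is bounded. On the other hand, $p_n^{(m)}$ is again a Gaussian mixture with covariance $\sigma_n^2I$ on the observed block and atoms in a bounded box. This gives the lower bound
\[
p_n^{(m)}(x^{(m)})\gtrsim \sigma_n^{-|m|}\exp\bigl(-C\|x^{(m)}\|^2/\sigma_n^2\bigr)\,w_{\min},
\]
where $w_{\min}$ is chosen polynomial in $\sigma_n$ by ensuring a minimal weight near each point of the box, as in the Fundamentals construction. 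Hence on $\{\|x^{(m)}\|\le a_\sigma\}$ we get $\log\rho^{(m)}\lesssim\log(1/\sigma_n)$, while outside this set $\log\rho^{(m)}\lesssim\|x^{(m)}\|^2/\sigma_n^2$. For the first region, I would bound $\Prob_0(\rho^{(m)}>4)$ on the box by Markov's inequality applied to the Hellinger-type bound $\int(\sqrt{p_0^{(m)}}-\sqrt{p_n^{(m)}})^2\lesssim\delta_n^2$. This gives a contribution of order $\delta_n^2\log^2(1/\sigma_n)$, where the logarithmic factor is absorbed into the polylog in $\delta_n$ after enlarging $t$ or $\CIV$. Outside the box, the contribution is controlled by the exponential tail of $P_0^{(m)}$, which is $O(\delta_n^2)$ by the choice of $a_\sigma$. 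Summing over the finitely many patterns and using $\Prob(M=m\mid x)\le1$ gives \eqref{eq:tail}.

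\textbf{Entropy \eqref{eq:entropy}.} I would verify the stronger global condition $\Jt_{[\,]}(\delta_n,\Pnsieve,\He)\lesssim\delta_n^2\sqrt n$, as allowed by Remark~\ref{rem:main}(i). The plan is to bound Hellinger brackets via the parametrization with metric $\varrho$ in \eqref{eq:paramnorm}. A Lipschitz-type bound of $p_\theta$ in $\theta$ (scaled by $\sigma^{-k}$), combined with an integrable envelope (a wider Gaussian mixture), yields brackets. This gives
\[
\log N_{[\,]}(\varepsilon,\Pnsieve,\He)\lesssim \N\log\frac{a\,n}{\sigma\varepsilon}+d^2\log\frac{\sigma^2(1+\delta_n^2)^n}{\sigma^2\varepsilon}.
\]
The covariance term contributes $\lesssim n\delta_n^2$, because $\log(1+\delta_n^2)^n\le n\delta_n^2$; the effect of the diverging upper eigenvalue is handled by working with $\log$-eigenvalues. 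With the choices of $\N$, $a$ and $\sigma$, this entropy is $\lesssim n\delta_n^2$, so $\Jt\lesssim\delta_n\sqrt{n}\,\delta_n$, which is \eqref{eq:entropy} once $\CIV$ is large. Combining the three verified conditions in Theorem~\ref{thm:8mar} gives the stated rate, with the factor $1/\sqrt{c_0}$ inherited from Proposition~\ref{prop:sandwich}.
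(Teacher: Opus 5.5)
\textbf{The gap is in the entropy step.} Your plan for \eqref{eq:approx} and \eqref{eq:tail} follows the paper in substance; your Markov step on $\{\rho^{(m)}>4\}$ is equivalent to the paper's Lemma~\ref{lem:kaji}.

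You aim for the global condition $\Jt_{[\,]}(\delta_n,\Pnsieve,\He)\lesssim\delta_n^2\sqrt n$ via a Lipschitz bound in $\theta$ and a single integrable envelope. On the full sieve, no such envelope has a usable integral.

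\emph{The envelope is forced to be large.} Already the weight direction forces $G(x)\gtrsim\sup_{\Sigma\in\mathcal D_{\sigma,\delta_n},\,\|z\|_\infty\le a}\phi(x-z;\Sigma)$, up to the Lipschitz constant. Take $\Sigma$ elongated along $x-z$, with top eigenvalue $\|x-z\|^2$ and all others $\sigma^2$. This gives $G(x)\gtrsim\sigma^{-(d-1)}\|x-z\|^{-1}$ out to radius $\sigma(1+\delta_n^2)^{n/2}$. Hence $\int G\gtrsim(1+\delta_n^2)^{n(d-1)/2}$, so $\log\int G\asymp n\delta_n^2$ for $d\ge2$. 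A ``wider Gaussian'' envelope $C\phi(\cdot;\Lambda I)$ with $\Lambda=\sigma^2(1+\delta_n^2)^n$ fares no better: it needs $C\ge(\Lambda/\sigma^2)^{d/2}$ already at the origin.

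\emph{Consequence for the entropy.} The squared Hellinger width of each bracket is of order (Lipschitz constant)$\times$(parameter mesh)$\times\int G$. So the mesh must be $\lesssim\eta^2e^{-cn\delta_n^2}$. Covering the simplex $\Delta_{\N}$ at that mesh alone costs $\gtrsim\N\,n\delta_n^2\gg n\delta_n^2$, and \eqref{eq:entropy} fails. Your term $\N\log(an/(\sigma\varepsilon))$ tacitly assumes $\log\int G=O(\log n)$. Passing to log-eigenvalues only fixes the covering number of the covariance factor, not the envelope.

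\textbf{The missing idea is the localization of Lemma~\ref{lem:loc}.} Any $p_{F,\Sigma}$ with $\He(p_0,p_{F,\Sigma})\le 1$ has $\mathrm{eig}_d(\Sigma)\le\bar\sigma^2=512R_\star^2/\pi$. This constant depends on $P_0$ only through a tightness radius. By Corollary~\ref{cor:incl}, the local sieve $\Pnloc{\delta_n/\sqrt{c_0}}$ therefore lies in the capped sieve $\Pnsieve^{\star}$. There the envelope of Lemma~\ref{lem:lip} integrates to $\lesssim\sigma^{-2}(a/\sigma)^d$, and the log-bracketing numbers are $\lesssim(\N+d^2)\log(n/\eta)$. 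This is exactly why the paper verifies the local condition \eqref{eq:entropy} and never brackets the full sieve.

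If you want to keep the global route, you need a different bracket construction:
\begin{itemize}
\item Cover $\mathcal D_{\sigma,\delta_n}$ in operator norm at mesh $\asymp\eta^2\sigma^2$. Its log-size is $\lesssim d^2(n\delta_n^2+\log(1/\eta))$, which is where your second term is right.
\item Within each cell, bracket by the sup and inf of the components over the cell. Measured in the Mahalanobis coordinates of the cell centre, these differ by $O_d(\eta^2)$ in $L_1$.
\end{itemize}
This way no global envelope appears. That argument would avoid Lemma~\ref{lem:loc}, but it is not the one you sketched.

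\textbf{A smaller calibration issue in the tail step.} With $\sigma_n$ a multiple of $\delta_n^{1/\beta}$, your box contribution $\delta_n^2\log^2(1/\sigma_n)$ is not $O(\delta_n^2)$. Neither $\CIV$ nor $t$ can absorb it: $t$ is given, and $\sigma_n$ moves with $\delta_n$ anyway. You must instead calibrate $\sigma_n^\beta\log(1/\sigma_n)\asymp\delta_n$, as the paper does with $\delta_n\asymp\sigma_A^\beta\log(1/\sigma_A)$. The threshold in \eqref{eq_tcondition} is precisely what lets the resulting atom count $\lesssim\sigma_n^{-d}(\log n)^{d+d/\tau}$ fit under $\N$. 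Note also that $\sigma_n\ge\sigma$ requires $d\ge 2$.
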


\noindent\textit{Proof Strategy}
The proof proceeds by checking \eqref{eq:approx}--\eqref{eq:entropy} for $\delta_n$ defined as in \eqref{eq:paramrates}. We adapt ideas used in the Bayesian literature \cite[Chapter 9]{Fundamentals} and \cite{Bayes}, defining a set $B$ that contains densities satisfying \eqref{eq:approx} and \eqref{eq:tail}. We then construct a discrete mixing measure with at most $\N$ atoms in $[-a,a]^d$ to show that the intersection of $B$ and $\Pnsieve$ is not empty, thus verifying that there exists $p_n$ with \eqref{eq:approx} and \eqref{eq:tail}. A major step to achieve this is to show that \eqref{eq:tail} holds for densities in the constructed set $B$.

We then show \eqref{eq:entropy}, which is a particular challenge. \cite[Chapter 9]{Fundamentals} derives the correct rates for the covering number, which bounds bracketing numbers only in the unhelpful direction.
We apply a localization argument, showing that any $p_{F,\Sigma}$ within Hellinger distance one of $p_0$ has its covariance eigenvalues capped by a constant depending on $P_0$ only through a tightness radius (Lemma~\ref{lem:loc}). We therefore find $\mathcal{P}^\star_n$ carrying this cap, with $\Pnloc{\delta_n/\sqrt{c_0}}\subset \mathcal{P}^\star_n \subset \Pnsieve$ eventually. This allows us to bound the bracketing number for $\Pnloc{\delta_n/\sqrt{c_0}}$ by bounding the bracketing number of $\mathcal{P}^\star_n$. The full sieve is never bracketed. The full argument is given in Appendix~\ref{sec_proofs}.\\



\begin{remark}[Minimax optimality and the role of the sieve under MAR]\label{rem:minimax}
\
\begin{enumerate}[label=\textup{(\roman*)},leftmargin=2.4em]

\item \textit{Minimax optimality.}
The rate $n^{-\beta/(2\beta+d)}$ is the minimax optimal rate for estimating a $\beta$-smooth density on $\R^d$ in Hellinger distance \citep{Tsybakov2009}. Since complete data is a special case of MAR (take $\PMzerox=1$, i.e.\ $c_0=1$), the complete-data minimax lower bound applies unchanged to the MAR setting. Our upper bound therefore matches the minimax rate up to the logarithmic factor $(\log n)^t$ and the constant $\CIV/\sqrt{c_0}$.

\item \textit{The logarithmic factor is not an artifact of missingness.}
The factor $(\log n)^t$ in Theorem~\ref{Thm:DensityResult} originates in the
metric entropy of the Gaussian mixture sieve, specifically the
$\N\log(a/\sigma)$ term in the bracketing number bound of
Lemma~\ref{lem:cov2br}, which does not involve the missingness mechanism.
The same $(\log n)^t$ factor appears in the complete-data Bayesian posterior contraction rate for Dirichlet mixtures of normals \citep[Theorem~9.9]{Fundamentals}. 
More generally, sieve MLEs are not known to attain the rate $\varepsilon_n$,
measured in Hellinger distance and defined implicitly by the entropy fixed point
$\log N(\varepsilon_n,\mathcal{F},\He)\;\asymp\;n\varepsilon_n^{2},$
at which the metric entropy of the target class balances the information in $n$
observations. For complete data, \citet[Theorem~6]{wong1995probability} attain
$\varepsilon_n(\log(1/\varepsilon_n))^{1/2}$ instead.


\item \textit{Supersmooth densities.}
If instead of Assumption~\ref{asm_Holder} the true density is itself a normal
location mixture, $p_0(x)=\int\phi(x-z;\Sigma_0)\,dF_0(z)$, with $1-F_0([-z,z]^d)\lesssim e^{-b_0 z^{r_0}}$ for constants $b_0>0$ and $r_0\ge2$, then the same sieve and the same application of Theorem~\ref{thm:8mar} yield the nearly parametric rate
\[
  \He(p_0,\hat p_n)
  \;=\;O_P\!\Bigl(\tfrac{1}{\sqrt{c_0}}\,n^{-1/2}\,(\log n)^{(d+1+1/r_0)/2}\Bigr),
\]
following the same proof structure: moment-matching discretization of $F_0$
\citep[Lemma~9.12]{Fundamentals} replaces the $T_{\beta,\sigma}$ bias-correction
construction, and the covariance $\Sigma$ stays near $\Sigma_0$ rather than
collapsing to zero. As in the $\beta$-smooth case, the only cost of MAR
missingness is the factor $1/\sqrt{c_0}$.

\item \textit{Comparison with kernel density estimation.}
On complete data, our Gaussian mixture sieve achieves almost the minimax rate for all smoothness level $\beta > 0$, whereas the KDE with a Gaussian kernel is limited to $\beta \leq 2$. Thus our frequentist estimator offers a similar rate advantage than the Bayesian Dirichlet mixture of normal prior \citep[Section~9.4]{Fundamentals}. Under MAR the limitation of KDE is more severe.
The multivariate KDE requires evaluating $K_h(x - X_i)$ at the full vector $X_i$, which is unavailable when only $X_i^{(M_i)}$ is observed. Using KDE in this context requires learning conditional densities from other patterns, such as $\{X_i : M_i = 0\}$, necessitating complex identification restrictions outside of MCAR. More fundamentally, KDE, as a plug-in estimator, falls outside the likelihood maximizer framework and offers no way to exploit the ignorability property under MAR.
\end{enumerate}
\end{remark}

In classical maximum likelihood estimation, the ignoring estimator is known to increase the complexity of estimation \citep{EM0}. For instance, even if the data is Gaussian, the ignoring estimator does not allow for closed-form expressions of the mean and covariance matrix, and instead EM algorithms are used that impute the data in each step. To avoid this additional numerical complexity, we constrain the hypothesis class to $\widetilde{\Pnsieve} = \{p_{F,\Sigma} \in \Pnsieve : \Sigma \text{ is diagonal}\}$ and try to find an approximate maximizer $\widetilde{p}_n \in \widetilde{\Pnsieve}$ of the observed-data likelihood $p \mapsto \prod_{i=1}^n p^{(M_i)} (X_i^{(M_i)})$. $\widetilde{p_n}$ cannot be better than the approximate likelihood maximizer $\hat p_n \in \Pnsieve$ since $\widetilde{\Pnsieve} \subset \Pnsieve$. However, this simplification allows for an EM algorithm that targets the ignoring likelihood directly with closed-form solutions at each step, making the approach very efficient. Moreover, the non-diagonal $\Sigma$ is not essential for the proof of Theorem \ref{Thm:DensityResult}, and the same asymptotic result could be shown for $\Pnsieve$ replaced by $\widetilde{\Pnsieve}$.

Numerically, we take two steps to find $\widetilde{p_n}$. First, we consider a fixed number of components $N$, and find a Gaussian Mixture that approximately maximizes the observed data likelihood, using the MAR-EM algorithm in Algorithm \eqref{algo:base}. Optimization is performed for multiple random initializations, reducing the likelihood of converging to a local optimum. We add a small regularization to the covariances during M-step to ensure positivity. Second, we run the first step over a range of $N$s and pick the model that minimizes the BIC criterion as in Algorithm~\eqref{algo:modelselection}. The corresponding Gaussian Mixture density is $\widetilde{p_n}$. In Appendix \ref{sec_AdditionalResults} we also present results that instead use the AIC criterion.

\section{Simulation Study}\label{sec_empirical}


We consider density $p_0$ satisfying Assumption~\ref{asm_Holder}, and conduct simulations to investigate the finite-sample performance of the proposed density estimator under a range of simulated data-generating mechanisms with MAR missing values. The simulations are designed to achieve two objectives: first, to assess the ability of our estimator to address the challenge illustrated in the motivating example in Section~\ref{Sec_Motivating}; second, to evaluate its ability to recover the complete data density in high-dimension settings.

\paragraph{Challenge in the motivating example.} With the proposed density estimator, we can address the challenging problem introduced in the motivating example in Section~\ref{Sec_Motivating}. Figure~\ref{fig:qx1} shows that our method predicts the $0.1$ quantile of $X_1$ with good overall accuracy. The median of the 0.1 quantile estimates across the $B=50$ realizations are close to the true value, although the variability across different realizations is not negligible. Here we let $c_0$, the lower bound of $\Prob(M=\mathbf{0}\mid X=x)$, vary from $0.01$ to $0.1$, which doesn't seem to impact the performance.

\begin{figure}[H]
    \centering
    \includegraphics[width=0.45\linewidth]{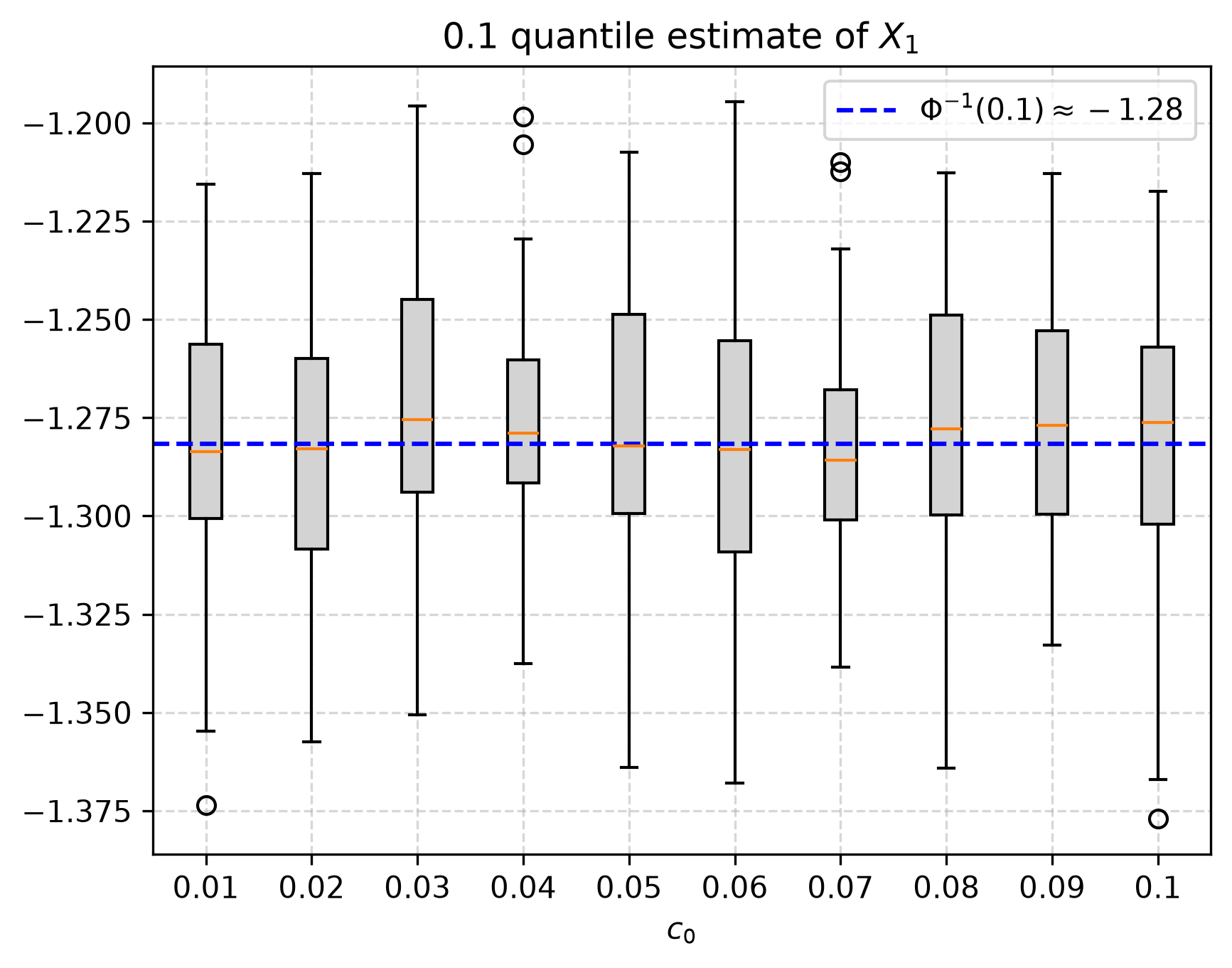}
    \caption{Estimation of 0.1 quantile of $X_1$ for data in the motivating example, with $B=50$ and $n=5000$ observations for each batch. $G(x) = \max (\Phi(x), c_0)$ to ensure $\Prob(M=m_1\mid x) \geq c_0$.}
    \label{fig:qx1}
\end{figure}

\paragraph{Capability in recovering complete-data density for $d=20$.} Here, we compare our method fitted on data \emph{with varying levels of missingness}, against the Gaussian Kernel Density Estimator (KDE) \emph{with complete data}. We directly apply \texttt{scipy.stats.gaussian\_kde} in \texttt{python} with the default settings of bandwidth selection (Scott's rule) and equal sample weights to obtain Gaussian KDE point estimate. 

We simulate train and test (complete) data randomly from Gaussian Mixtures, multivariate Gaussian, and multivariate logistic distributions with dimension $d=20$. The train data are amputated based on the values of observed data using \texttt{mice:ampute(mech="MAR")} function in \texttt{R}, while we keep the test data intact. The train sample size is $n=n_{\text{train}}=2000$. We take $\mathcal{M}=100$ random test samples ($n_{\text{test}}=n_{\text{train}}=2000$) to assess the point estimate obtained from the amputated train data. In the amputation of the train data, we allow for all missingness patterns of the first 19 variables and keep the last variable always observed, $M \in \{(m_1,m_2\dots,m_{20}) \mid m_{20} = 0, m_i \in \{0,1\} \, , i = 1,\dots, 19\}$. The missingness levels of the train data are controlled by the marginal probability mass function of the missingness variable $M$. We consider the following variations: $\Prob(M=\mathbf{0}) \in \{1, 0.9, 0.8, \dots, 0.1\}$, and uniform for the remaining patterns.

As a measure of the finite-sample performance of a generic density estimator $\tilde{p}$, we define its \emph{density score} on the test (complete) data $X^{\text{test}}$ as $\mathbf{s} (\tilde p) = \frac{1}{n_{\text{test}}} \sum_{i=1}^{n_{\text{test}}} \log \tilde p (X_{i}^{\text{test}})$, the empirical average of the estimated score. We also evaluate the \emph{empirical energy distance} of $\mathcal B=100$ batches of test sample pairs, $(\{X_i^{(b)}\}_{i=1}^{n_X}$, $\{Y_i^{(b)}\}_{i=1}^{n_Y})_{b=1}^{\mathcal B}$, randomly generated from the truth and the point estimates:
\begin{equation*}
\widehat{\mathcal{E}}^{(b)}(\tilde{p},p_0) =
\frac{2}{n_X n_Y}
\sum_{i=1}^{n_X}
\sum_{j=1}^{n_Y}
\|X_i^{(b)}-Y_j^{(b)}\|
-
\frac{1}{n_X^2}
\sum_{i=1}^{n_X}
\sum_{j=1}^{n_X}
\|X_i^{(b)}-X_j^{(b)}\|
-
\frac{1}{n_Y^2}
\sum_{i=1}^{n_Y}
\sum_{j=1}^{n_Y}
\|Y_i^{(b)}-Y_j^{(b)}\|,
\end{equation*}
where, for simplicity, we let $n_X = n_Y = n_{\text{train}} = 2000$.

Figure~\ref{fig:GM} presents the boxplots of the \emph{density score} (left), and the \emph{negative energy distance} (right), obtained from data generated from a 20-component Gaussian mixture. Our method, MAR Gaussian Mixture, is able to reliably select the right number of components for all missingness levels, and achieves a higher density score and larger negative energy distance (smaller energy distance) than complete-data Gaussian KDE even with only 10\% chance of fully observing the data. This is not surprising when using complete data, as the KDE is at a clear disadvantage with this moderate number of mixture components. However, interestingly, the increasing missingness does not show much negative impact on its capability to recover the complete-data density. The scale of the energy distance is very small (close to 0) at all missingness levels. Our method efficiently extracts distributional information from the data with MAR missing values for Gaussian Mixture data.

\begin{figure}[H]
    \centering
    \includegraphics[width=0.9\linewidth]{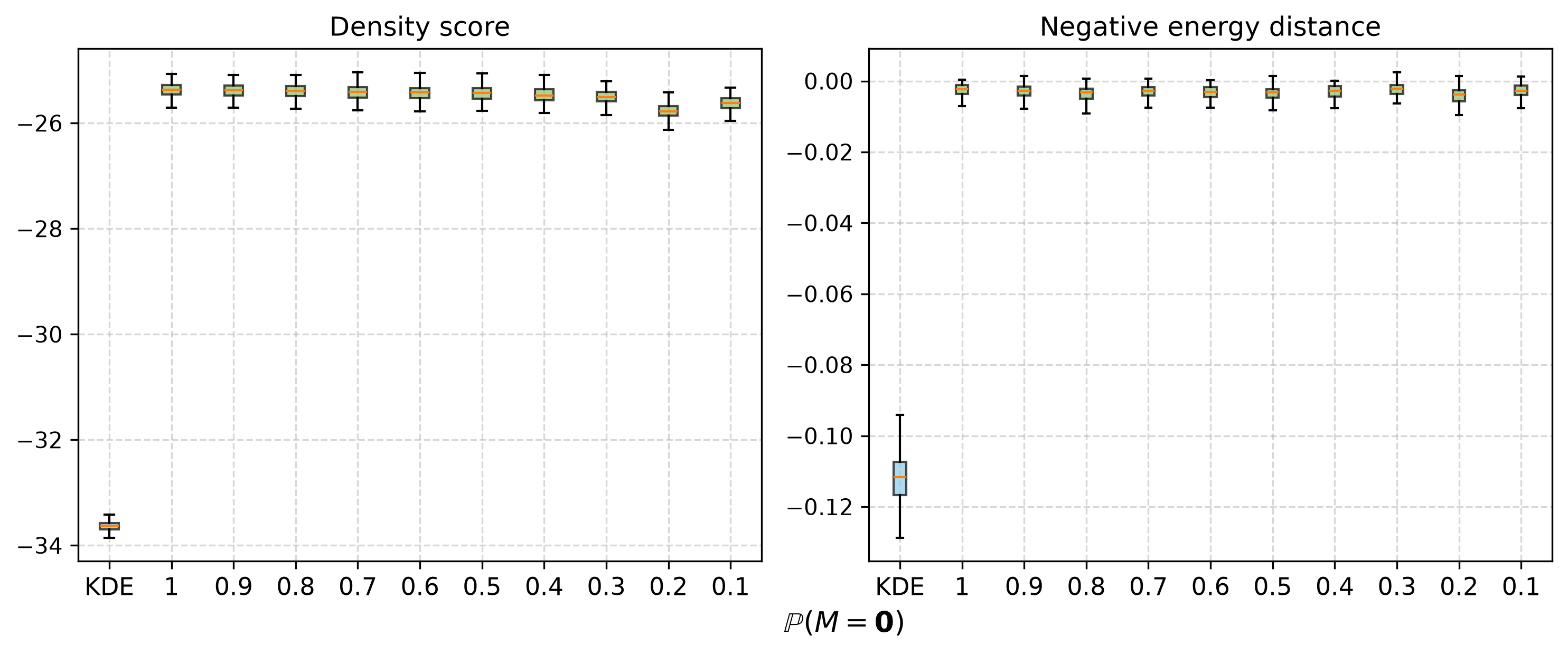}
    \caption{Data generated from a 20-component Gaussian Mixture with randomly simulated locations, diagonal covariances, and mixture measures. The plots show comparison of Gaussian KDE on complete data, and our method over different missingness levels. $\Prob(M=\mathbf{0}) = 1$ stands for complete data. For our method, we use BIC criterion to select from models of varying number of components. Each base model uses different diagonal covariances for all components. The number of random initializations is 30.}
    \label{fig:GM}
\end{figure}

So far, our estimator has had a clear advantage, as the Gaussian mixtures we considered should result in an almost parametric rate, as outlined in Remark \ref{rem:minimax}. Next, we consider Gaussian data with high correlation, and a heavier-tailed distribution compared to the Gaussian to study the robustness of our method. First, we consider data simulated from $\mathcal{N}_{20}(\mathbf{0}, \Sigma)$, where
\begin{equation}
\label{eq_cov}
\Sigma_{ij} = 
\begin{cases}
    0.7^{|i-j|} & i, j \in \{1,\dots,19\} \\
    1 & i=j=20 \\
    0 & \text{otherwise},
\end{cases}
\end{equation}
we examine whether our diagonal Gaussian Mixtures can recover the complete data density from correlated normal data with MAR missingness. We also note that a higher correlation between components tends to worsen the effect of MAR missingness. For instance, in the example in Section \ref{Sec_Motivating}, it can be shown that the distribution shift that makes it difficult to estimate the quantile of $X_1$ relies on the dependence between $X_1$ and $X_2$.

Figure~\ref{fig:Normal} shows that our method outperforms Gaussian KDE at all missingness levels in terms of energy distance. This implies that the point estimate from our method is very close to the truth on a global scale since the energy distance measures the overall closeness between two distributions. On the other hand, the density score plot shows that our method performs slightly worse than Gaussian KDE for complete data, and its performance gradually decreases as the level of missingness increases. Nonetheless, our method still maintains reasonably good performance when the level of missingness is extremely large.



\begin{figure}[H]
    \centering
    \includegraphics[width=0.9\linewidth]{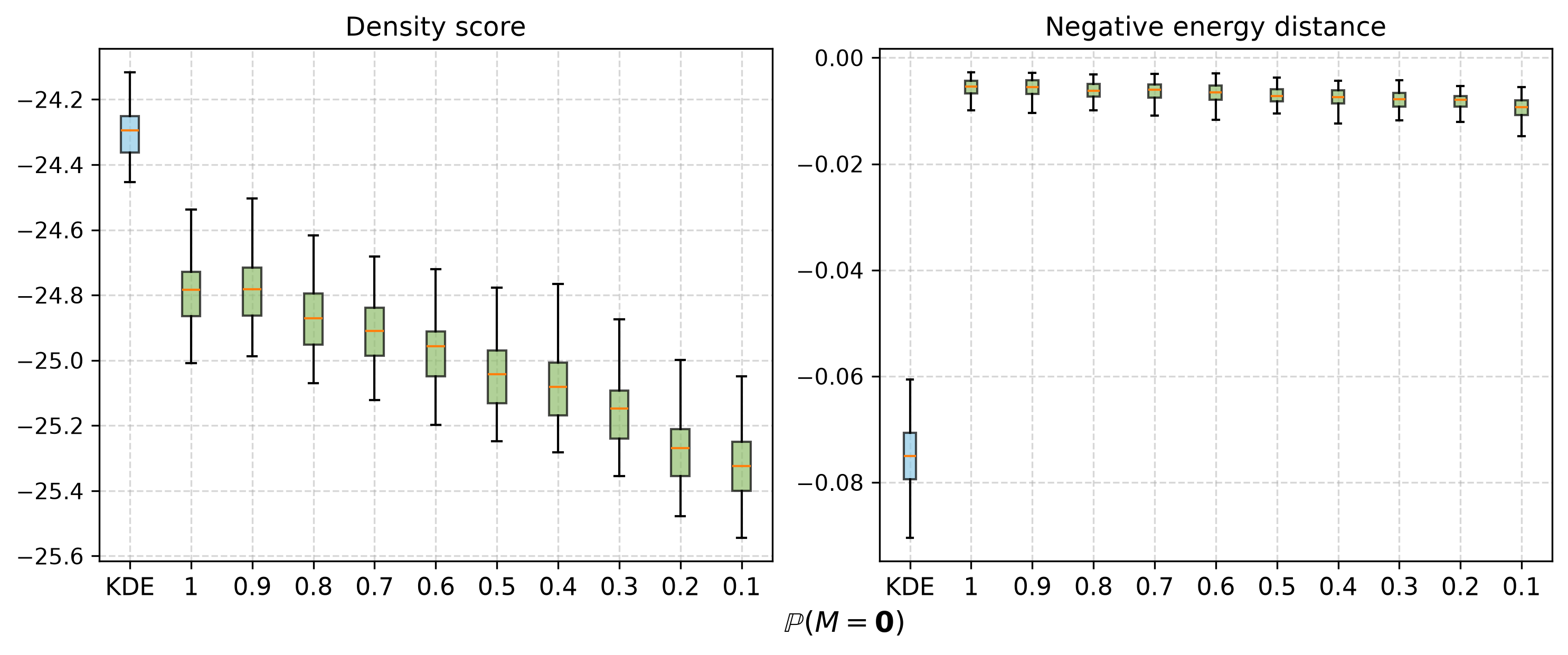}
    \caption{Data sampled from $\mathcal{N}_{20}(\mathbf{0}, \Sigma)$ with $X_1, X_2,\dots,X_{19}$ correlated, and $X_{20}$ independent from others. For our method, we use shared diagonal covariances for all components of base models, and select the model by the BIC criterion. The number of random initializations is 30.}
    \label{fig:Normal}
\end{figure}

We now consider a multivariate logistic distribution with data dependence induced by a Gaussian copula with $\Sigma$ the same as in \eqref{eq_cov}. Compared to the Gaussian this is a heavier-tailed distribution, although still supported on $\R^d$ and with exponential decay in its tails. Due to the heavy tail of the data model, we find that more components are needed. 
As presented in Figure~\ref{fig:Logistic20}, our method outperforms Gaussian KDE with complete-data in terms of energy distance at all missingness levels, suggesting that our method performs well in capturing the overall distributional characteristics. Like in the correlated Gaussian case, we also observe a slightly worse performance of our method and a decreasing trend of the density score with increasing missingness. The density score metric heavily penalizes underestimating tail probabilities since the logarithm amplifies low-density regions. With higher missingness and less information on the tails, the point estimate of our method is more prone to underestimating the tail heaviness, resulting in a lower density score. 

\begin{figure}[H]
    \centering
    \includegraphics[width=0.9\linewidth]{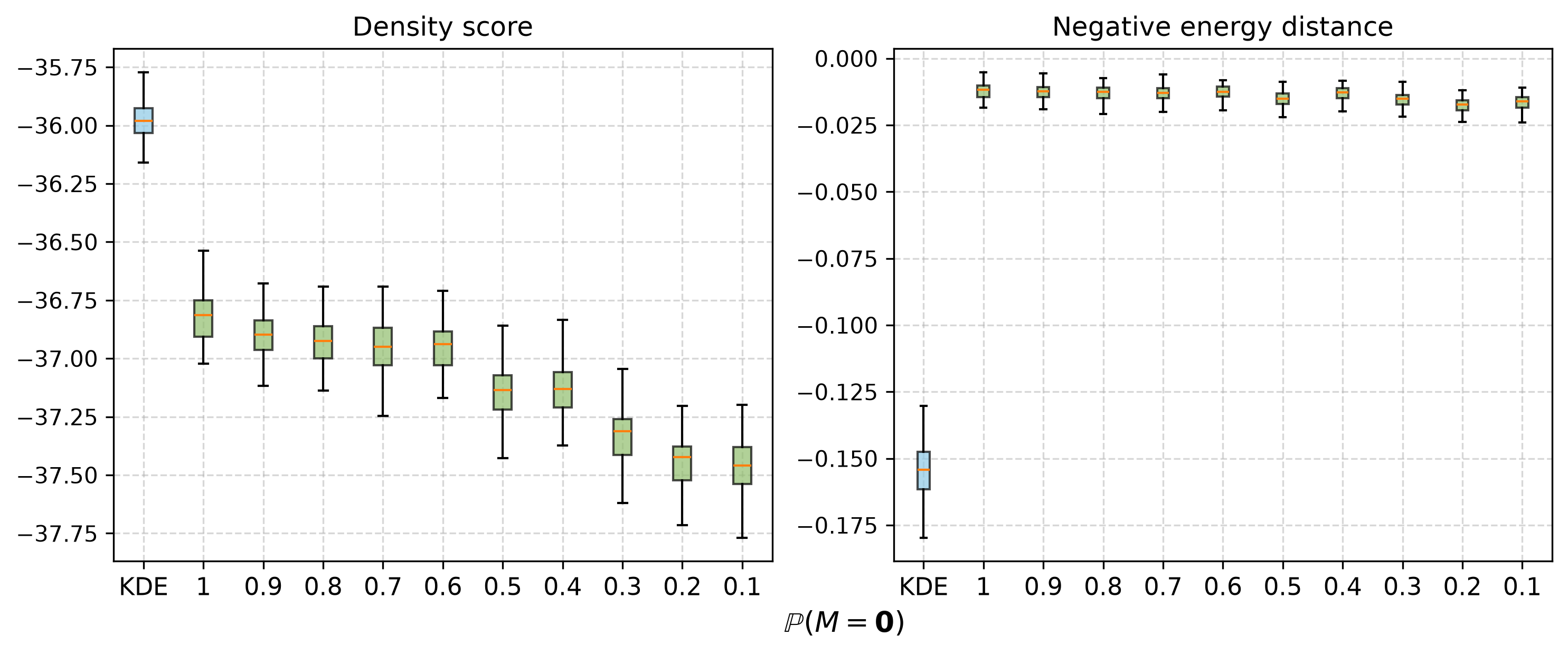}
    \caption{Multivariate logistic data consists of variables with standard logistic marginal distributions $X_j \sim \mathrm{Logistic} (0,1), \, j=1,\dots,20$, and a Gaussian copula induced data dependency. This Gaussian copula is constructed by (1) sample $Z_1, \dots, Z_{20}$ from $\mathcal{N}_{20}(0, \Sigma)$, (2) find the quantiles $U_1, \dots, U_{20}$, (3) invert each $U_j, \, j=1,\dots,20$ by the standard logistic c.d.f. to obtain $X_1,\dots,X_{20}$.  Thus, $X_1,\dots,X_{19}$ are correlated, while $X_{20}$ is independent from others. For our method, we use shared diagonal covariances for all components of base models, and select the model by BIC criterion. The number of random initializations is 30.}
    \label{fig:Logistic20}
\end{figure}

\paragraph{Effect of the regularization criterion.}

We considered a standard BIC criterion to determine the number of components $N$. In Appendix \ref{sec_AdditionalResults} we provide results for the Gaussian and Logistic case when instead the AIC criterion is used. Interestingly, in both cases the AIC criterion consistently chooses more components, which in turn leads to a higher density score for $\Prob(M=\mathbf{0}) \geq 0.6$. However, with decreasing probability of observing the full observation pattern, the number of components chosen by AIC appears to overfit, leading to a quick degradation of performance. On the other hand, BIC tends to be somewhat too conservative when $\Prob(M=\mathbf{0}) \geq 0.6$, having worse performance then when $N$ is chosen by AIC. However, it also deals much better with the loss in performance when more and more missingness is introduced, and its performance decrease is much less steep. This indicates that standard AIC/BIC is too simplistic, and instead a penalty is needed that factors in the information loss of the missing values.




\section{Conclusion}\label{sec_conclusion}
In this paper, we derived a general result to show the consistency of sieve-MLE approaches under MAR. We then used this to derive a new ignoring density estimator that reaches the minimax rate up to logarithmic factors for any smoothness level, with a rate that degrades only by a constant pertaining to the positivity assumption.

As mentioned, we believe that Theorem \ref{thm:8mar} is much more widely applicable. In particular, assuming $\mathcal{P}$ to be the set of all Gaussian mixtures and changing the sieve in Theorem \ref{Thm:DensityResult} slightly, it is likely possible to obtain similar results for an ignoring version of the estimator discussed in \cite{MixtureofGaussians}. For complete data, this approach avoids the non-convex EM approach with BIC criterion we used here, and instead turns the estimation into a (infinite-dimensional) convex problem, without the need to choose the number of components $N$. This would indeed solve an important issue with the current algorithm; the choice of $N$ is not trivial and our initial BIC/AIC approach only presents a first ad-hoc solution. This is exacerbated by the fact that the algorithm can get stuck in local minima. Finally, there might also be direct applications for nonparametric regression problems as in \cite{Bayes}.

\clearpage

\appendix

\section{Density Estimation with Missing Values}

\subsection{Algorithm}

Here we present the MAR Gaussian Mixture EM algorithms used to estimate the complete-data density. It includes two major layers. Algorithm~\eqref{algo:base} is the base model to find a (diagonal) Gaussian Mixture with fixed number of components approximately maximizing the observed-data likelihood. Algorithm~\eqref{algo:modelselection} uses the BIC/AIC criterion to select from a range of base models with different numbers of components.

\begin{algorithm}[H]
    \caption{Density Estimation with MAR Gaussian Mixture (Base Model)}
    \label{algo:base}
    \begin{algorithmic}[1]
        \Require Data $X^{(M)}=\{x_i^{(m_i)}\}_{i=1}^{n} \subset \R^d$, number of components $N$, covariance type \texttt{cov\_type}, maximum iterations $T$, tolerance $\varepsilon$, regularization of covariance $\lambda$, number of initializations $M$
        \Ensure Estimated parameters $\Theta=\{\pi_k,\mu_k,\Sigma_k\}_{k=1}^{N}$, $\sum_{k=1}^N \pi_k = 1$, $\Sigma_k = \operatorname{diag}(\sigma_{k1}^2,\dots, \sigma_{kd}^2)$ $\forall \, k$

        \State $\mathcal S \gets \{\, i \mid m_{i} = \mathbf{0} \in \R^d,  \, i = 1, \dots, n \, \}$ \Comment{Indices of fully observed data points}
        \State $\ell_{\text{best}} \gets -\infty$
        \For{$m = 1,\dots, M$}
            \Comment{Multiple initiations}
            \State \textbf{Initialize}:
            \State $\mathcal I \gets \mathrm{Sample}(\mathcal S, N)$ \Comment{Uniformly without replacement}
            \State $\pi_k \gets \frac1N$, $\mu_k \gets x_{\mathcal I_k}$, $\Sigma_k \gets I_d$ for $k=1,\dots,N$
            \State $\ell_{\text{old}} \gets -\infty$
            \For{$t = 1, \dots, T$}
            
                \State \Call{UpdateResponsibility}{$X^{(M)}, \{\mu_k\}_{k=1}^N, \{\Sigma_k\}_{k=1}^N, \{\pi_k\}_{k=1}^N$} \Comment{E-step, Algorithm~\eqref{algo:resp}}
                \For{$k=1,\dots,N$} \Comment{M-step for mixture weights}
                    \State $\pi_{k} \gets \frac{1}{n} \sum_{i=1}^{n}\gamma_{ik}$ 
                \EndFor
                \For{$i=1,\dots,n$} \Comment{M-step for means}
                    \For{$k=1,\dots,N$}
                        \State $\mu_{kv} \gets \frac{1}{\sum_{i=1}^{n} (1 - m_{iv}) \gamma_{ik}} \sum_{i=1}^n (1 - m_{iv}) \gamma_{ik} x_{iv}$
                    \EndFor
                \EndFor
                
                \Comment{M-step for covariance matrices, Algorithm \eqref{algo:cov}}
                \State \Call{UpdateCovariance}{$X^{(M)}, \{\gamma_{ik}\}_{i=1,\dots,n; \, k=1,\dots,N}, \{\mu_k\}_{k=1}^N$, \texttt{cov\_type}, $\lambda$} $\to \{\Sigma_k\}_{k=1}^N$
                
                \For{$k=1,\dots,N$}
                    \State $\ell_k \gets \sum_{i=1}^n\sum_{v=1}^d (1 - m_{iv}) \phi\left({x_{iv} - \mu_{kv} \over \sigma_{kv}}\right) + \log \pi_k$
                \EndFor
                
                \State $\ell \gets \log\sum_{k=1}^N\exp{(\ell_k)}$ \Comment{Log-likelihood}
                
                \If{$|\ell-\ell_{\mathrm{old}}|<\varepsilon$}
                    \State \textbf{break}
                \EndIf
                \State{$\ell_{\text{old}} \gets \ell$}
            \EndFor
        \If{$\ell > \ell_{\text{best}}$}
            \State $\Theta_{\text{best}} \gets \Theta$
            \State $\ell_{\text{best}} \gets \ell$
        \EndIf   
        \EndFor
        \State \Return{$\Theta_{\text{best}}, \ell_{\text{best}}$}
    \end{algorithmic}
\end{algorithm}

\paragraph{E-step} Under MAR assumptions and distinctness of the missingness and data parameters, for all $k, \ell \in \{1, \dots, N\}$, we have,
\begin{equation}
    \label{eq:Estep}
    \Prob(M_i = m_i \mid x_i^{(m_i)}, \theta_k^{(t)}) = \Prob(M_i = m_i \mid x_i^{(m_i)}, \theta_\ell^{(t)}) = \Prob(M_i = m_i \mid x_i^{(m_i)}).
\end{equation}

The posterior of latent variables $Z_{i,k}, \, i=1,\dots,n, \, k=1,\dots,N$ given the observed data and current parameters $\pi^{(t)}, \, \boldsymbol{\theta}^{(t)}$ is
\[
 p(Z_{ik} \mid x_i^{(m_i)}, \boldsymbol{\theta}^{(t)})=  {\pi^{(t)}_k p_{\theta_k^{(t)}}^{(m_i)} (x_i^{(m_i)}) \Prob(M_i = m_i \mid x_i^{(m_i)}, \theta_k^{(t)}) \over \sum_{\ell=1}^N \pi^{(t)}_\ell p_{\theta_\ell^{(t)}}^{(m_i)} (x_i^{(m_i)}) \Prob(M_i = m_i \mid x_i^{(m_i)}, \theta_\ell^{(t)})}.
\]

Combined with \eqref{eq:Estep}, we obtain the update of the responsibility,
\[
\gamma_{ik}^{(t+1)} = {\pi^{(t)}_k p_{\theta_k^{(t)}}^{(m_i)} (x_i^{(m_i)}) \over \sum_{\ell=1}^N \pi^{(t)}_\ell p_{\theta_\ell^{(t)}}^{(m_i)} (x_i^{(m_i)})},
\]
where $p_\theta^{(m)}$ is the marginal density of the observed data and can be expressed as
\[
p_{\theta_k^{(t)}}^{(m_i)} (x_i^{(m_i)}) = \prod_{v=1}^d (1 - m_{iv})  \phi \left(\frac{x_{iv} - \mu_{kv}^{(t)}}{\sigma_{kv}^{(t)}}\right),
\]
with $\phi$ the p.d.f. of standard normal.

\begin{algorithm}[H]
\caption{UpdateResponsibility($X^{(M)}, \{\mu_k\}_{k=1}^N, \{\Sigma_k\}_{k=1}^N, \{\pi_k\}_{k=1}^N$)}
\label{algo:resp}
\begin{algorithmic}[1]
\For{$i=1,\dots,n$}
\For{$k=1,\dots,N$}
\State $p_{ik} \gets \prod_{v=1}^d (1 - m_{iv}) \phi \left({x_{iv} - \mu_{kv} \over \sigma_{kv}}\right)$
\EndFor
\EndFor
\For{$i=1,\dots,n$}
\For{$k=1,\dots,N$}
\State $\gamma_{ik} \gets \frac{\pi_k p_{ik}}{\sum_{\ell=1}^N \pi_j p_{i\ell}}$
\EndFor
\EndFor
\State \Return $\{\gamma_{ik}\}_{i=1,\dots,n; \, k=1,\dots, N}$
\end{algorithmic}
\end{algorithm}

\paragraph{M-step}
Similarly as in the complete-data mixture model, the mixture weights are updated using the responsibilities
\[
\pi_k^{(t+1)} = \frac{\sum_{i=1}^{n}\gamma_{ik}^{(t+1)}}{n}.
\]

We update the location and covariance parameters by maximizing the expected log-likelihood of the observed data with respect to the posterior of the latent variables.
\begin{equation*}
    \begin{split}
        Q &= 
        \mathbb{E}_{Z \mid X,\Theta^{(t)}}
        \bigl[
        \sum_{i=1}^{n}
        \sum_{k=1}^{N} \one\{Z_{ik}=1\}  \log \left(
        \pi_k \,
        p_{\theta_k}^{(m_i)} (x_i^{(m_i)})\Prob(M_i = m_i \mid x_i^{(m_i)})
        \right)
        \bigr] \\
        &=
        \sum_{i=1}^{n}
        \sum_{k=1}^{N} \mathbb{E}_{Z \mid X,\Theta^{(t)}}
        \bigl[
         \one\{Z_{ik}=1\} \log \left(
        \pi_k \,
        p_{\theta_k}^{(m_i)} (x_i^{(m_i)})\Prob(M_i = m_i \mid x_i^{(m_i)})
        \right)
        \bigr] \\
        &=
        \sum_{i=1}^{n}
        \sum_{k=1}^{N} \mathbb{E}_{Z \mid X,\Theta^{(t)}}
        \bigl[
         \one\{Z_{ik}=1\}
        \bigr]
        \log \left(
        \pi_k \,
        p_{\theta_k}^{(m_i)} (x_i^{(m_i)})\Prob(M_i = m_i \mid x_i^{(m_i)})
        \right)
        \\
        &=
        \sum_{i=1}^{n}
        \sum_{k=1}^{N} \gamma_{ik} 
        \log \left(
        \pi_k \,
        p_{\theta_k}^{(m_i)} (x_i^{(m_i)})\Prob(M_i = m_i \mid x_i^{(m_i)})
        \right)
        \\
        &=
        \sum_{i=1}^{n}
        \sum_{k=1}^{N} \gamma_{ik} 
        \sum_{v=1}^d (1 - m_{iv})
         \left[- \frac{(x_{iv} - \mu_{kv})^2}{2{\sigma_{kv}}^2} - \log \sigma_{kv}\right]
        +
        \sum_{i=1}^{n}\sum_{k=1}^{N}
        \gamma_{ik} 
        \log\pi_k
        \\
        &+ \text{irrelevant term}.
    \end{split}
\end{equation*}

\paragraph{\emph{M-step for means}} The maximum-likelihood update of $\mu_{kv}$ is given by the first order condition 
$\frac{\partial Q}{\partial \mu_{kv}} 
= 
\sum_{i=1}^{n}
\gamma_{ik} (1 - m_{iv})
\frac{(x_{iv} - \mu_{kv})}{\sigma_{kv}^2} = 0$, that
\[
\mu_{kv}^{(t+1)} = \frac{1}{\sum_{i=1}^{n} (1 - m_{iv}) \gamma_{ik}^{(t+1)}} \sum_{i=1}^n (1 - m_{iv}) \gamma_{ik}^{(t+1)} x_{iv}.
\]

This update, letting $m_{ik}=0$ for all $i,k$, coincides with the common complete-data update:
\[
\mu_{kv}^{(t+1)} = \frac{1}{\sum_{i=1}^{n} \gamma_{ik}^{(t+1)}} \sum_{i=1}^n \gamma_{ik}^{(t+1)} x_{iv}.
\]

\paragraph{\emph{M-step for covariance matrices}} We consider four cases: 1) different diagonal covariance for each component; 2) different scaled identity covariance for each component; 3) shared diagonal covariance, and 4) shared scaled identity covariance.

\begin{itemize}
    \item \emph{Different diagonal covariance}. 
    $
    \frac{\partial Q}{\partial \sigma_{kv}}
    = \sum_{i=1}^{n}
    \gamma_{ik} (1 - m_{kv}) \left(\frac{(x_{kv} - \mu_{kv})^2}{\sigma_{kv}^3} - \frac{1}{\sigma_{kv}}\right)
    = 0
    $
    gives the maximum-likelihood updates of $\sigma_{kv}^2$:
    \[
    {\sigma_{kv}^2}^{(t+1)} = \frac{1}{\sum_{i=1}^{n} (1 - m_{iv})\gamma_{ik}^{(t+1)}} \sum_{i=1}^n (1 - m_{iv}) \gamma_{ik}^{(t+1)} \left(x_{iv} - \mu_{kv}^{(t+1)}\right)^2.
    \]
    
    Letting $m_{ik}=0$ for all $i,k$, it reduces to
    \[
    {\sigma_{kv}^2}^{(t+1)} = \frac{1}{\sum_{i=1}^{n} \gamma_{ik}^{(t+1)}} \sum_{i=1}^n \gamma_{ik}^{(t+1)} \left(x_{iv} - \mu_{kv}^{(t+1)}\right)^2.
    \]

    \item \emph{Different scaled identity}. We have
    $
    \sigma_{k v} = \sigma_{k w}$ for all $v, w \in \{1, \dots, d\}$ and for all $k \in \{1, \dots, N\}.
    $
    Let $\sigma_{k\cdot} = \sigma_{k v}$ for all $v$. We can show a reduced first order condition, 
    $
    \frac{\partial Q}{\partial \sigma_{k\cdot}}
    = 
    \sum_{i=1}^{n}
    \gamma_{ik} \sum_{v=1}^d (1 - m_{iv}) \left(\frac{(x_{iv} - \mu_{kv})^2}{\sigma_{k\cdot}^3} - \frac{1}{\sigma_{k\cdot}}\right) = 0
    $
    which gives
    \[
    {\sigma_{k\cdot}^2}^{(t+1)} = \frac{1}{\sum_{i=1}^n \gamma_{ik}^{(t+1)} \sum_{v=1}^d (1 - m_{iv})} \sum_{i=1}^{n}
    \gamma_{ik}^{(t+1)} \sum_{v=1}^d (1 - m_{iv}) (x_{iv} - \mu_{kv}^{(t+1)})^2.
    \]

    Letting $m_{ik} = 0$ for all $i,k$, it reduces to
    \[
    {\sigma_{k\cdot}^2}^{(t+1)} = \frac{1}{\sum_{i=1}^n \gamma_{ik}^{(t+1)}} \sum_{i=1}^{n}
    \gamma_{ik}^{(t+1)} \frac{1}{d} \sum_{v=1}^d (x_{iv} - \mu_{kv}^{(t+1)})^2.
    \]

    \item \emph{Shared diagonal covariance}. We have $\sigma_{kv} = \sigma_{\ell v}$ for all $k, \ell \in \{1, \dots, N\}$ and  for all $v \in \{1,\dots,d\}$.
    Let $\sigma_{\cdot v} = \sigma_{kv}$ for all $k$. We deduce the condition 
    $
    \frac{\partial Q}{\partial \sigma_{\cdot v}} =
    \sum_{i=1}^{n}
    \sum_{k=1}^{N} \gamma_{ik} 
    (1 - m_{iv}) 
    \left[\frac{(x_{iv} - \mu_{kv})^2}{{\sigma_{\cdot v}}^3} - \frac{1}{\sigma_{\cdot v}}\right]
    = 0
    $,
    which gives the update
    \[
    {\sigma_{\cdot v}^2}^{(t+1)} = \frac{1}{\sum_{i=1}^{n}\sum_{k=1}^N \gamma_{ik}^{(t+1)} (1 - m_{iv})}  \sum_{i=1}^{n} \sum_{k=1}^{N}
    (1 - m_{iv})\gamma_{ik}^{(t+1)} (x_{iv} - \mu_{kv}^{(t+1)})^2.
    \]
    
    If $m_{ik} = 0$ for all $i,k$, it reduces to
    \[
    {\sigma_{\cdot v}^2}^{(t+1)} = \frac{1}{n}  \sum_{i=1}^{n} \sum_{k=1}^{N}
    \gamma_{ik}^{(t+1)}(x_{iv} - \mu_{kv}^{(t+1)})^2.
    \]

    \item \emph{Scaled identity}. Let $\Sigma = \sigma^2 I_d$. Similarly, we can show the condition 
    $\frac{\partial Q}{\partial \sigma} =
    \sum_{i=1}^{n}
    \sum_{k=1}^{N} \gamma_{ij} 
    \sum_{v=1}^d (1 - m_{iv})
    \left[\frac{(x_{iv} - \mu_{kv})^2}{{\sigma}^3} - \frac{1}{\sigma}\right] = 0$,
    and obtain the update,
    \[
    {\sigma^2}^{(t+1)} = \frac{1}{\sum_{i=1}^{n}
    \sum_{k=1}^{N} \gamma_{ik}^{(t+1)} 
    \sum_{v=1}^d (1 - m_{iv})} \sum_{i=1}^{n}
    \sum_{k=1}^{N} \gamma_{ik}^{(t+1)} 
    \sum_{v=1}^d (1 - m_{iv}) (x_{iv} - \mu_{kv}^{(t+1)})^2.
    \]

    If $m_{ik} = 0$ for all $i,k$, it reduces to
    \[
    {\sigma^2}^{(t+1)} = \frac{1}{n
    \sum_{k=1}^{N} \gamma_{ik}^{(t+1)}} \sum_{i=1}^{n}
    \sum_{k=1}^{N} \gamma_{ik}^{(t+1)} \frac{1}{d}
    \sum_{v=1}^d (x_{iv} - \mu_{kv}^{(t+1)})^2.
    \]

\end{itemize}

\begin{algorithm}[H]
\caption{UpdateCovariance($X^{(M)}, \{\gamma_{ik}\}_{i=1,\dots,n; \, k=1,\dots,N}, \{\mu_k\}_{k=1}^N$, \texttt{cov\_type}, $\lambda$)}
\label{algo:cov}
\begin{algorithmic}[1]


\For{$i=1,\dots,n$}
    \For{$k=1,\dots,N$}
        \For{$v=1,\dots,d$}
            \State $y_{ikv} \gets \gamma_{ik} (1 - m_{iv})(x_{iv} - \mu_{kv})^2$
            \State $r_{ikv} \gets \gamma_{ik} (1 - m_{iv})$
        \EndFor
    \EndFor
\EndFor
\If{\texttt{cov\_type} = \texttt{diag}} 
\State $\sigma_{kv}^2 \gets \sum_{i=1}^n y_{ikv} / \sum_{i=1}^n r_{ikv}$
\EndIf
\If{\texttt{cov\_type} = \texttt{id}} 
\State $\sigma_{kv}^2 \gets \sum_{i=1}^n \sum_{v=1}^d y_{ikv} / \sum_{i=1}^n \sum_{v=1}^d r_{ikv}$
\EndIf
\If{\texttt{cov\_type} = \texttt{uni\_diag}} 
\State $\sigma_{kv}^2 \gets \sum_{i=1}^n \sum_{k=1}^N y_{ikv} / \sum_{i=1}^n \sum_{k=1}^N r_{ikv}$
\EndIf
\If{\texttt{cov\_type} = \texttt{uni\_id}} 
\State $\sigma_{kv}^2 \gets \sum_{i=1}^n \sum_{k=1}^N \sum_{v=1}^d y_{ikv} / \sum_{i=1}^n \sum_{k=1}^N \sum_{v=1}^d r_{ikv}$
\EndIf
\State $\Sigma_k \gets \operatorname{diag}(\sigma_{k1}, \dots, \sigma_{kd}) + \lambda I_d$ \Comment{Regularize covariance to ensure positivity}

\State \Return $\Sigma_1, \dots, \Sigma_N$
\end{algorithmic}
\end{algorithm}


\begin{algorithm}[H]
\caption{Gaussian Mixture Model Selection}
\label{algo:modelselection}
\begin{algorithmic}[1]
\Require Candidate numbers of components $\mathcal N$, criterion (BIC or AIC), all input of Algorithm~\eqref{algo:base} except for number of components $N$
\Ensure Best Gaussian mixture model $\mathcal M_{\mathrm{best}}$

\State $s_{\mathrm{best}}\gets+\infty$
\State $\mathcal M_{\mathrm{best}}\gets\varnothing$

\ForAll{$N\in\mathcal N$}
    \State Fit a base MAR Gaussian Mixture with $N$ components \Comment{Algorithm \eqref{algo:base}}
    \State Obtain fitted model $\mathcal M_N$ and log-likelihood $\ell_N$
    \If{criterion = BIC}
        \State
        $
        s
        =
        p\log n
        -2\ell_N
        $
    \Else
        \State
        $
        s
        =
        2p
        -2\ell_N
        $
    \EndIf
    
    \If{$s<s_{\mathrm{best}}$}
        \State $s_{\mathrm{best}}\gets s$
        \State $\mathcal M_{\mathrm{best}}\gets\mathcal M_N$
    \EndIf
\EndFor

\State \Return $\mathcal M_{\mathrm{best}}$
\end{algorithmic}
\end{algorithm}

\subsection{Additional Results}\label{sec_AdditionalResults}

In Section \ref{sec_empirical}, we utilized the BIC criterion to choose the number of components. Here we present results for the Gaussian and Logistic distribution when using the AIC criterion instead.

\begin{figure}[H]
    \centering
    \includegraphics[width=0.9\linewidth]{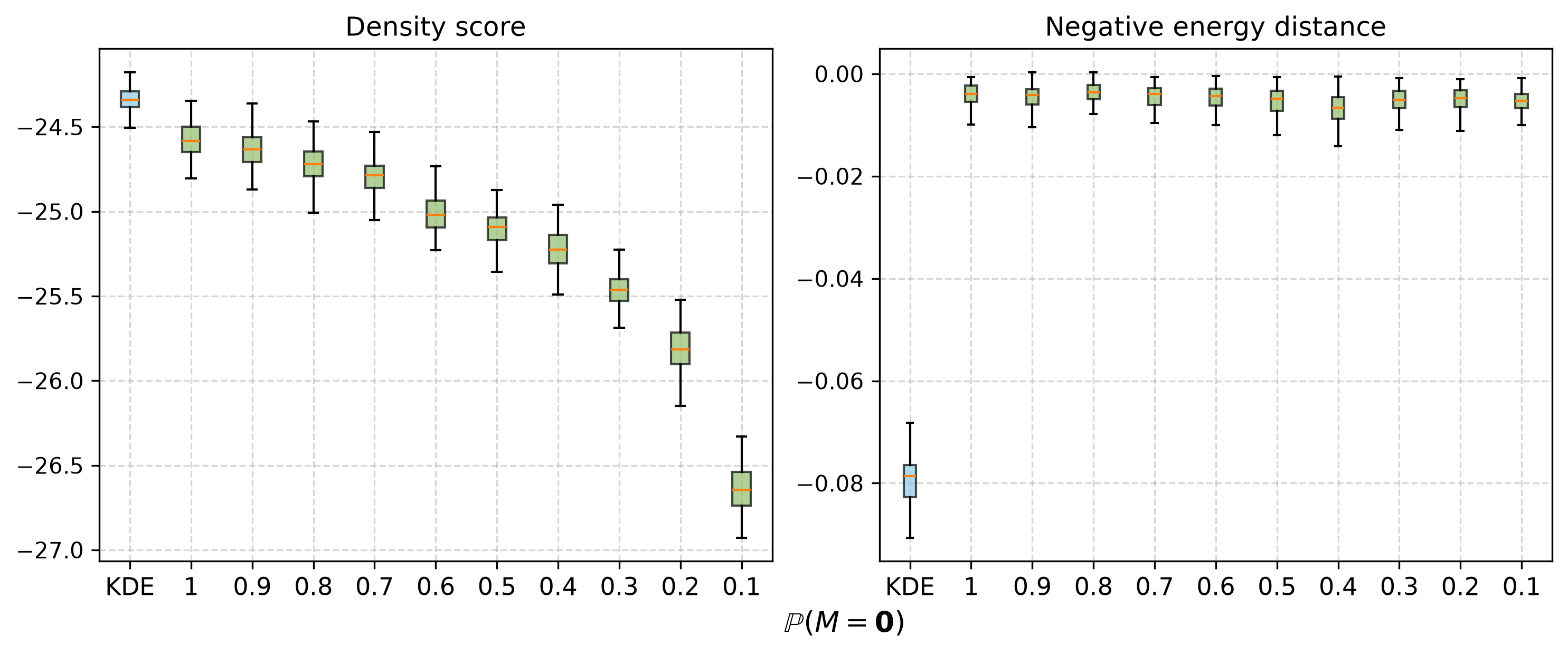}
    \caption{Data sampled from $\mathcal{N}_{20}(\mathbf{0}, \Sigma)$ with $X_1, X_2,\dots,X_{19}$ correlated, and $X_{20}$ independent from others. For our method, we use shared diagonal covariances for all components of base models, and select the model by the AIC criterion. The number of random initializations is 30.}
    \label{fig:Normal_AIC}
\end{figure}

\begin{figure}[H]
    \centering
    \includegraphics[width=0.9\linewidth]{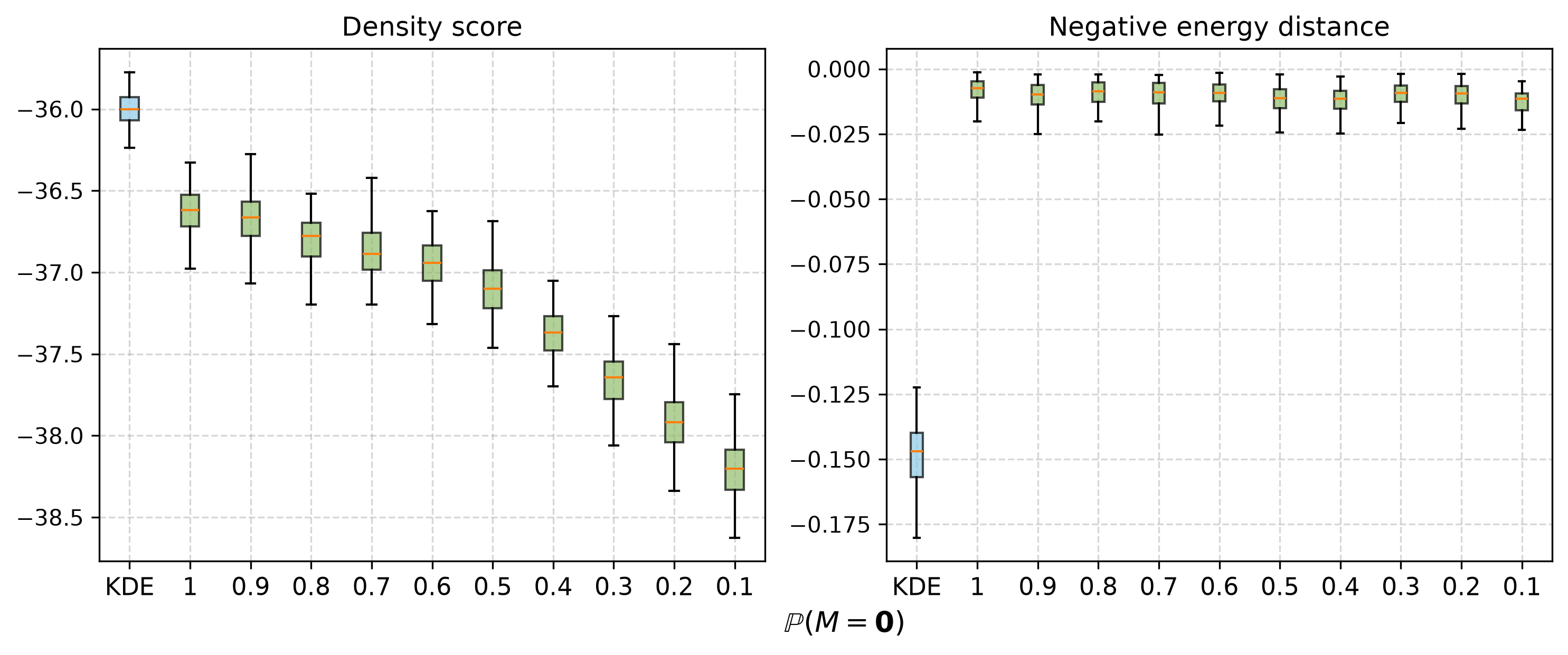}
    \caption{Multivariate logistic data consists of variables with standard logistic marginal distributions $X_j \sim \mathrm{Logistic} (0,1), \, j=1,\dots,20$, and a Gaussian copula induced data dependency. This Gaussian copula is constructed by (1) sample $Z_1, \dots, Z_{20}$ from $\mathcal{N}_{20}(0, \Sigma)$, (2) find the quantiles $U_1, \dots, U_{20}$, (3) invert each $U_j, \, j=1,\dots,20$ by the standard logistic c.d.f. to obtain $X_1,\dots,X_{20}$.  Thus, $X_1,\dots,X_{19}$ are correlated, while $X_{20}$ is independent from others. For our method, we use shared diagonal covariances for all components of base models, and select the model by AIC criterion. The number of random initializations is 30.}
    \label{fig:Logistic20_AIC}
\end{figure}

\section{Proofs} \label{sec_proofs}


Here we restate the results in the main text and give their proofs. In addition to the notation used in the main text, we define:
\begin{itemize}
    \item $\logm(y)= \max(-\log(y), 0)$, for all $ y > 0$. Moreover, we will sometimes write $\logm^2(y)$ to mean $(\logm(y))^2$ 
    \item For $x \in \R^d$, $\|x\|_{\infty}=\max_{j} |x_j| $ and $\|x \|$ the Euclidean norm of $x$ and similarly with $\|x^{(m)}\|_{\infty}$ and $\|x^{(m)}\|$.
    \item For $f: \R^k \to \R$ bounded, $\| f \|_{\infty}=\sup_{x} |f(x)|$.
    \item To avoid an explosion of constants, we sometimes follow \cite{Fundamentals} and write $a \lesssim b$ to mean that $a \leq C b$ for some constant $0 < C < \infty$. We note that in the proof of Theorem \ref{Thm:DensityResult}, $C$ might depend on the dimension $d$. We write $a \asymp b$, if $a \lesssim b$ and $b \lesssim a$.
    \item We denote marginal distributions with a superscript, such as $p_0^{(m)}(x^{(m)})$, but we will write conditional densities without superscript, e.g, $p_0(x^{\bar{m}}\mid x^{(m)})$.
    \item We will abbreviate $\Prob(M=m\mid X=x)$ with $\Prob(M=m\mid x)$ and similarly for $\Prob(M=m\mid X^{(m)}=x^{(m)})$.
\end{itemize}

We start by proving the properties of $\HeMAR$ in Definition \ref{def:Ht}:

\propppseudometric*

\begin{proof}
Write $w_m(x^{(m)}):=\Prob(M=m\mid x^{(m)})\in[0,1]$ and let
$\mathcal{H}:=\bigoplus_m L^2(dx^{(m)})$ with $\lVert (h_m)_m\rVert_{\mathcal{H}}^2=\sum_m\int h_m^2\,dx^{(m)}$, a Hilbert space as a finite direct sum. Define $\Phi(f):=\bigl(\sqrt{f^{(m)}w_m}\bigr)_m$. For $f\ge0$ integrable, $\int f^{(m)}w_m\,dx^{(m)}\le\int f^{(m)}\,dx^{(m)}=\lVert f\rVert_1<\infty$, so $\Phi(f)\in\mathcal{H}$. Since $w_m\ge0$,
\[
  \bigl(\sqrt{f^{(m)}}-\sqrt{g^{(m)}}\bigr)^2w_m
  =\bigl(\sqrt{f^{(m)}w_m}-\sqrt{g^{(m)}w_m}\bigr)^2,
\]
and integrating in $x^{(m)}$ and summing over $m$ gives
\begin{equation}\label{eq:HtIsNorm}
  \HeMAR(f,g)=\bigl\lVert \Phi(f)-\Phi(g)\bigr\rVert_{\mathcal{H}}.
\end{equation}

\emph{Nonnegativity, symmetry, $\HeMAR(f,f)=0$, and the triangle inequality} are the corresponding properties of the norm $\lVert\cdot\rVert_{\mathcal{H}}$ under \eqref{eq:HtIsNorm}.

\emph{Order compatibility.} If $\ell\le f\le u$ pointwise on $\mathcal X$ then
$\ell^{(m)}\le f^{(m)}\le u^{(m)}$ by monotonicity of the integral, and likewise for $g$. Fix $(m,x^{(m)})$ and set $\varphi(s,t):=(\sqrt s-\sqrt t)^2$ on $[\ell^{(m)}(x^{(m)}),u^{(m)}(x^{(m)})]^2$. For fixed $t$, $\partial_s\varphi(s,t)=1-\sqrt{t/s}$, which is negative for $s<t$ and positive for $s>t$, so $\varphi(\cdot,t)$ attains its maximum over the interval at an endpoint; by symmetry of $\varphi$ the same holds in $t$. Hence the maximum over the square is attained at a corner, and since the diagonal corners give $0$,
\begin{equation}\label{eq:cornerbound}
  \bigl(\sqrt{f^{(m)}}-\sqrt{g^{(m)}}\bigr)^2
  \le\bigl(\sqrt{u^{(m)}}-\sqrt{\ell^{(m)}}\bigr)^2 .
\end{equation}
Multiplying \eqref{eq:cornerbound} by $w_m\ge0$, integrating and summing over $m$ gives $\HeMAR^2(f,g)\le\HeMAR^2(\ell,u)$.

\emph{Separation.} Assume $\HeMAR(f,g)=0$. By \eqref{eq:HtIsNorm},
$f^{(m)}w_m=g^{(m)}w_m$ almost everywhere for every $m$. Taking $m=\mathbf 0$,
where $x^{(\mathbf 0)}=x$, $f^{(\mathbf 0)}=f$ and $g^{(\mathbf 0)}=g$,
Assumption~\ref{asm_positive_MDM} gives $w_{\mathbf 0}(x) \ge c_0>0$ for almost every $x$, so $f=g$ almost everywhere.
\end{proof}


\begin{dfn}[The functional $T$]\label{def:T}
For a set of measurable functions $(g^{(m)})_{m}$ of the observed data $x^{(m)}$, set
\begin{equation}\label{eq:Tdef}
  T\left[(g^{(m)})_{m} \right]\;:=\;\sum_m\int g^{(m)}\bigl(x^{(m)}\bigr)\,\mathbb{P}\bigl(M=m\mid x^{(m)}\bigr)\,dx^{(m)} .
\end{equation}
\end{dfn}

We then understand
\begin{align*}
T[(f^{(m)})_{m}]\;=\;\sum_m\int f^{(m)}\bigl(x^{(m)}\bigr)\,\mathbb{P}\bigl(M=m\mid x^{(m)}\bigr)\,dx^{(m)}\\
T[(g(\rho_{p_0, p}^{(m)}))_m]\;=\;\sum_m\int g(\rho_{p_0, p}^{(m)}\bigl(x^{(m)}\bigr))\,\mathbb{P}\bigl(M=m\mid x^{(m)}\bigr)\,dx^{(m)}
\end{align*}

The functional $T$ has the following
properties, used repeatedly below.

\begin{enumerate}[label=\textup{(T\arabic*)},leftmargin=3em]
\item\label{T:poslin} \emph{Positivity and linearity.} $g^{(m)}\ge 0\  \forall m \Rightarrow T[(g^{(m)})_m]\ge0$ (as
  $\mathbb{P}(M=m\mid x^{(m)})\ge0$), and $T[(\alpha g_{1}^{(m)}+\beta g_{2}^{(m)})_m]=\alpha T[(g_{1}^{(m)})_m] +\beta T[g^{(m)}_{2}]$.
  Consequently $T$ preserves pointwise inequalities.
\item\label{T:hellinger} \emph{Adapted Hellinger as a $T$-image.} For any nonnegative integrable functions
  $f_1,f_2$, Definition~\ref{def:Ht} reads
  \begin{equation}\label{eq:HtT}
    \HeMAR^2(f_1,f_2)\;=\;T\!\Bigl[\left(\bigl(\sqrt{f_1^{(m)}}-\sqrt{f_2^{(m)}}\bigr)^2\right)_m\Bigr].
  \end{equation}
\item\label{T:expectation} \emph{Divergences as weighted $T$-images.} Inserting the
  observed-data density $p_0^{(m)}$ as a weight recovers the $\Pjoint$-expectations: for any $g$,
  \begin{equation}\label{eq:TP0}
\E_{(X,M)\sim \Pjoint}\bigl[g(\rho_{p_0, p}^{(M)}(X^{(M)}))\bigr]
    =T\bigl[(g(\rho_{p_0, p}^{(m)})\,p_0^{(m)})_m\bigr],
  \end{equation}
  since under Assumption~\ref{asm_true_MDM} the joint density of $(X,M)$ is
  $p_0(x)\,\mathbb{P}(M=m\mid x^{(m)})$ and the integrand depends on $(X,M)$ only through
  $(M,X^{(M)})$, so the missing block integrates $p_0$ to $p_0^{(m)}$.
  In particular
  \begin{equation}\label{eq:KLVT}
    \KLMAR(P_{0}\|P)=T\bigl[((\log\rho_{p_0, p}^{(m)})\,p_0^{(m)})_m\bigr],\qquad
    \Vt_k(P_{0}\|P)=T\bigl[(\big|\log\rho_{p_0, p}^{(m)}\big|^{k}\,p_0^{(m)})_m\bigr].
  \end{equation}
\item\label{T:commute} \emph{Midpoint commutes with marginalization (exactly).} Because
  $\int(\cdot)dx^{(\bar{m})}$ is linear, 
  \begin{equation}\label{eq:commute}
    \Bigl(\frac{p_0+p}{2}\Bigr)^{(m)}
    =\int\frac{p_0+p}{2}\,dx^{(\bar m)}
    =\frac{p_0^{(m)}+p^{(m)}}{2},
  \end{equation}
  an identity of functions \emph{before} any square root is taken; the marginalization never
  interacts with the root, so no Jensen slack arises when \eqref{eq:HtT} is applied to the
  pair $(p_0,\tfrac{p_0+p}{2})$.
\end{enumerate}

\begin{remark}
Properties \ref{T:expectation} and \ref{T:commute} are where MAR is load-bearing: under MNAR
the weight $\mathbb{P}(M=m\mid x)$ depends on the missing coordinates and cannot be pulled through
$\int(\cdot)\,dx^{(\bar m)}$, so neither \eqref{eq:TP0} nor \eqref{eq:commute} holds. All observed-data divergences in this paper are images of a single positive linear functional.
Collecting them here lets the latter proofs proceed by ``apply $T$ to a pointwise inequality.''
\end{remark}

Finally, for a measurable function $f$ with associated $(f^{(m)})_{m},\tilde{f}$ we use the \emph{Bernstein ``norm''}
\citep[p.~338]{vandervaart2023weak}
\begin{align}\label{eq:Bnorm}
  \|\tilde{f}\|_{\Pjoint,B}^2 \;
  &:=2\E_{(X,M)\sim\Pjoint}[e^{|f^{(M)}(X^{(M)})|} - 1 - |f^{(M)}(X^{(M)})|]\; \nonumber \\
  &=2\,T\bigl[\big((e^{|f^{(m)}|}-1-|f^{(m)}|)\,p_0^{(m)}\big)_m\bigr],
\end{align}
the second equality by \ref{T:expectation}. It is not a norm but satisfies the Riesz property
$|f|\le|g|\Rightarrow\|\tilde{f}\|_{\Pjoint,B}\le\|\tilde{g}\|_{\Pjoint,B}$, and $\|\tilde{f}\|_{\Pjoint, 2}\le\|\tilde{f}\|_{\Pjoint,B}$ since
$x^2\le2(e^{|x|}-1-|x|)$ \citep[p.~132]{vandervaart2023weak}.

\vspace*{0.5cm}
The bounds of \citet{kaji2026hellinger}, Theorems~1--2, are \emph{pointwise} inequalities in
the log-ratio $\log(p_0/p)$, established by real-analysis bounds that hold
for any pair of densities. We adapt that to our setting.

\begin{prop}[Observed-data KL/KV bounds]\label{lem:kaji2obs}
Suppose Assumptions~\ref{asm_true_MDM}--\ref{asm_positive_MDM} hold. 
For any $k>0$, define the tail
\begin{align*}
    \tilde{g}(x) := \sum_{m} \log \rho_{p_0, p}^{(m)}(x^{(m)})\;\one\{\rho_{p_0, p}^{(m)}(x^{(m)})>4\}\\
    \tilde{g}_k(x) := \sum_{m} (\log\rho_{p_0, p}^{(m)}(x^{(m)}))^k\;\one\{\rho_{p_0, p}^{(m)}(x^{(m)})>4\}.
\end{align*}
Then the following inequalities hold:
\begin{enumerate}[label=\textup{(\roman*)},leftmargin=2.4em]
\item $\displaystyle
  \tfrac13\,\Pjoint \tilde{g}
  \le \KLMAR(P_{0}\|P)
  \le 3\,\HeMAR^2(p_0,p) + \Pjoint \tilde{g};$
\item  for every real $k\ge2$,
  $\displaystyle
  \Pjoint\tilde{g}_k
  \le \Vt_k(P_{0}\|P)
  \le 4\bigl([2(\log4)^{k-2}]\vee(k/e)^k\bigr)\HeMAR^2(p_0,p)
     + \Pjoint\tilde{g}_k;$
\item for every real $k'\ge k>0$,
  $\displaystyle
  \Pjoint \tilde{g}_k
  \le 4\,\HeMAR^{\,2(1-\frac{k}{k'})}(p_0,p)\Bigl[\Pjoint\tilde{g}_{k'}\Bigr]^{\frac{k}{k'}}.$
\end{enumerate}
\end{prop}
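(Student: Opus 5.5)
The plan is to reduce everything to pointwise real-analysis inequalities in the scalar ratio $r=\rho^{(m)}_{p_0,p}(x^{(m)})$, multiply by the weight $p^{(m)}$ or $p_0^{(m)}$, and then apply the positive linear functional $T$ of Definition~\ref{def:T}, using \ref{T:poslin}--\ref{T:expectation}. Concretely, by \eqref{eq:TP0} we have $\KLMAR(P_0\|P)=T[(p_0^{(m)}\log r)_m]$, $\Vt_k=T[(p_0^{(m)}|\log r|^k)_m]$ and $\Pjoint\tilde g_k=T[(p_0^{(m)}(\log r)^k\one\{r>4\})_m]$. By \eqref{eq:HtT}, $\HeMAR^2(p_0,p)=T[(p^{(m)}(\sqrt r-1)^2)_m]$. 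So it suffices to reproduce, for each fixed $(m,x^{(m)})$, the scalar inequalities underlying \citet[Theorems~1--2]{kaji2026hellinger}, written as $p_0^{(m)}h(r)\le C\,p^{(m)}(\sqrt r-1)^2+p_0^{(m)}h(r)\one\{r>4\}$. Equivalently, $r\,h(r)\le C(\sqrt r-1)^2$ must hold on $r\le4$, where $p_0^{(m)}=r\,p^{(m)}$.

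For (i), the upper bound splits $p_0^{(m)}\log r$ into the parts on $\{r\le4\}$ and $\{r>4\}$. The part on $\{r>4\}$ is exactly the tail term. On $\{r\le 4\}$, $\log r$ can be negative, so the pointwise bound $r\log r\le 3(\sqrt r-1)^2$ alone is not enough. Instead I would use the standard trick of adding $p^{(m)}-p_0^{(m)}$: the function $r\log r - r+1$ is bounded by $3(\sqrt r-1)^2$ on $[0,4]$, which follows by checking the quotient. The added term is harmless because $T[((p^{(m)}-p_0^{(m)})\one\{r\le4\})_m]$ is handled by writing $p^{(m)}-p_0^{(m)}$ over all $r$. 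Its total $T$-integral is $\int (p-p_0)\,dx=0$ by MAR, since summing $\Prob(M=m\mid x^{(m)})$ over $m$ with the missing block integrated gives $\int(p-p_0)\sum_m\Prob(M=m\mid x)dx=0$. The remaining part on $r>4$ is $\le0$, so it can be dropped; this requires $r-1>0$ there, which holds. This normalization identity is the one genuinely MAR-dependent step; the commuting argument of \ref{T:commute} is used in the same way. For the lower bound $\frac13\Pjoint\tilde g\le\KLMAR$, I would write $\KLMAR=T[(p_0^{(m)}\log r + p^{(m)}-p_0^{(m)})_m]$. The integrand is nonnegative everywhere since $r\log r-r+1\ge0$, and on $r>4$ it is at least $\tfrac13 r\log r$ because $r-1\le \tfrac23 r\log r$ for $r\ge4$.

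For (ii), no normalization trick is needed, since $|\log r|^k\ge0$. The lower bound is immediate by positivity. For the upper bound, on $r\le4$ I would verify $r|\log r|^k\le 4([2(\log4)^{k-2}]\vee(k/e)^k)(\sqrt r-1)^2$. Near $r=1$ this uses $|\log r|\le 2|\sqrt r-1|\cdot\max(1,r^{-1/2})$. For $r\to0$, $r|\log r|^k\le (k/e)^k$ and $(\sqrt r-1)^2$ is bounded below on $r\le1/4$. I expect this constant bookkeeping to be the main obstacle, and I would simply import Kaji's scalar lemma verbatim rather than rederive the constants.

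Part (iii) is Hölder's inequality applied to the functional $T$. First write $\Pjoint\tilde g_k=T[(p_0^{(m)}(\log r)^k\one\{r>4\})_m]$. Then split the integrand as $[p_0^{(m)}\one\{r>4\}]^{1-k/k'}\cdot[p_0^{(m)}(\log r)^{k'}\one\{r>4\}]^{k/k'}$, which is valid because $T$ integrates against the measure $\sum_m \Prob(M=m\mid x^{(m)})dx^{(m)}$. Hölder then gives $\Pjoint\tilde g_k\le (T[(p_0^{(m)}\one\{r>4\})_m])^{1-k/k'}(\Pjoint\tilde g_{k'})^{k/k'}$. Finally, on $r>4$ we have $(\sqrt r-1)^2\ge r/4$, so $p_0^{(m)}\le 4p^{(m)}(\sqrt r-1)^2$. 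Hence $T[(p_0^{(m)}\one\{r>4\})_m]\le4\HeMAR^2$, and the result follows since $4^{1-k/k'}\le4$.
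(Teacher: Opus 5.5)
Your proposal is correct and is essentially the paper's own proof. You use Kaji's scalar inequalities pattern by pattern, pushed through the positive linear functional $T$. In (i) you add the compensator $p^{(m)}-p_0^{(m)}$, which vanishes under $T$ since $T[(p^{(m)})_m]=T[(p_0^{(m)})_m]=1$ by MAR. In (iii) you apply H\"older first and then use $p_0^{(m)}\one\{r>4\}\le 4(\sqrt{p_0^{(m)}}-\sqrt{p^{(m)}})^2$, which is the same argument as the paper's insertion of $\one\{u>4\}\le 4(1-\sqrt t)^{2\alpha}$.

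One caveat concerns your own sketch for (ii) on $1\le r\le 4$. Using $|\log r|\le 2(\sqrt r-1)$ together with $r\le 4$ only yields the constant $16(\log 4)^{k-2}$. To reach the stated $8(\log4)^{k-2}$ you need the sharper bound $\sqrt r\,|\log r|\le 2\sqrt2\,|\sqrt r-1|$ for $r\le 4$. This is the paper's $|\log\sqrt t|\le\sqrt2|\sqrt t-1|$ for $\sqrt t\ge\tfrac12$; importing Kaji's lemma verbatim, as you planned, also works.
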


\begin{proof}[Proof of Proposition~\ref{lem:kaji2obs}]
The argument has two parts: a pointwise step, in which the real-variable inequalities behind
Theorems~1--2 of \citet{kaji2026hellinger} are recorded for a single pattern, and a
$T$-step, in which the functional of Definition~\ref{def:T} is applied. 
Throughout, we abbreviate
$$a:=p_0^{(m)}(x^{(m)}), \  b:=p^{(m)}(x^{(m)}), \  u:=\rho^{(m)}_{p_0,p}(x^{(m)})=a/b, \ t := b/a = 1/u,$$ 
for arbitrary but fixed pattern $m$ and point $x^{(m)}$.
\smallskip
\paragraph{Part (i).}
\emph{Pointwise step, (i)}. 
We show first that at every $(m, x^{(m)})$ with $a > 0$,
\begin{align}
    \tfrac13\,(\log u)\,\one\{u>4\}\,a
  \;\le\;
  a\,(t - 1 - \log t)
  \;\le\;
  3\bigl(\sqrt a-\sqrt b\bigr)^2 + (\log u)\,\one\{u>4\}\,a.
  \label{eq:pwKD}
\end{align}
For the upper bound:
When $u \leq 4$, $\one\{u > 4\} = 0$, and 
$
a\,(t - 1 - \log t)
\;\le\;
3a(\sqrt{t} - 1)^2 = 3(\sqrt{a} - \sqrt{b})^2
= 3\bigl(\sqrt a-\sqrt b\bigr)^2 + (\log u)\,\one\{u>4\}\,a.
$
The first inequality uses $t - 1 - \log t \leq 3(\sqrt{t} - 1)^2$ for all $t \geq 1/4$.
When $u > 4$, $\one\{u > 4\} = 1$, and
$
a\,(t - 1 - \log t) 
\;\le\;
a \log(1/t) 
= a \log u \one\{u > 4\}
\;\le\;
3\bigl(\sqrt a-\sqrt b\bigr)^2 + (\log u)\,\one\{u>4\}\,a.
$
The first inequality follows from $t - 1 < 0$, the second follows from $3\bigl(\sqrt a-\sqrt b\bigr)^2 \geq 0$.
For the lower bound:
Similarly,
when $u \leq 4$, 
$
\tfrac13\,(\log u)\,\one\{u>4\}\,a = 0
  \;\le\;
  a\,(t - 1 - \log t),
$
as $t - 1 - \log t \geq 0$ for all $t > 0$.
When $u > 4$,
$
\tfrac13\,(\log u)\,\one\{u>4\}\,a 
=
\frac{1}{3} a ( - \log t)
<
a (t - 1 - \log t),
$
where the inequality follows from scalar inequality
$\displaystyle \log\frac1x \;<\; 3\bigl(x-1-\log x\bigr)$
for all $0<x<\tfrac14$ (\citet{kaji2026hellinger}, proof of {Theorem 2(i)}).
Combining the upper and lower bounds complete the first step for (i).

\smallskip
\emph{$T$-step, (i)}. Applying $T$ to \eqref{eq:pwKD} preserves both inequalities
(property \ref{T:poslin}).
For the middle part, we have
\begin{align*}
  T\bigl[(p_0^{(m)}\,(1/\rho^{(m)}_{p_0,p} - 1 - \log (1/\rho^{(m)}_{p_0,p})))_m\bigr]
  &= T\bigl[(p^{(m)}-p_0^{(m)})_m\bigr]
     + T\bigl[(p_0^{(m)}\log \rho^{(m)}_{p_0,p})_m\bigr] \\
  &= \bigl(T[(p^{(m)})_m]-T[(p_0^{(m)})_m]\bigr)
     \;+\; \KLMAR(P_0\|P) \\
  &= \KLMAR(P_0\|P),
\end{align*}
where $T\bigl[(\log\rho_{p_0, p}^{(m)}\,p_0^{(m)})_m\bigr] =\KLMAR(P_{0}\|P)$ by \eqref{eq:KLVT},
and $T[(p^{(m)})_m] = T[(p_0^{(m)})_m] = 1$ for any $\theta$.
Further, by \eqref{eq:TP0} and \eqref{eq:HtT}:
\begin{align*}
    T\bigl[(\log\rho_{p_0, p}^{(m)}\one\{\rho_{p_0, p}^{(m)}>4\}\,p_0^{(m)})_m\bigr] &=\Pjoint \tilde{g},\\
  \quad
  T\bigl[\left( 3 (\sqrt{p_0^{(m)}}-\sqrt{p^{(m)}})^2\right)_m\bigr] &=3\,\HeMAR^2(p_0,p).
\end{align*}
Substituting completes the proof for (i).

\smallskip
\paragraph{Part (ii).}
\emph{Pointwise step, (ii)}.
Fix $k\ge2$ and set $C^*_k:=4\bigl([2(\log4)^{k-2}]\vee(k/e)^k\bigr)$. We
show in this step that at every $(m,x^{(m)})$ with $a>0$,
\begin{equation}\label{eq:pwKV}
  (\log u)^k\,\one\{u>4\}\,a
  \;\le\;
  |\log u|^{k}\,a
  \;\le\;
  C^*_k\,(\sqrt a-\sqrt b)^2 + (\log u)^k\,\one\{u>4\}\,a.
\end{equation}

The lower bound is immediate. On $\{u>4\}$ we have $\log u>\log4>0$, so $(\log u)^k=|\log u|^k$ and the indicator only removes mass; on $\{u\le4\}$ the left side is $0\le|\log u|^k a$, trivially true.

To verify the upper bound we first show the scalar inequality
\begin{equation}\label{eq:scalarKV}
  |\log t|^{k}\;\le\;C^*_k\,(\sqrt t-1)^2, \qquad t\ge\tfrac14,\; k\ge2,
\end{equation}
with $C^*_k:=4\bigl([2(\log4)^{k-2}]\vee(k/e)^k\bigr)$.
On $\tfrac14\le t\le4$, 
$|\log t|^k\le(\log4)^{k-2}(\log t)^2\le 8(\log4)^{k-2}(\sqrt t-1)^2$,
where
the first inequality follows from $|\log t| \leq \log 4$ and the second holds if we can verify $|\log \sqrt{t}| \leq \sqrt{2} |\sqrt{t} - 1|$ for $\sqrt{t} \geq 1/2$. 
This is true, as for $\sqrt{t} \geq 1$, $|\log \sqrt{t}| = \log \sqrt{t} \leq \sqrt{t} - 1 \leq \sqrt{2}(\sqrt{t} - 1)$, for $1/2 \leq \sqrt{t} \leq 1$, the function $\log \sqrt{t} + \sqrt{2}(1 - \sqrt{t})$ increases and then decreases, so its minimum is attained at the endpoints, which are both nonnegative. Hence $-\log \sqrt{t} \leq \sqrt{2}(1 - \sqrt{t})$.
On $t>4$, 
$(\log t)^k\le  (k/e)^k t \le 4(k/e)^k(\sqrt t-1)^2$.
taking $C^*_k$ to be the larger constant and combining the two cases proves \eqref{eq:scalarKV}.

Using \eqref{eq:scalarKV} we prove the upper bound of \eqref{eq:pwKV}.
On $\{u\le4\}$, i.e.\ $t\ge\tfrac14$, the tail term vanishes and, by \eqref{eq:scalarKV}
and $(\sqrt t-1)^2a=(\sqrt a-\sqrt b)^2$,
\[
  |\log u|^k a=|\log t|^k a\;\le\;C^*_k(\sqrt t-1)^2a=C^*_k(\sqrt a-\sqrt b)^2 .
\]
On $\{u>4\}$, the tail term equals $(\log u)^ka=|\log u|^k a$, so \eqref{eq:pwKV} holds with the
nonnegative $C^*_k(\sqrt a-\sqrt b)^2$. This proves \eqref{eq:pwKV}.

\emph{T step, (ii).}
Apply $T(\cdot)$ to the collection of functions that appeared in \eqref{eq:pwKV} and from its property (T1):
\[
  %
  T\bigl[(|\log\rho^{(m)}_{p_0,p}|^{k}\,p^{(m)}_0)_m\bigr]
  =\Vt_k(P_0\|P),
\]
\[
  T\bigl[(C^*_k(\sqrt{ p^{(m)}_0}-\sqrt{ p^{(m)}_{\theta}})^2)_m\bigr]=C^*_k\,\HeMAR^2(p_0,p),
  \quad
  \text{and }
  \quad
  T\bigl[((\log \rho^{(m)}_{p_0,p})^k\one\{\rho^{(m)}_{p_0,p}>4\}\, p_0^{(m)})_m\bigr]=\Pjoint\tilde g_k .
\]
Substituting into \eqref{eq:pwKV} yields
\[
  \Pjoint\tilde g_k
  \;\le\;\Vt_k(P_0\|P)
  \;\le\;C^*_k\,\HeMAR^2(p_0,p)+\Pjoint\tilde g_k ,
\]
which completes part~(ii).

\paragraph{Part (iii).}
Assume $k'>k$. If $k'=k$ the claim reads $\Pjoint\tilde g_k\le4\,\Pjoint\tilde g_k$ (as $\HeMAR^0(p_0,p )=1$), which is trivially true. Set
\[
  \alpha:=\frac{k'-k}{k'}\in(0,1),\qquad r:=\frac{k'}{k}>1,\qquad s:=\frac{k'}{k'-k}>1,
  \qquad \tfrac1r+\tfrac1s=1 .
\]
The proof is one application of H\"older's inequality, as 
$T[(\,\cdot\,p^{(m)}_0)_m]$ is expectation under the probability measure $\Pjoint$ (property \ref{T:expectation}) and H\"older applies unchanged. 

First, on $\{u>4\}$, i.e.\ $t<\tfrac14$, one has $1-\sqrt t>\tfrac12$, so
$(1-\sqrt t)^{2\alpha}>4^{-\alpha}\ge\tfrac14$, and
\begin{equation}\label{eq:scalarL1}
  \one\{u>4\}\;\le\;4\,(1-\sqrt t)^{2\alpha}\,\one\{u>4\}.
\end{equation}
Then we multiply two sides of \eqref{eq:scalarL1} by the nonnegative integrand $(\log u)^k\one\{u>4\}\,a$
and apply $T$. Both steps preserve the inequality by \ref{T:poslin}, and the weighted $T$ reads as a $\Pjoint$-expectation \ref{T:expectation}:
\[
\begin{gathered}
  \Pjoint\tilde g_k \le 4\,\E_{(X,M) \sim \Pjoint}[A(X^{(M)})\,B(X^{(M)})], \quad \text{where} \\
  A :=(\log\rho_{p_0, p}^{(M)})^k\,\one\{\rho_{p_0, p}^{(M)}>4\}, 
  \qquad
  B :=\bigg(1-\sqrt{p^{(M)}/ p_0^{(M)}}\bigg)^{2\alpha}\,\one\{\rho_{p_0, p}^{(M)}>4\}.
\end{gathered}
\]
Both factors are nonnegative. On $\{\rho_{p_0, p}^{(M)} > 4\}$ one has $\log \rho_{p_0, p}^{(M)} >\log4>0$ and $1-\sqrt{p^{(M)}/ p_0^{(M)}}>0$, and
the indicator on $B$ keeps the noninteger power $(\cdot)^{2\alpha}$ well-defined. H\"older with
the conjugate pair $(r,s)$ gives $\E_{\Pjoint}[A(X^{(M)})B(X^{(M)})]\le(\E_{\Pjoint}A(X^{(M)})^{r})^{1/r}
(\E_{\Pjoint}B(X^{(M)})^{s})^{1/s}$, and since $kr=k'$, $2\alpha s=2$,
\begin{align*}
  \bigl(\E_{(X,M) \sim\Pjoint}\left[A(X^{(M)})^{r}\right]\bigr)^{1/r} &=\bigl[\Pjoint\tilde g_{k'}\bigr]^{k/k'}, \\
  \bigl(\E_{(X,M) \sim \Pjoint}\left[ B(X^{(M)})^{s} \right]\bigr)^{1/s}
  &= \Bigl(\E_{(X,M) \sim \Pjoint}\!\bigl[(1-\sqrt{p^{(M)}(X^{(M)})/ p_0^{(M)}(X^{(M)})})^2\one\{\rho_{p_0, p}^{(M)}(X^{(M)}) > 4\}\bigr]\Bigr)^{1-k/k'} \\
  &\;\le\; 
  \Bigl(\E_{(X,M) \sim \Pjoint}\!\bigl[(1-\sqrt{p^{(M)}(X^{(M)})/ p_0^{(M)}(X^{(M)})})^2\bigr]\Bigr)^{1-k/k'}\\
  &\;\le\; 
  \Bigl(T\left[((\sqrt{p_0^{(m)}} - \sqrt{p^{(m)}})^2)_m\right]\Bigr)^{1-k/k'}\\
  &=
  \HeMAR^{\,2(1-k/k')}(p_0,p),
\end{align*}
where the first `$\leq$' drops the indicator $\one\{\rho_{p_0,p}^{(M)}>4\}$.
Combining gives
\[
  \Pjoint\tilde g_k
  \;\le\;4\,\HeMAR^{\,2(1-k/k')}(p_0,p)\,\bigl[\Pjoint\tilde g_{k'}\bigr]^{k/k'} ,
\]
which proves part~(iii).

\end{proof}

\vspace*{0.5cm}
We will also need the observed-data form of the Bernstein-norm bound of
\citet{kaji2026hellinger}, Theorem~1. Following \citet{vandervaart2023weak, kaji2026hellinger},
denote $\m{p_0}{q}:=\log\frac{p_0
+q}{2p_0}$ with
\begin{align}
    \mm{p_0}{q}(x^{(m)}):=\log\frac{p_0^{(m)}(x^{(m)})
+q^{(m)}(x^{(m)})}{2p_0^{(m)}(x^{(m)})}
\end{align}
for the Birg\'e--Massart transform of a density $q$.

\begin{prop}[Observed-data Bernstein-norm bound]\label{prop:bn}
For any density $p$,
\[
  \Bigl\|\tildem{p_0}{p}
   \Bigr\|_{\Pjoint,B}
  \;\le\;\sqrt{18}\;\HeMAR\!\;\Bigl(p_0,\tfrac{p_0+p}{2}\Bigr).
\]
Further, for integrable nonnegative functions $q_1\le q_2$,
$\bigl\| \tildem{p_0}{q_2}-\tildem{p_0}{q_1}\bigr\|_{\Pjoint,B}^2\le
4\,\HeMAR^2\bigl(p_0+q_1,\,p_0+q_2\bigr)$.
\end{prop}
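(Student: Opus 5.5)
The plan is to follow the pattern of Proposition~\ref{lem:kaji2obs}. First prove a pointwise scalar inequality for a single pattern $m$ and a single point $x^{(m)}$. Then apply the functional $T$ of Definition~\ref{def:T} with weight $p_0^{(m)}$, using \ref{T:poslin} and the identity \eqref{eq:Bnorm}, which writes the Bernstein norm as $2T[((e^{|f^{(m)}|}-1-|f^{(m)}|)p_0^{(m)})_m]$. Throughout, fix $(m,x^{(m)})$ with $a:=p_0^{(m)}(x^{(m)})>0$; points with $a=0$ carry no $\Pjoint$-mass and can be ignored.

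\emph{First claim.} Write $b:=p^{(m)}(x^{(m)})$ and $x:=\mm{p_0}{p}(x^{(m)})=\log\frac{a+b}{2a}$. Since $b\ge0$, we have $x\ge-\log 2$. The scalar inequality to prove is
\[
  e^{|x|}-1-|x|\;\le\;9\,\bigl(e^{x/2}-1\bigr)^2,\qquad x\ge-\log2 .
\]
I would check it on two ranges.
\begin{itemize}[leftmargin=2.4em]
\item For $x\ge0$: near $0$ the left side is about $x^2/2$ and the right side about $9x^2/4$. For large $x$ the left side grows like $e^x$ and the right side like $9e^x$. A derivative comparison of the difference, or the argument used by \citet{kaji2026hellinger} for their Theorem~1, closes this range.
\item For $x\in[-\log2,0]$: the left side is $e^{-x}-1+x$. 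I would compare the two sides on this compact interval, for instance by checking that their ratio is maximised at the endpoint $x=-\log 2$. There the left side is $\approx0.31$ and the right side $\approx0.77$.
\end{itemize}
Multiplying by $a$ gives $a(e^{x/2}-1)^2=\bigl(\sqrt{(a+b)/2}-\sqrt a\bigr)^2$. Applying $T$ then yields
\[
\|\tildem{p_0}{p}\|_{\Pjoint,B}^2\le 18\,T\Bigl[\bigl((\sqrt{p_0^{(m)}}-\sqrt{(p_0^{(m)}+p^{(m)})/2})^2\bigr)_m\Bigr].
\]
By the exact commutation \ref{T:commute}, $(p_0^{(m)}+p^{(m)})/2=\bigl(\tfrac{p_0+p}{2}\bigr)^{(m)}$, so by \eqref{eq:HtT} the right side equals $18\,\HeMAR^2\bigl(p_0,\tfrac{p_0+p}{2}\bigr)$. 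This is the claimed bound. This is exactly where MAR enters: the marginal of the midpoint is the midpoint of the marginals, so no Jensen loss occurs.

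\emph{Second claim.} Let $q_1\le q_2$. Then $q_1^{(m)}\le q_2^{(m)}$, so
\[
y:=\mm{p_0}{q_2}-\mm{p_0}{q_1}=\log\frac{a+q_2^{(m)}}{a+q_1^{(m)}}\ge0 .
\]
Set $s:=e^{y/2}=\sqrt{(a+q_2^{(m)})/(a+q_1^{(m)})}\ge1$. Then $e^{|y|}-1-|y|=s^2-1-2\log s$, and the scalar inequality $s^2-1-2\log s\le2(s-1)^2$ holds for $s\ge1$. Indeed, $f(s):=s^2-4s+3+2\log s$ satisfies $f(1)=0$ and $f'(s)=2(s-1)^2/s\ge0$. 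Next I use $a\le a+q_1^{(m)}$, which holds because $q_1\ge0$. This gives
\[
2a\,(e^{y}-1-y)\le 4(a+q_1^{(m)})(s-1)^2=4\bigl(\sqrt{a+q_2^{(m)}}-\sqrt{a+q_1^{(m)}}\bigr)^2 .
\]
Linearity of marginalization gives $a+q_i^{(m)}=(p_0+q_i)^{(m)}$. Applying $T$ then gives $4\,\HeMAR^2(p_0+q_1,p_0+q_2)$ by \eqref{eq:HtT}, which extends to nonnegative integrable functions by Definition~\ref{def:Ht}.

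\emph{Main obstacle.} I expect the main work to be the first scalar inequality with the constant $9$, specifically the negative range $x\in[-\log 2,0]$, where $|x|=-x$ and a clean closed-form comparison is less obvious. If a direct monotonicity argument fails there, I would fall back on the explicit bound $e^{|x|}-1-|x|\le \tfrac12x^2e^{|x|}\le x^2$. On this range, $|e^{x/2}-1|\ge c|x|$ with $c=(1-2^{-1/2})/\log 2$, by convexity of $x\mapsto e^{x/2}$. These two bounds together give a constant below $9$, so the claimed constant $\sqrt{18}$ still holds.
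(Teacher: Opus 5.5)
Your proposal is correct and follows essentially the same route as the paper: a pointwise scalar bound at each $(m,x^{(m)})$ with $a=p_0^{(m)}(x^{(m)})>0$, then multiplication by $a$, then application of $T$ through the Bernstein-norm identity, and finally the exact commutation of the midpoint with marginalization. Your second part (the inequality $s^2-1-2\log s\le 2(s-1)^2$ and the step $a\le a+q_1^{(m)}$) matches the paper's in substance. The only difference is how the first scalar inequality is checked: the paper applies $x-1-\log x\le 3(\sqrt{x}-1)^2$ for $x\ge\tfrac14$ at $r=\tfrac{a+b}{2a}$ or at $1/r$ (using $r\ge\tfrac13$), whereas your derivative check on $x\ge0$ and your Taylor-plus-convexity fallback on $[-\log 2,0]$ (which gives a constant of about $5.6<9$) work equally well.
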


\begin{proof}
First, we show that at each $(m,x^{(m)})$ with $a :=p_0^{(m)}(x^{(m)})$, $b := p^{(m)}(x^{(m)})$,
\begin{align}
  e^{\lvert l\rvert}-1-\lvert l\rvert
  \le 9\bigl(\sqrt a-\sqrt{\tfrac{a+b}2}\bigr)^2/a,
  \qquad l=\log\tfrac{a+b}{2a}.
  \label{eq:pw1}
\end{align}
We will use the inequality 
\begin{align}
  x-1-\log x\;\le\;3\bigl(\sqrt x-1\bigr)^2 
  \quad \text{ for all $x\ge\tfrac14$}
  \label{eq:kajiineq}
\end{align}
that was used in \citet{kaji2026hellinger}, proof of Theorem~1(i).
Observe that if we set 
$r:=\frac{a+b}{2a}\ge\frac12 \; (\text{since } b\ge0)$, $e^{l}=r$, and $|l|=|\log r|$,
with $\bigl(\sqrt a-\sqrt{\tfrac{a+b}2}\bigr)^2/a
  =\Bigl(1-\sqrt{\tfrac{a+b}{2a}}\Bigr)^2
  =(1-\sqrt r)^2$,
\eqref{eq:pw1} can be equivalently written as
\[
  \psi(r):=e^{|\log r|}-1-|\log r|\;\le\;9(1-\sqrt r)^2,
  \qquad r\ge\tfrac12.
\]
When $r\ge1$, $|\log r|=\log r$ and $e^{|\log r|}=r$, so by \eqref{eq:kajiineq}, 
$\psi(r)=r-1-\log r\;\le\;3(\sqrt r-1)^2\;\le\;9(\sqrt r-1)^2$.
When $\tfrac12\le r<1$, $|\log r|=\log (1/r)$ and $e^{|\log r|}=\tfrac1r$, so
$
\psi(r)=(1/r)-1-\log (1/r)
\;\le\;
3(\sqrt{1/r}-1)^2
\;\le\;
9(\sqrt{1/r} - 1)^2 / (1/r) 
\;=\; 
9(1 - \sqrt{r})^2,
$ 
where the first `$\leq$' is by \eqref{eq:kajiineq} again with $1/r\in(1,2]\subset[\tfrac14,\infty)$ and the second `$\leq$' follows from $r \geq 1/2 > 1/3$. This completes the proof for \eqref{eq:pw1}.

Hence, from the definition of the Bernstein norm and multiplying \eqref{eq:pw1} by $a=p_0^{(m)}(x^{(m)})$ and applying $T$  yields
\begin{align*}
    \|\tildem{p_0}{p}\|_{\Pjoint,B}^2&=2\,T\left[\big((e^{|\mm{p_0}{p}|}-1-|\mm{p_0}{p}|)p_0^{(m)}\big)_m\right]\\
&\le 18\,T\left[\bigg(\big(\sqrt {p_0^{(m)}}-\sqrt{\frac{p_0^{(m)}+p^{(m)}}{2}}\big)^2 \bigg)_m\right]\\
&=18\,\HeMAR^2(p_0,\tfrac{p_0+p}{2}).
\end{align*}

Now we prove for the second part of statement. 
Fix $m$, $x^{(m)}$ with $a:=p^{(m)}_0(x^{(m)})>0$, $b_1:=q_1^{(m)}(x^{(m)})$ and $b_2:=q_2^{(m)}(x^{(m)})$.
As $q_1 \leq q_2$ and that marginalization is monotone, $q_1^{(m)}(x^{(m)}) \leq q_2^{(m)}(x^{(m)})$ and 
$
y := \mm{p_0}{q_2}(x^{(m)}) - \mm{p_0}{q_1}(x^{(m)}) = \log\frac{a+b_2}{2a}-\log\frac{a+b_1}{2a}
=\log\frac{a+b_2}{a+b_1}\;\ge\;0.
$
Apply the scalar inequality 
$
  e^{|y|}-1-|y|\;\le\;2\bigl(e^{y/2}-1\bigr)^2
  \;\text{for all $y\ge0$}
$
with 
$e^{y/2}=\sqrt{\frac{a+b_2}{a+b_1}},\;
(e^{y/2}-1)^2=\frac{\bigl(\sqrt{a+b_2}-\sqrt{a+b_1}\bigr)^2}{a+b_1}
$ gives
\[
  e^{|y|}-1-|y|\;\le\;2(e^{y/2}-1)^2
  =\frac{2\bigl(\sqrt{a+b_2}-\sqrt{a+b_1}\bigr)^2}{a+b_1}.
\]
Multiply by $a=p^{(m)}_0(x^{(m)})$ and use $\dfrac{a}{a+b_1}\le1$ \;(as $b_1\ge0$):
\begin{equation}\label{eq:pw2}
  \bigl(e^{|y|}-1-|y|\bigr)\,a
  \;\le\;\frac{2a}{a+b_1}\bigl(\sqrt{a+b_2}-\sqrt{a+b_1}\bigr)^2
  \;\le\;2\bigl(\sqrt{a+b_2}-\sqrt{a+b_1}\bigr)^2 .
\end{equation}

Finally, we apply operation $T$. The difference $\tildem{p_0}{q_2}-\tildem{p_0}{q_1}$ is glued from
the collection $(\mm{p_0}{q_2}-\mm{p_0}{q_1})_m$, so by \eqref{eq:Bnorm} and then
\eqref{eq:pw2} under $T$ (preserved by (T1)),
\begin{align*}
  \bigl\|\tildem{p_0}{q_2}-\tildem{p_0}{q_1}\bigr\|_{\Pjoint,B}^2
  &=2\,T\bigl[((e^{|\mm{p_0}{q_2} - \mm{p_0}{q_1}|}-1-|\mm{p_0}{q_2} - \mm{p_0}{q_1}|)\,p^{(m)}_0)_m\bigr]\\
  &\;\le\;4\,T\Bigl[\bigl((\sqrt{p_0^{(m)}+q_2^{(m)}}-\sqrt{p_0^{(m)}+q_1^{(m)}})^2\bigr)_m\Bigr].    
\end{align*}
By (T4), $p_0^{(m)}+q_i^{(m)}=(p_0+q_i)^{(m)}$, and by \eqref{eq:HtT} for the nonnegative integrable pair
$(p_0+q_1,\,p_0+q_2)$,
\[
  T\Bigl[\bigl((\sqrt{p_0^{(m)}+q_2^{(m)}}-\sqrt{p_0^{(m)}+q_1^{(m)}})^2\bigr)_m\Bigr]
  =\HeMAR^2\bigl(p_0+q_1,\,p_0+q_2\bigr),
\]
completing the proof. 
\end{proof}

\begin{lemma}[Bracketing entropy passes to the observed-data sieve]\label{lem:entropyinherit}
Under Assumptions~\textup{\ref{asm_true_MDM}--\ref{asm_positive_MDM}}, for every $\delta>0$,
\[
  \Jt_{[\,]}\!\bigl(\delta,\Pobsloc{\delta},\HeMAR\bigr)
  \;\le\;
  \Jt_{[\,]}\!\bigl(\delta,\Pnloc{\delta/\sqrt{c_0}},\He\bigr).
\]
\end{lemma}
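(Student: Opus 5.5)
The argument compares the two bracketing integrals pointwise in $\varepsilon$. For each $\varepsilon\in(0,\delta]$ I will show
\[
N_{[\,]}\bigl(\varepsilon,\Pobsloc{\delta},\HeMAR\bigr)\;\le\;N_{[\,]}\bigl(\varepsilon,\Pnloc{\delta/\sqrt{c_0}},\He\bigr),
\]
then integrate $\sqrt{1+\log(\cdot)}$ over $(0,\delta]$. Monotonicity of the integrand in the bracketing number then gives the stated inequality directly. Two ingredients are needed: an inclusion of the index sets, and a comparison of bracket sizes.

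\emph{Step 1 (inclusion of localized sieves).} By the lower half of Proposition~\ref{prop:sandwich}, any $p\in\Pnsieve$ with $\HeMAR(p_0,p)\le\delta$ satisfies
\[
\He(p_0,p)\le \HeMAR(p_0,p)/\sqrt{c_0}\le \delta/\sqrt{c_0}.
\]
Hence $\Pobsloc{\delta}\subseteq\Pnloc{\delta/\sqrt{c_0}}$. Bracketing numbers are monotone under inclusion, because a bracket cover of a larger set covers any subset. So it suffices to bound $N_{[\,]}(\varepsilon,\Pnloc{\delta/\sqrt{c_0}},\HeMAR)$ by the corresponding $\He$-bracketing number.

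\emph{Step 2 (an $\He$-bracket is an $\HeMAR$-bracket of no larger size).} Take an $\varepsilon$-bracket $[\ell,u]$ in $\He$, with $0\le\ell\le u$ integrable and $\He(\ell,u)\le\varepsilon$. Brackets for densities may always be taken nonnegative by replacing $\ell$ with $\ell\vee0$, which only shrinks $\He(\ell,u)$ and keeps the covered elements, since densities are nonnegative. Pointwise, marginalization and Minkowski's integral inequality give
\[
\bigl(\sqrt{u^{(m)}}-\sqrt{\ell^{(m)}}\bigr)^2\le \int\bigl(\sqrt{u}-\sqrt{\ell}\bigr)^2\,dx^{(\bar m)}.
\]
This is the statement that $f\mapsto\sqrt{\int f}$ is $1$-Lipschitz from $L^2$-roots, i.e.\ $\bigl\|\sqrt{u}\bigr\|_{L^2(dx^{(\bar m)})}-\bigl\|\sqrt{\ell}\bigr\|_{L^2(dx^{(\bar m)})}\le\bigl\|\sqrt u-\sqrt\ell\bigr\|_{L^2(dx^{(\bar m)})}$. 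I then multiply by $\Prob(M=m\mid x^{(m)})$. MAR (Assumption~\ref{asm_true_MDM}) lets me write this weight as $\Prob(M=m\mid x)$, which can be pulled inside the $x^{(\bar m)}$-integral. Summing over $m$ and using $\sum_m\Prob(M=m\mid x)=1$ gives
\[
\HeMAR^2(\ell,u)\le\int(\sqrt u-\sqrt\ell)^2\,dx=\He^2(\ell,u)\le\varepsilon^2.
\]
The bracket still contains the same elements $p$, since $\ell\le p\le u$ pointwise. By the order compatibility in Proposition~\ref{prop:pseudometric}, it is a legitimate $\HeMAR$-bracket of size at most $\varepsilon$. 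Therefore every $\He$-bracket cover of $\Pnloc{\delta/\sqrt{c_0}}$ is an $\HeMAR$-bracket cover of the same cardinality.

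\emph{Step 3 and main obstacle.} Combining Steps 1 and 2 gives the pointwise bound on bracketing numbers for all $\varepsilon$, and integration over $(0,\delta]$ finishes the proof. The delicate part is Step 2. Proposition~\ref{prop:sandwich} is stated only for densities in $\mathcal P$, whereas brackets are general nonnegative integrable functions, so the upper sandwich $\HeMAR\le\He$ must be re-proved for such functions via the Minkowski/Cauchy--Schwarz marginalization argument above. MAR is needed so the weight does not depend on $x^{(\bar m)}$ when summing over patterns. I would also check that no normalization of $\ell,u$ is used anywhere; none appears to be needed.
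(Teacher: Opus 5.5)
Your proposal is correct and follows essentially the same route as the paper. The paper likewise proves the contraction $\HeMAR\le\He$ for arbitrary nonnegative integrable bracket endpoints, via Cauchy--Schwarz, $\sqrt{f^{(m)}g^{(m)}}\ge\int\sqrt{fg}\,dx^{(\bar m)}$ (equivalent to your reverse-Minkowski step), uses the lower sandwich bound to get $\Pobsloc{\delta}\subseteq\Pnloc{\delta/\sqrt{c_0}}$, and integrates the resulting comparison of bracketing numbers; your explicit remark that MAR is what makes $\sum_m \Prob(M=m\mid x^{(m)})=1$ spells out a step the paper uses only implicitly.
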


\begin{proof}
The upper bound $\HeMAR\le \He$ of Proposition~\ref{prop:sandwich} is stated for densities; since the
bracket endpoints below are not densities, we record the elementary fact that it holds for
arbitrary nonnegative integrable pairs, then transfer covering numbers. (The lower bound
$c_0 \He^2 \le\HeMAR^2$ is applied only to the model densities $p_0,p$, so
Proposition~\ref{prop:sandwich} suffices there.)

\emph{Step 1 (contraction for non-densities).} For nonnegative integrable $f,g$, write
$f^{(m)},g^{(m)}$ for the observed-coordinate marginals. That is, $f^{(m)}(x^{(m)})=\int f(x^{(m)},x^{(\bar m)})\,dx^{(\bar m)}$. Denote $w_m(x^{(m)})=\mathbb{P}(M=m\mid x^{(m)})$. 
At each $(m,x^{(m)})$,
by Cauchy--Schwarz,
$\sqrt{f^{(m)}g^{(m)}} \geq \int \sqrt{fg} \, dx^{(\bar{m})}$, hence
\begin{equation}\label{eq:pwcontraction}
  \bigl(\sqrt{f^{(m)}}-\sqrt{g^{(m)}}\bigr)^2
  \;\le\;\int\bigl(\sqrt f-\sqrt g\bigr)^2\,dx^{(\bar m)} .
\end{equation}
Apply \eqref{eq:pwcontraction} within the definition of $\HeMAR^2(f,g)$ and using $\sum_m w_m(x^{(m)})=1$ together with
$dx=dx^{(m)}\,dx^{(\bar m)}$ gives
\begin{equation}\label{eq:contraction}
\begin{aligned}
  \HeMAR^2(f,g)
  &=\sum_m\int\bigl(\sqrt{f^{(m)}}-\sqrt{g^{(m)}}\bigr)^2\,w_m(x^{(m)})\,dx^{(m)}\\
  &\le\sum_m\int\!\Bigl(\int\bigl(\sqrt f-\sqrt g\bigr)^2\,dx^{(\bar m)}\Bigr)\,w_m(x^{(m)})\,dx^{(m)}\\
  &=\int\bigl(\sqrt f-\sqrt g\bigr)^2\Bigl(\sum_m w_m(x^{(m)})\Bigr)\,dx
  =\int\bigl(\sqrt f-\sqrt g\bigr)^2\,dx=\He^2(f,g).
\end{aligned}
\end{equation}
This extends the upper bound of Proposition~\ref{prop:sandwich} beyond densities. The
construction of \citet[Prop.~B.1]{Bayes}, which passes through joint densities
$q_{\theta_j}$ on $\mathcal X\times\{0,1\}^d$ and a normalization $\sum_m\int q_{\theta_j}=1$,
requires $f,g$ to be densities and so does not apply to bracket endpoints directly.
Moreover, $\HeMAR$ is a pseudometric (Proposition~\ref{prop:pseudometric}) compatible with the pointwise partial order, so $N_{[\,]}(\cdot,\cdot,\HeMAR)$ is well-defined \citet[][Appendix~C]{ghosal2017fundamentals}.

\emph{Step 2 (brackets transfer).} The two localizing balls differ, since $\HeMAR\le \He$ makes
$\Pobsloc{\delta}$ the larger set; we control it through the matching $\He$-ball. By the lower
bound of Proposition~\ref{prop:sandwich}, $\He(p_0,p)\le\HeMAR(p_0,
p)/\sqrt{c_0}$, so
\begin{equation}\label{eq:ballincl}
 \Pobsloc{\delta}= \bigl\{p\in \Pnsieve:\HeMAR(p_0,p)\le\delta\bigr\}
  \;\subseteq\;
  \bigl\{p \in \Pnsieve: \He(p_0,p)\le\delta/\sqrt{c_0} \bigr\}=\Pnloc{\delta/\sqrt{c_0}}.
\end{equation}
Let $\{[\ell_j,u_j]\}_{j\le N}$, $N=N_{[\,]}(\varepsilon,\Pnloc{\delta/\sqrt{c_0}},\He)$, be an
$\varepsilon$-bracketing of $\Pnloc{\delta/\sqrt{c_0}}$ in $\He$, with $\ell_j\le u_j$ not
necessarily densities \citep[][p.~528]{ghosal2017fundamentals}. As \eqref{eq:contraction} gives
$\HeMAR(\ell_j,u_j)\le \He(\ell_j,u_j)<\varepsilon$, with \eqref{eq:ballincl},
the collection $\{[\ell_j,u_j]\}_{j\le N}$ is an $\varepsilon$-bracketing of
$\Pobsloc{\delta}$ in $\HeMAR$, hence
$N_{[\,]}(\varepsilon,\Pobsloc{\delta},\HeMAR)\le N_{[\,]}(\varepsilon,\Pnloc{\delta/\sqrt{c_0}},\He)$.


\emph{Step 3 (integrate).} $N_{[\,]}(\varepsilon,\Pobsloc{\delta},\HeMAR)\le
N_{[\,]}(\varepsilon,\Pnloc{\delta/\sqrt{c_0}},\He)$ from Step 2 holds for every $\varepsilon > 0$. Since $x \to \sqrt{1 + \log x}$ is increasing:
$\sqrt{1+\log N(\varepsilon,\Pobsloc{\delta},\HeMAR)}\;\le\;\sqrt{1+\log N(\varepsilon,\Pnloc{\delta/\sqrt{c_0}},\He)}$, and integrating over $\varepsilon \in (0, \delta)$ gives the desired result.
\end{proof}

In particular, a complete-data entropy bound
$\Jt_{[\,]}(\delta_n,\Pnloc{\delta_n/\sqrt{c_0}},\He)\le\delta_n^2\sqrt n$ implies the corresponding
observed-data bound $\Jt_{[\,]}(\delta_n,\Pobsloc{\delta_n},\HeMAR)\le\delta_n^2\sqrt n$, so the
local entropy condition of \citet{vandervaart2023weak} Theorem 3.4.1 may be posed and verified on the complete-data sieve.

\vspace{0.5cm}
\begin{lemma}\label{lem:shift}
Let $f,g,s$ be nonnegative integrable functions on $\mathcal X$. Then
\[
  \widetilde{H}^2(f+s,\;g+s)\;\le\;\widetilde{H}^2(f,g).
\]
In particular, taking $s=p_0$,
$\widetilde{H}^2(p_0+q_1,\;p_0+q_2)\le\widetilde{H}^2(q_1,q_2)$
for any nonnegative integrable $q_1,q_2$.
\end{lemma}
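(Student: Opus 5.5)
The plan is to reduce the statement to a scalar inequality at each fixed pattern and observed point, then integrate it against the MAR weights via the functional $T$ of Definition~\ref{def:T}. By property \ref{T:hellinger},
\[
\HeMAR^2(f+s,g+s)=T\Bigl[\bigl((\sqrt{(f+s)^{(m)}}-\sqrt{(g+s)^{(m)}})^2\bigr)_m\Bigr].
\]
Marginalization is linear, exactly as in \ref{T:commute}, so $(f+s)^{(m)}=f^{(m)}+s^{(m)}$ and $(g+s)^{(m)}=g^{(m)}+s^{(m)}$. Since $T$ preserves pointwise inequalities by \ref{T:poslin}, it suffices to prove, for every $m$ and $x^{(m)}$, with $a:=f^{(m)}(x^{(m)})\ge0$, $b:=g^{(m)}(x^{(m)})\ge0$ and $c:=s^{(m)}(x^{(m)})\ge0$ (finite almost everywhere by integrability),
\[
\bigl(\sqrt{a+c}-\sqrt{b+c}\bigr)^2\le\bigl(\sqrt a-\sqrt b\bigr)^2 .
\]

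The scalar step is the only real content. If $a=b$ both sides vanish. Otherwise I rationalize:
\[
\sqrt{a+c}-\sqrt{b+c}=\frac{a-b}{\sqrt{a+c}+\sqrt{b+c}}, \qquad \sqrt a-\sqrt b=\frac{a-b}{\sqrt a+\sqrt b}.
\]
Here $\sqrt a+\sqrt b>0$ because $a\neq b$. Since $\sqrt{\cdot}$ is increasing, $\sqrt{a+c}+\sqrt{b+c}\ge\sqrt a+\sqrt b$, so the first quantity is at most the second in absolute value. Squaring gives the pointwise inequality.

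An alternative route is to note that $c\mapsto\sqrt{a+c}-\sqrt{b+c}$ has derivative $\tfrac12\bigl((a+c)^{-1/2}-(b+c)^{-1/2}\bigr)$. This has the opposite sign to the difference itself, so its absolute value is nonincreasing in $c$. I would use the rationalization because it is shorter.

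Multiplying by $w_m(x^{(m)})=\Prob(M=m\mid x^{(m)})\ge0$, integrating in $x^{(m)}$ and summing over $m$ then gives $\HeMAR^2(f+s,g+s)\le\HeMAR^2(f,g)$. The stated special case follows by taking $f=q_1$, $g=q_2$ and $s=p_0$. The terms on both sides are nonnegative, so no integrability issue arises.

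I expect no genuine obstacle. The only points needing care are exact commutation of the sum with marginalization, which holds before any square root is taken (the remark after \ref{T:commute}), and the degenerate case $a=b=0$, which is handled by the case split above.
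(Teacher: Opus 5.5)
Your proof is correct and follows the paper's argument essentially verbatim. The paper uses the same rationalization-based scalar inequality with the case split $a=b$, the same linearity of marginalization giving $(f+s)^{(m)}=f^{(m)}+s^{(m)}$, and the same final weighting by $\Prob(M=m\mid x^{(m)})$, integration, and summation over patterns; routing that last step through $T$ and property \ref{T:poslin} is only a notational repackaging.
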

 
\begin{proof}
\emph{Scalar step.} We first show that for all reals $\alpha,\beta,\gamma\ge 0$,
\begin{align}
  \bigl(\sqrt{\alpha+\gamma}-\sqrt{\beta+\gamma}\bigr)^2
  \;\le\;
  \bigl(\sqrt{\alpha}-\sqrt{\beta}\bigr)^2 .
  \label{eq:ineqshift}
\end{align}
If $\alpha=\beta$ both sides vanish. Otherwise, 
\[
  \bigl|\sqrt{\alpha+\gamma}-\sqrt{\beta+\gamma}\bigr|
  =\frac{|\alpha-\beta|}{\sqrt{\alpha+\gamma}+\sqrt{\beta+\gamma}}
  \;\le\;
  \frac{|\alpha-\beta|}{\sqrt{\alpha}+\sqrt{\beta}}
  =\bigl|\sqrt{\alpha}-\sqrt{\beta}\bigr|,
\]
the inequality holding since
$\gamma\ge 0$ implies $\sqrt{\alpha+\gamma}+\sqrt{\beta+\gamma}\ge\sqrt{\alpha}+\sqrt{\beta}>0$,
where $\sqrt{\alpha}+\sqrt{\beta}>0$ as $\alpha\neq\beta\Rightarrow\alpha+\beta>0$.
Squaring gives \eqref{eq:ineqshift}.

\smallskip
\emph{Marginalization.} For each pattern $m$, marginalization over the missing block 
\[
  (f+s)^{(m)}(x^{(m)})
  =\int\bigl(f+s\bigr)(x^{(m)},x^{(\bar m)})\,dx^{(\bar m)}
  =f^{(m)}(x^{(m)})+s^{(m)}(x^{(m)}),
\]
and likewise $(g+s)^{(m)}=g^{(m)}+s^{(m)}$, with 
$f^{(m)},g^{(m)},s^{(m)} \geq 0$.
Now we fix $(m,x^{(m)})$ and apply \eqref{eq:ineqshift} with
$
  \alpha=f^{(m)}(x^{(m)}),\;
  \beta=g^{(m)}(x^{(m)}),\;
  \gamma=s^{(m)}(x^{(m)})\;\ge 0
$
to obtain
\[
  \Bigl(\sqrt{(f+s)^{(m)}}-\sqrt{(g+s)^{(m)}}\Bigr)^2
  \;\le\;
  \Bigl(\sqrt{f^{(m)}}-\sqrt{g^{(m)}}\Bigr)^2 .
\]
Multiplying by the nonnegative weight $\mathbb{P}\!\left(M=m \mid x^{(m)}\right)$ and
integrating over $x^{(m)}$, then summing over $m$, preserves the inequality:
\begin{align*}
      \widetilde{H}^2(f+s,g+s)
  = &\sum_m\int\Bigl(\sqrt{(f+s)^{(m)}}-\sqrt{(g+s)^{(m)}}\Bigr)^2
     P\!\left(M=m \mid x^{(m)}\right)\,dx^{(m)} \\
  \;\le\;
  &\sum_m\int\Bigl(\sqrt{f^{(m)}}-\sqrt{g^{(m)}}\Bigr)^2
     \mathbb{P}\!\left(M=m \mid x^{(m)}\right)\,dx^{(m)} \\
  = &\widetilde{H}^2(f,g).
\end{align*}
The specialization $s=p_0$ is immediate.
\end{proof}

\vspace{0.5cm}
\begin{lemma}[Least increasing majorant preserves scaling]\label{lem:majorant}
Let $c>0$ and let $\Psi:[c,\infty)\to[0,\infty)$. Suppose there is an exponent
$\alpha>0$ such that
\begin{equation}  \label{eq:majorantcond}
     \delta\;\longmapsto\;\frac{\Psi(\delta)}{\delta^{\alpha}}
  \quad\text{is nonincreasing on }[c,\infty).
\end{equation}
Define the least increasing majorant
$
  \phi(\delta):=\sup_{c\le s\le\delta}\Psi(s),
  \; \delta\ge c.
$
Then:
\begin{enumerate}\itemsep2pt
  \item[(i)] $\phi$ is nondecreasing and $\phi\ge\Psi$;
  \item[(ii)] $\delta\mapsto\phi(\delta)/\delta^{\alpha}$ is nonincreasing on
        $[c,\infty)$.
\end{enumerate}
\end{lemma}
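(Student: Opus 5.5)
Part (i) is immediate from the definition of $\phi$ as a supremum over a growing interval. For $c\le\delta\le\delta'$ we have $[c,\delta]\subseteq[c,\delta']$, so $\phi(\delta)\le\phi(\delta')$. Taking $s=\delta$ in the supremum gives $\phi(\delta)\ge\Psi(\delta)$. We should also note that $\phi$ is finite. Hypothesis \eqref{eq:majorantcond} gives $\Psi(s)\le \Psi(c)\,s^{\alpha}/c^{\alpha}\le \Psi(c)\,\delta^{\alpha}/c^{\alpha}$ for $s\in[c,\delta]$, so $\phi(\delta)\le \Psi(c)(\delta/c)^\alpha<\infty$.

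For part (ii), fix $c\le\delta\le\delta'$; we want $\phi(\delta')/\delta'^{\alpha}\le\phi(\delta)/\delta^{\alpha}$. We bound each $s\in[c,\delta']$ in the supremum defining $\phi(\delta')$, splitting into two cases.

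\emph{Case $s\le\delta$.} Here $\Psi(s)\le\phi(\delta)$ by definition. Since $\delta'\ge\delta$, this gives $\Psi(s)/\delta'^{\alpha}\le\phi(\delta)/\delta'^{\alpha}\le\phi(\delta)/\delta^{\alpha}$.

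\emph{Case $\delta<s\le\delta'$.} By \eqref{eq:majorantcond}, $\Psi(s)/s^\alpha\le\Psi(\delta)/\delta^\alpha\le\phi(\delta)/\delta^\alpha$. Since $s\le\delta'$, we get $\Psi(s)/\delta'^{\alpha}\le\Psi(s)/s^{\alpha}\le\phi(\delta)/\delta^{\alpha}$.

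Both cases give $\Psi(s)\le \delta'^{\alpha}\phi(\delta)/\delta^{\alpha}$ for every $s\in[c,\delta']$. Taking the supremum over $s$ yields $\phi(\delta')\le(\delta'/\delta)^\alpha\phi(\delta)$, which is (ii).

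There is no real obstacle; the only point needing care is the case split at $\delta$. The monotonicity hypothesis on $\Psi(s)/s^\alpha$ can only be used for $s\ge\delta$. Values $s<\delta$ must instead be handled through the definition of $\phi(\delta)$ together with $\delta'^{\alpha}\ge\delta^{\alpha}$. In the application, presumably with $\Psi(\delta)=\Jt_{[\,]}(\delta,\Pnloc{\delta/\sqrt{c_0}},\He)$ and $\alpha<2$, this is what makes $\phi$ a valid modulus $\phi_n$ in \citet[Theorem 3.4.1]{vandervaart2023weak}.
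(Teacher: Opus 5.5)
Your proof is correct and matches the paper's argument: you split $s\in[c,\delta']$ at $\delta$, handle $s\le\delta$ through the definition of $\phi(\delta)$ together with $\delta'\ge\delta$, and handle $\delta<s\le\delta'$ through the monotonicity of $\Psi(s)/s^{\alpha}$. Your check that $\phi$ is finite, via $\Psi(s)\le\Psi(c)(s/c)^{\alpha}$, is a small point the paper leaves implicit.
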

\begin{proof}
Write $g(\delta):=\Psi(\delta)/\delta^{\alpha}$, nonincreasing by \eqref{eq:majorantcond}, and $\Psi(s)=s^{\alpha}g(s)$.
(i) follows directly from construction. A supremum over the nested intervals $[c,\delta]$ is nondecreasing in $\delta$, and $\phi(\delta)\ge\Psi(\delta)$ by taking $s=\delta$.
For (ii), let $c \le\delta_1\le\delta_2$. We need to show $\frac{\phi(\delta_1)}{\delta_1^{\alpha}} \geq \frac{\phi(\delta_2)}{\delta_2^{\alpha}}$, where
\[
  \frac{\phi(\delta)}{\delta^{\alpha}}
  =\sup_{c\le s\le\delta}\frac{\Psi(s)}{\delta^{\alpha}}
  =\sup_{c\le s\le\delta}\Bigl(\frac{s}{\delta}\Bigr)^{\!\alpha}g(s).
\]
Take any $s \in [c, \delta_2]$.
Consider the two cases. 
If $s\le\delta_1$: as $\delta_2\ge\delta_1$,
$
          \Bigl(\frac{s}{\delta_2}\Bigr)^{\!\alpha}g(s)
          \le\Bigl(\frac{s}{\delta_1}\Bigr)^{\!\alpha}g(s)
          \le\sup_{c\le u\le\delta_1}\Bigl(\frac{u}{\delta_1}\Bigr)^{\!\alpha}g(u)
          =\frac{\phi(\delta_1)}{\delta_1^{\alpha}} .
$
If $\delta_1<s\le\delta_2$: as $(s/\delta_2)^{\alpha}\le1$ and $g$ is nonincreasing,
$
          \Bigl(\frac{s}{\delta_2}\Bigr)^{\!\alpha}g(s)
          \le g(s)\le g(\delta_1)
          =\Bigl(\frac{\delta_1}{\delta_1}\Bigr)^{\!\alpha}g(\delta_1)
          \le\frac{\phi(\delta_1)}{\delta_1^{\alpha}} .
$
Taking the supremum over $s\in[c,\delta_2]$ gives
$\phi(\delta_2)/\delta_2^{\alpha}\le\phi(\delta_1)/\delta_1^{\alpha}$. 
\end{proof}

\mainthmMAR*

\begin{proof}

We apply \citet[Theorem~3.4.1]{vandervaart2023weak},
with the dictionary
\[
  \theta=p,\quad \theta_{n,0}=p_0,\quad
  \theta_n=p_n,\quad \Theta_n=\Pnsieve,\quad
  d_n(\theta,\theta_{n,0})=\HeMAR(p_0,p),
\]
\[
  \mathbb M_n(\theta)=(4+2\sqrt2)^2\,\Pn \tildem{p_0}{p},\qquad
  M_n(\theta)=(4+2\sqrt2)^2\,\Pjoint \tildem{p_0}{p},
\]
where
\[
\tildem{p_0}{p}(x, m):= \sum_{m'} \log\frac{p_0^{(m')}(x^{(m')})+p^{(m')}(x^{(m')})}{2\,p_0^{(m')}(x^{(m')})} \one\{ m = m'\}
\]
and $\underline\delta_n=\HeMAR(p_0,p_n)$. 
The criterion functions $\mathbb M_n$ and $M_n$ are the empirical and population
averages of the measurable map
$(x,m)\mapsto\tildem{p_0}{p}(x,m)$ on
$\mathcal X\times\{0,1\}^d$. The anchoring is at the truth,
$\theta_{n,0}=p_0$ (\emph{not} $p_n$), which permits the
condition \eqref{eq:tail} in place of uniform boundedness. Throughout, $T$ is
the positive linear functional of Definition~\ref{def:T} with properties
\ref{T:poslin}--\ref{T:commute}, and all stochastic-order symbols $O_P,o_P$ are
with respect to the sample law $\Pjointn$. We verify the three requirements of \citet[Theorem~3.4.1]{vandervaart2023weak} in turn.

\paragraph{Step 0: the midpoint inequality.}
For any $p_0,p$,
\begin{equation}\label{eq:eight}
  \bigl(1-\tfrac1{\sqrt2}\bigr)^2\HeMAR^2(p_0,p)
  \;\le\;
  \HeMAR^2\!\Bigl(p_0,\tfrac{p_0+p}{2}\Bigr)
  \;\le\;
  \tfrac12\,\HeMAR^2(p_0,p).
\end{equation}

\emph{Proof of \eqref{eq:eight}.} Kaji's (8) is the pointwise statement, valid at every
$(a,b)$ with $a,b\ge0$,
\[
  \underbrace{\bigl(1-\tfrac1{\sqrt2}\bigr)^2\bigl(\sqrt a-\sqrt b\bigr)^2}_{=:L}
  \;\le\;
  \underbrace{\Bigl(\sqrt a-\sqrt{\tfrac{a+b}2}\Bigr)^2}_{=:C}
  \;\le\;
  \underbrace{\tfrac12\bigl(\sqrt a-\sqrt b\bigr)^2}_{=:R}.
\]
Evaluate at $(a,b)=\bigl(p_0^{(m)}(x^{(m)}),\,p^{(m)}(x^{(m)})\bigr)$,
making $L=L^{(m)},C=C^{(m)},R=R^{(m)}$ functions of $(m,x^{(m)})$. Applying $T$ to the pointwise chain
$L^{(m)}\le C^{(m)}\le R^{(m)}$, positivity gives $T[(C^{(m)}-L^{(m)})_m]\ge0$ and $T[(R^{(m)}-C^{(m)})_m]\ge0$, and linearity then yields
$T[(L^{(m)})_m]\le T[(C^{(m)})_m]\le T[(R^{(m)})_m]$. Evaluate each term using \eqref{eq:HtT}, \eqref{eq:commute}, and the
linearity of $T$ (constants pull out):
\begin{align*}
  T[(L^{(m)})_m]&=\bigl(1-\tfrac1{\sqrt2}\bigr)^2\,
        T\;\!\Bigl[\big(\bigl(\sqrt{p_0^{(m)}}-\sqrt{p^{(m)}}\bigr)^2 \big)_m\Bigr]
       =\bigl(1-\tfrac1{\sqrt2}\bigr)^2\,\HeMAR^2(p_0,p),\\[2pt]
  T\;[(R^{(m)})_m]&=\tfrac12\,
        T\;\!\Bigl[\big(\bigl(\sqrt{p_0^{(m)}}-\sqrt{p^{(m)}}\bigr)^2\big)_m\Bigr]
       =\tfrac12\,\HeMAR^2(p_0,p),\\[2pt]
  T\;[(C^{(m)})_m]&=T\;\!\Bigl[\big(\bigl(\sqrt{p_0^{(m)}}-\sqrt{\tfrac{p_0^{(m)}+p^{(m)}}{2}}\bigr)^2\big)_m\Bigr] \\
       &\overset{\eqref{eq:commute}}{=}
        T\;\!\Bigl[\big(\bigl(\sqrt{p_0^{(m)}}-\sqrt{(\tfrac{p_0+p}{2})^{(m)}}\bigr)^2\big)_m\Bigr]
       \overset{\eqref{eq:HtT}}{=}\HeMAR^2\!\Bigl(p_0,\tfrac{p_0+p}{2}\Bigr),
\end{align*}
Substituting into $T[(L^{(m)})_m]\le T[(C^{(m)})_m]\le T[(R^{(m)})_m]$ gives \eqref{eq:eight}.

\paragraph{Requirement (i): curvature / negative drift.}
We must show, for every $n$ and $\delta>\underline\delta_n$,
\[
  \sup_{\,\delta/2<\HeMAR(p_0,p)\le\delta}
  (4+2\sqrt2)^2\bigl(\Pjoint \tildem{p_0}{p}-\Pjoint \tildem{p_0}{p_0}\bigr)
  \;\le\;-\delta^2.
\]
Here $\tildem{p_0}{p_0}=\log1=0$, and 
\[
\Pjoint \tildem{p_0}{p}=-\E_{(X,M) \sim \Pjoint}\left[ \log\frac{2p_0^{(M)}(X^{(M)})}{p_0^{(M)}(X^{(M)})+p^{(M)}(X^{(M)})}\right]=
-\KLMAR\bigl(P_{0}\,\|\,\tfrac{P_{0}+P}{2}\bigr)
\]
by
\eqref{eq:KLVT}. The unadapted $\He^2\le\KL$ is classical \citep[Lemma~B.1]{ghosal2017fundamentals}. We now show that the adapted KL dominates the adapted squared Hellinger,
\begin{equation}\label{eq:KLgeH}
  \KLMAR\bigl(P_{0}\,\|\,\tfrac{P_{0}+P}{2}\bigr)
  \;\ge\;\HeMAR^2\!\Bigl(p_0,\tfrac{p_0+p}{2}\Bigr).
\end{equation}
using the propensity-weighted affinity identity \eqref{eq:HrhoT}, which we
verify here. We prove \eqref{eq:KLgeH} for any pair of densities $p_1,p_2$ via the adapted
affinity
$\widetilde\rho(p_1,p_2):=\sum_m\int\sqrt{p_1^{(m)}p_2^{(m)}}\,\Prob(M{=}m\mid x^{(m)})\,dx^{(m)}
=T[(\sqrt{p_1^{(m)}p_2^{(m)}})_m]$ \citep[App.~B]{Bayes}. Each propensity-weighted
marginal is a probability, $T[(p_i^{(m)})_m]=\sum_m\int p_i^{(m)}(x^{(m)})\,\Prob(M{=}m\mid x^{(m)})
\,dx^{(m)}=\sum_m P_i(M=m)=1$ \textup{(}the pattern probabilities under $P_i$ sum to one\textup{)},
so by \eqref{eq:HtT}
\begin{equation}\label{eq:HrhoT}
  \HeMAR^2(p_1,p_2)=T[(p_1^{(m)})_m]+T[(p_2^{(m)})_m]-2\,T[(\sqrt{p_1^{(m)}p_2^{(m)}})_m]
  =2\bigl(1-\widetilde\rho(p_1,p_2)\bigr).
\end{equation}
On the other hand, applying $-\log y\ge1-y$ at $y=\sqrt{p_2^{(m)}/p_1^{(m)}}$ inside the KL
integrand gives, pointwise, $p_1^{(m)}\log\tfrac{p_1^{(m)}}{p_2^{(m)}}\ge2\bigl(p_1^{(m)}
-\sqrt{p_1^{(m)}p_2^{(m)}}\bigr)$; weighting by nothing further and applying $T$ \textup{(}with
the weight $p_1^{(m)}$ already present\textup{)} via \eqref{eq:KLVT},
\[
  \KLMAR(P_1\|P_2)=T\bigl[((\log\tfrac{p_1^{(m)}}{p_2^{(m)}})\,p_1^{(m)})_m\bigr]
  \ge 2\bigl(T[(p_1^{(m)})_m]-T[(\sqrt{p_1^{(m)}p_2^{(m)}})_m]\bigr)
  =2\bigl(1-\widetilde\rho(p_1,p_2)\bigr).
\]
Comparing with \eqref{eq:HrhoT} yields $\KLMAR(P_1\|P_2)\ge\HeMAR^2(p_1,p_2)$; specializing to
$(P_1,P_2)=(P_{0},\tfrac{P_{0}+P}{2})$ gives \eqref{eq:KLgeH}.
Combining \eqref{eq:KLgeH} with \eqref{eq:eight},
\[
  \Pjoint \tildem{p_0}{p}
  \le -\,\HeMAR^2\!\Bigl(p_0,\tfrac{p_0+p}{2}\Bigr)
  \le -\bigl(1-\tfrac1{\sqrt2}\bigr)^2\HeMAR^2(p_0,p)
  \le -\bigl(1-\tfrac1{\sqrt2}\bigr)^2\tfrac{\delta^2}{4}
  = -\frac{\delta^2}{(4+2\sqrt2)^2}
\]
on $\{\HeMAR(p_0,p)>\delta/2\}$, using $(1-\tfrac1{\sqrt2})^2/4=(4+2\sqrt2)^{-2}$.
Multiplying by $(4+2\sqrt2)^2$ gives $\le-\delta^2$. \emph{No likelihood-ratio bound enters};
the step uses only \eqref{eq:eight} and \eqref{eq:KLgeH}, both pure $\HeMAR$-geometry through $T$.

\paragraph{Requirement (ii): modulus of continuity.}

We must bound 
\[
\mathbb E^*\sup_{\HeMAR(p_0,p)\le\delta}\sqrt n\bigl|(\Pn-\Pjoint)
\tildem{p_0}{p}\bigr|
\]
by a majorant $\phi_n(\delta)$ with $\phi_n(\delta)/\delta^\alpha$
decreasing for some $\alpha<2$. Here the outer expectation $\mathbb E^*$ is defined as in \cite[Chapter 1.2]{vandervaart2023weak}. We follow Kaji via the Bernstein-norm maximal inequality
\citet[Theorem~2.14.18$'$]{vandervaart2023weak}, applied to the function class
\[
\widetilde{\mathcal M}_{n,\delta}:=\{\tildem{p_0}{p}:
p \in \mathcal{P}_n, \;
\HeMAR(p_0,p)\le\delta\}.
\]
This necessitates an envelope and a bracketing
integral. \emph{This is where the observed/complete bridge sits, in two layers.}

\emph{(ii-a) Envelope.} By Proposition~\ref{prop:bn} and \eqref{eq:eight},
\[
  \bigl\|\tildem{p_0}{p}\bigr\|_{\Pjoint,B}
  \le\sqrt{18}\,\HeMAR\;\!\Bigl(p_0,\tfrac{p_0+p}{2}\Bigr)
  \le\sqrt{18}\cdot\tfrac1{\sqrt2}\,\HeMAR(p_0,p)\le 3\delta
\]
for $\HeMAR(p_0,p)\le\delta$, the second inequality by the upper half of
\eqref{eq:eight}. Thus $\widetilde{\mathcal M}_{n,\delta}$ has envelope of Bernstein ``norm'' bounded by $3\delta$.

\emph{(ii-b) Bracket transform (Bernstein $\to$ complete-data Hellinger).} The class
$\widetilde{\mathcal M}_{n,\delta}$
is the image of the $\HeMAR$-ball $\Pobsloc{\delta}$ under the transform
$p\mapsto \tildem{p_0}{p}$, and by \eqref{eq:ballincl}
$\Pobsloc{\delta}\subseteq \Pnloc{\delta/\sqrt{c_0}}$; hence it
suffices to bracket the transforms of $p\in\Pnloc{\delta/\sqrt{c_0}}$.
Let $[q_1,q_2]$, with $q_1\le q_2$ not necessarily densities, be an $\varepsilon$-bracket of
$\Pnloc{\delta/\sqrt{c_0}}$ in $\He$. By
monotonicity of marginalization (\S5) any $p$ in $[q_1,q_2]$ has
$\tildem{p_0}{p}\in[\tildem{p_0}{q_1},\tildem{p_0}{q_2}]$ (the
transform $q\mapsto \tildem{p_0}{q}$ is increasing in $q$), and the second bound of
Proposition~\ref{prop:bn} gives
\[
  \bigl\|\tildem{p_0}{q_1}-\tildem{p_0}{q_2}\bigr\|_{\Pjoint,B}^2
  \le 4\,\HeMAR^2\bigl(p_0+q_1,\,p_0+q_2\bigr)
  \le 4\,\HeMAR^2(q_1,q_2)
  \le 4\,\He^2(q_1,q_2)<4\varepsilon^2,
\]
the second inequality holds from Lemma~\ref{lem:shift}, and the third
by the contraction \eqref{eq:contraction}. Hence each complete-data $\varepsilon$-bracket
yields a $2\varepsilon$-bracket of $\widetilde{\mathcal M}_{n,\delta}$ in $\|\cdot\|_{\Pjoint,B}$,
so
\[
  N_{[\,]}\bigl(2\varepsilon,\widetilde{\mathcal M}_{n,\delta},\|\cdot\|_{\Pjoint,B}\bigr)
  \le N_{[\,]}\bigl(\varepsilon,\Pnloc{\delta/\sqrt{c_0}},\He\bigr).
\]
Therefore, writing $\widetilde{\mathcal M}_{n,\delta}$ for the Bernstein-norm class with its
$3\delta$ envelope from (ii-a), the change of variables $\varepsilon\mapsto2\varepsilon$ and
concavity of $J_{[\,]}$ give
\begin{align*}
    &J_{[\,]}\bigl(3\delta,\widetilde{\mathcal M}_{n,\delta},\|\cdot\|_{\Pjoint,B}\bigr)
  =\;\int_0^{3\delta}\!\!\sqrt{1+\log N_{[\,]}(\varepsilon,\widetilde{\mathcal M}_{n,\delta},\|\cdot\|_{\Pjoint,B})}\,d\varepsilon \\
  \le \; &2\!\int_0^{3\delta/2}\!\!\sqrt{1+\log N_{[\,]}(\varepsilon,\Pnloc{\delta/\sqrt{c_0}},\He)}\,d\varepsilon \\
  = \;&2\,J_{[\,]}\bigl(\tfrac32\delta,\Pnloc{\delta/\sqrt{c_0}},\He\bigr) \\
  \le \;&3\,J_{[\,]}\Big(\delta,\Pnloc{\delta/\sqrt{c_0}},\He\Big),
\end{align*}
the first inequality substituting $N_{[\,]}(2\varepsilon,\widetilde{\mathcal M}_{n,\delta},
\|\cdot\|_{\Pjoint,B})\le N_{[\,]}(\varepsilon,\Pnloc{\delta/\sqrt{c_0}},\He)$ and rescaling
$\varepsilon\mapsto2\varepsilon$ (Jacobian $2$, upper limit $3\delta\mapsto\tfrac32\delta$); the
middle equality is the definition of $J_{[\,]}$ at upper limit $\tfrac32\delta$; 
the last inequality since the integrand
$\varepsilon\mapsto\sqrt{1+\log N_{[\,]}(\varepsilon,\cdot)}$ is nonincreasing,
so $J_{[\,]}$ is concave with $J_{[\,]}(0)=0$; hence
$\varepsilon\mapsto J_{[\,]}(\varepsilon)/\varepsilon$ is nonincreasing, giving
$J_{[\,]}(c\delta)\le c\,J_{[\,]}(\delta)$ for $c\ge 1$, here $c=\tfrac32$.

Now combine (ii-a) and (ii-b): 
\begin{align*}
    &\mathbb{E}^{*}\!\!\sup_{\HeMAR(p_0,p)\le\delta}\!
      \sqrt{n}\,\bigl|(\Pn-\Pjoint)\,\widetilde{\ell}_{p_0,p}\bigr| \\
  \;\lesssim\;
  &J_{[\,]}\!\bigl(3\delta,\widetilde{\mathcal{M}}_{n,\delta},\|\cdot\|_{\Pjoint,B}\bigr)
  \left[\,1+
    \frac{J_{[\,]}\!\bigl(3\delta,\widetilde{\mathcal{M}}_{n,\delta},\|\cdot\|_{P_0,B}\bigr)}{9\delta^{2}\sqrt{n}}
  \right] \\
  \;\lesssim\;
  &\,J_{[\,]}\bigl(\delta,\Pnloc{\delta/\sqrt{c_0}},\He\bigr)
  \left[\,1+
    \frac{\,J_{[\,]}\bigl(\delta,\Pnloc{\delta/\sqrt{c_0}},\He\bigr)}{3\delta^{2}\sqrt{n}}
  \right]
  =:\Psi_n(\delta).
\end{align*}
The first `$\lesssim$' follows from (ii-a) and \citet[Theorem~2.14.18$'$]{vandervaart2023weak}; the second by (ii-b) and that the map $f(x) = x(1 + x/(9 \delta^2 \sqrt{n}))$ is nondecreasing on $x \in [0, \infty)$.

First, we note that $\delta \mapsto f(\delta):=J_{[\,]}\bigl(\delta,\Pnloc{\delta/\sqrt{c_0}},\He\bigr)$ is increasing, concave and has $f(0)=0$. From this it can be shown that $f(\delta)/\delta^{\alpha}$ is nonincreasing on $(0,\infty)$ for any $\alpha \geq 1$. Thus, for any $\alpha \in (1, 2)$, $\delta \mapsto g(\delta) := \Psi_n(\delta)/\delta^\alpha$ is nonincreasing as 
$
g(\delta)
  =\frac{J_{[]}(\delta)}{\delta}\,\delta^{\,1-\alpha}
  +\frac{1}{c\sqrt n}\left(\frac{J_{[]}(\delta)}{\delta}\right)^{\!2}\delta^{-\alpha}
$
is a sum of products of nonincreasing parts. By Lemma~\ref{lem:majorant} there exists $\phi_n(\delta) = \sup_{\underline{\delta}_n\leq s \leq \delta} \Psi(s)$ increasing on $[\underline{\delta}_n, \infty)$ with $\delta\mapsto\phi(\delta)/\delta^{\alpha}$ nonincreasing for some $\alpha < 2$, and $\Psi_n(\delta) \leq \phi_n(\delta)$.

\paragraph{Requirement (iii): the sequence $\delta_n$.}
Three sub-conditions on $\delta_n$ that we now verify.

\smallskip
\noindent\textbf{(iii-1)} $\phi_n(\delta_n)\le\sqrt n\,\delta_n^2$. 
First, by \eqref{eq:entropy},
$J_{[\,]}\!\bigl(s,\Pnloc{s/\sqrt{c_0}},\He\bigr)\le s^2\sqrt n$ for each
$s\in[\underline\delta_n,\delta_n]$ and that the transform is increasing in $J_{[\,]}$,
\[
  \Psi_n(s)
  =J_{[\,]}\!\bigl(s,\Pnloc{s/\sqrt{c_0}},\He\bigr)
   \left[\,1+\frac{J_{[\,]}\!\bigl(s,\Pnloc{s/\sqrt{c_0}},\He\bigr)}{3\,s^2\sqrt n}\,\right]
  \le s^2\sqrt n\left[\,1+\frac{s^2\sqrt n}{3\,s^2\sqrt n}\,\right]
  \leq \tfrac43 s^2\sqrt n,
\]
hence
\[
  \phi_n(\delta_n)
  =\sup_{\underline\delta_n\le s\le\delta_n}\Psi_n(s)
  \le\; \tfrac43 \delta_n^2\sqrt n.
\]
Absorbing the factor $\tfrac43$ into $\phi_n$ (permissible, as $\phi_n$ carries the  absolute constant from $\Psi_n(\delta)$ per \citet[Thm.~2.14.18$'$]{vandervaart2023weak}) yields $\phi_n(\delta_n)\le\sqrt n\,\delta_n^2$.

\smallskip
\noindent\textbf{(iii-2)} $\delta_n^2\ge M_n(\theta_{n,0})-M_n(\theta_n)$. With
$\tildem{p_0}{p_0}=\log1=0$,
\[
  M_n(\theta_{n,0})-M_n(\theta_n)
  =(4+2\sqrt2)^2\bigl(\Pjoint \tildem{p_0}{p_0}-\Pjoint \tildem{p_0}{p_n}\bigr),
\]
and by the upper half of Prop.~\ref{lem:kaji2obs}(i) on the pair
$(p_0,\tfrac{p_0+p_n}{2})$ together with \eqref{eq:eight},
\[
  \Pjoint \tildem{p_0}{p_0}-\Pjoint \tildem{p_0}{p_n}
  =\KLMAR\left( P_0 \; || \; \frac{P_0 + P_{n}}{2}  \right)
  \le 3\,\HeMAR^2\!\Bigl(p_0,\tfrac{p_0+p_n}{2}\Bigr)
  \le \tfrac32\,\HeMAR^2(p_0,p_n)
  =\tfrac32\,\underline\delta_n^2 .
\]
Here the second inequality follows from Prop.~\ref{lem:kaji2obs}(i) since for all $m$,
$$\one\left \{\frac{2p_0^{(m)}}{p_0^{(m)}+p_n^{(m)}}  >4 \right\}=0.$$

Hence $M_n(\theta_{n,0})-M_n(\theta_n)\le(4+2\sqrt2)^2\tfrac32\,\underline\delta_n^2$, so setting
\[
  \delta_n=(4+2\sqrt2)\sqrt{\tfrac32}\,\underline\delta_n=2(\sqrt6+\sqrt3)\,\underline\delta_n
  \quad\Longrightarrow\quad
  \delta_n^2=(4+2\sqrt2)^2\tfrac32\,\underline\delta_n^2\ge M_n(\theta_{n,0})-M_n(\theta_n),
\]
using \eqref{eq:approx} ($\underline\delta_n=\HeMAR(p_0,p_n)$).

\smallskip
\noindent\textbf{(iii-3)} $\delta_n\ge\underline\delta_n$. Immediate from
$\delta_n=2(\sqrt6+\sqrt3)\,\underline\delta_n$ and $2(\sqrt6+\sqrt3)>1$.

\paragraph{The approximate-maximizer / centering step.}
It remains to verify $\mathbb M_n(\hat\theta_n)\ge\mathbb M_n(\theta_n)-O_P(\delta_n^2)$, i.e.\
$\Pn \tildem{p_0}{p_{\hat\theta_n}}\ge\Pn \tildem{p_0}{p_n}-O_P(\delta_n^2)$. By concavity of
$\log$ \textup{(}identical to \citeauthor{kaji2026hellinger}\textup{)},
\[
  2\Pn \tildem{p_0}{p_{\hat\theta_n}}
  \ge \frac{1}{n} \sum_{i=1}^{n} \log\frac{\hat{p}_n^{(M_i)}(X_i^{(M_i)})}{p_0^{(M_i)}(X_i^{(M_i)})} 
  =\frac{1}{n} \sum_{i=1}^{n} \log\frac{p_n^{(M_i)}(X_i^{(M_i)})}{p_0^{(M_i)}(X_i^{(M_i)})}  + \frac{1}{n} \sum_{i=1}^{n} \log\frac{\hat{p}_n^{(M_i)}(X_i^{(M_i)})}{p_n^{(M_i)}(X_i^{(M_i)})}
\]
the first step using $\log\tfrac{a+b}{2}\ge\tfrac12(\log a+\log b)$ and the fact that $\log\tfrac{p_0^{(m)}(x^{(m)})}{p_0^{(m)}(x^{(m)})}=0$ for all $x$ and $m$. The last term is $\ge-O_P(\delta_n^2)$ by the
approximate-maximizer hypothesis \eqref{eq:approxmax}. Writing $\rho^{(m)}_{p_0, p_n}(x)=p_0^{(m)}(x)
/p_n^{(m)}(x)$ as above, and 
$$\frac{1}{n} \sum_{i=1}^{n} \log\frac{p_n^{(M_i)}(X_i^{(M_i)})}{p_0^{(M_i)}(X_i^{(M_i)})}
=-(\Pn-\Pjoint)\log\tilde{\rho}_{p_0, p_n}-\Pjoint \log \tilde{\rho}_{p_0, p_n},$$
it suffices 
to show
\begin{align}\label{eq:center}
&-\tfrac{1}{2}(\Pn-\Pjoint)\log\tilde{\rho}_{p_0, p_n}-\tfrac{1}{2}\Pjoint \log \tilde{\rho}_{p_0, p_n} - \Pn \tildem{p_0}{p_n}
  \;\ge\;-O_P(\delta_n^2) \nonumber \\
  \iff &-\tfrac{1}{2}(\Pn-\Pjoint)\log\tilde{\rho}_{p_0, p_n}-\tfrac{1}{2}\Pjoint \log \tilde{\rho}_{p_0, p_n} \\
  &- (\Pn \tildem{p_0}{p_n} 
  -\Pjoint \tildem{p_0}{p_n}) - \Pjoint \tildem{p_0}{p_n}
  \;\ge\;-O_P(\delta_n^2). \nonumber
\end{align}

\emph{Nonrandom terms.} As above, $-\Pjoint \log\tilde{\rho}_{p_0, p_n}=-\KLMAR(P_{0}\|P_{n})$
and $\Pjoint \tildem{p_0}{p_n}=-\KLMAR(P_{0}\|\tfrac{P_{0}+P_{n}}{2})$.
Proposition~\ref{lem:kaji2obs} (i) gives $\KLMAR(P_{0}\|P_{n})\le3\HeMAR^2(p_0,
p_n)+\Pjoint[\log\tilde{\rho}_{p_0, p_n}\one\{\tilde{\rho}_{p_0, p_n}>4\}]$, and Proposition~\ref{lem:kaji2obs} (iii) with
$k=1,k'=2$ with \eqref{eq:tail} bounds the exposed tail,
\begin{align*}
    &\Pjoint\big[\log\tilde{\rho}_{p_0, p_n}\one\{\tilde{\rho}_{p_0, p_n}>4\}\big] \\
  \le\; &4\,\HeMAR\bigl(p_0,p_n\bigr)
     \bigl(  \E_{(X,M) \sim \Pjoint}\!\left[\Bigl(\log\rho_{ p_0, p_n}^{(M)}(X^{(M)})\Bigr)^2\,\one\{\rho_{ p_0, p_n}^{(M)}(X^{(M)})>4\}\right]\bigr)^{1/2} \\
  \le\; &4\,\delta_n\sqrt{M_{\mathrm{MAR}}}\,\delta_n
  = 4\sqrt{M_{\mathrm{MAR}}}\,\delta_n^2,
\end{align*}
using \eqref{eq:approx} and \eqref{eq:tail}. Hence $\KLMAR(P_{0}\|P_{n})\le(3+4\sqrt{M_{\mathrm{MAR}}})
\delta_n^2$, and the nonrandom part of \eqref{eq:center} is
\begin{align*}
    &-\tfrac12 \Pjoint \log\tilde{\rho}_{p_0, p_n}-\Pjoint \tildem{p_0}{p_n}
  =-\tfrac12\KLMAR(P_{0}\|P_{n})
   +\KLMAR\bigl(P_{0}\|\tfrac{P_{0}+P_{n}}{2}\bigr) \\
  \ge &-\tfrac12\bigl(3+4\sqrt{M_{\mathrm{MAR}}}\bigr)\delta_n^2
  \;\ge\; -O_P(\delta_n^2),
\end{align*}
the first inequality applying the bound $\KLMAR(P_{0}\|P_{n})\le(3+4\sqrt{M_{\mathrm{MAR}}})\delta_n^2$
just obtained and dropping $\KLMAR\bigl(P_{0}\|\tfrac{P_{0}+P_{n}}{2}\bigr)$, which is nonnegative since $\KLMAR\ge0$
\textup{(}Prop.~\ref{informationloss_KL}\textup{)}; the last by \eqref{eq:approx}. The tail constant
$M_{\mathrm{MAR}}$ enters here through $3+4\sqrt{M_{\mathrm{MAR}}}$ (and, below, through
$8+M_{\mathrm{MAR}}$ in the variance).

\emph{Random terms.} We control each of the two empirical-process terms in \eqref{eq:center}. 
For $-\tfrac12(\Pn-\Pjoint)\log\tilde{\rho}_{p_0, p_n}$: 
\begin{align*}
    &\operatorname{Var}\!\bigl((\Pn-\Pjoint)\log\tilde\rho_{p_0,p_n}\bigr)
  =\frac1n\operatorname{Var}_{\Pjoint}\!\bigl(\log\tilde\rho_{p_0,p_n}\bigr) \\
  \;\le\;
  &\frac{1}{n}\Pjoint\bigl(\log\tilde\rho_{p_0,p_n}\bigr)^2 
  =
  \frac{1}{n}\Vt_2(P_{0}\|P_{n})\\
  \;\le\;
  &\frac{1}{n} \big( 8\,\HeMAR^2(p_0,p_n)+\Pjoint\tilde g_2\big)
  \;\le\;\frac{8+M_{\mathrm{MAR}}}{n} \;\delta_n^2.
\end{align*}
The equality by i.i.d. data; the first `$\leq$' by that centered second moment upper bounded by raw second moment; the second `$\leq$' by
Proposition~\ref{lem:kaji2obs} (ii) with $k=2$ giving the constant
$C_2^{*}=4\bigl[2(\log4)^0\vee(2/e)^2\bigr]=8$; and the last `$\leq$' by
assumption \eqref{eq:approx} and \eqref{eq:tail}.
So by Chebyshev $(\Pn-\Pjoint)\log\tilde\rho_{p_0,p_n}=O_P(\delta_n/\sqrt n)=O_P(\delta_n^2\vee n^{-1})$.

For $-(\Pn - \Pjoint)\tildem{p_0}{p_n}$, the centered transform at the
fixed point $\theta_n$:
\begin{align*}
  &\operatorname{Var}\!\bigl((\Pn-\Pjoint)\tildem{p_0}{p_n}\bigr)
  =\tfrac1n\operatorname{Var}_{\Pjoint}\!\bigl(\tildem{p_0}{p_n}\bigr)
  \le\tfrac1n\bigl\|\tildem{p_0}{p_n}\bigr\|_{\Pjoint,2}^2
  \le\tfrac1n\bigl\|\tildem{p_0}{p_n}\bigr\|_{\Pjoint,B}^2,
\end{align*}
where the last `$\leq$' follows from the pointwise inequality $x^2\le2(e^{|x|}-1-|x|)$. From Proposition~\ref{prop:bn}, \eqref{eq:eight}, and assumption \eqref{eq:approx}
\[
  \bigl\|\tildem{p_0}{p_n}\bigr\|_{\Pjoint,B}
  \le\sqrt{18}\,\HeMAR\;\!\Bigl(p_0,\tfrac{p_0+p_n}{2}\Bigr)
  \le \;\sqrt{18}\cdot\tfrac{1}{\sqrt2}\,\HeMAR(p_0,p_n)
  \le \;3\delta_n.
\]
Hence
$\operatorname{Var}\bigl((\Pn-\Pjoint)\tildem{p_0}{p_n}\bigr)\le 9\delta_n^2/n$ and
$
  (\Pn - \Pjoint)\tildem{p_0}{p_n}
  =O_P(\delta_n^2\vee n^{-1}).
$
Both are $O_P(\delta_n^2\vee n^{-1})$. 
Finally, collecting the nonrandom and random bounds gives \eqref{eq:center}, hence $\mathbb M_n(\hat\theta_n)\ge\mathbb M_n(\theta_n)-O_P(\delta_n^2)$.

\end{proof}

For the next result, we note that for any probability measure $P_0$ and $\epsilon > 0$, there exists $R=R(\epsilon)$ large enough such that $P_{0}\bigl(\lVert X\rVert> R \bigr)\le \epsilon$.

\begin{lemma}[Localization]\label{lem:loc}
Let $R_{\star}=R_{\star}(P_{0})>0$ satisfy $P_{0}\bigl(\lVert X\rVert>R_{\star}\bigr)\le 1/16$, and set
\[
  \bar\sigma^{2}\;:=\;\frac{512\,R_{\star}^{2}}{\pi}.
\]
Let $F$ be any probability measure on $\R^{d}$ and $\Sigma$ any positive definite $d\times d$ matrix. Then
\[ 
\He\bigl(p_{0},p_{F,\Sigma}\bigr)\le 1 \implies \mathrm{eig}_{d}(\Sigma)\le\bar\sigma^{2}.
\]
\end{lemma}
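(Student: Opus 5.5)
The plan is to argue by contraposition. We show that if $\mathrm{eig}_d(\Sigma)>\bar\sigma^2$, then $p_{F,\Sigma}$ puts too little mass on the ball $A:=\{\lVert x\rVert\le R_\star\}$, where $P_0$ puts at least $15/16$ of its mass. This forces the Hellinger affinity below $1/2$, i.e.\ $\He^2(p_0,p_{F,\Sigma})>1$. Throughout, $\He^2$ is the unnormalized squared distance of Section~\ref{sec_notation}. Hence $\He^2(p,q)=2\bigl(1-\int\sqrt{pq}\bigr)$, and $\He\le1$ is equivalent to $\int\sqrt{p_0\,p_{F,\Sigma}}\ge 1/2$.

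\emph{Step 1 (affinity bound via a partition).} Write $q:=p_{F,\Sigma}$ and $Q$ for its law. Split the affinity integral over $A$ and $A^c$ and apply Cauchy--Schwarz on each piece:
\[
\int\sqrt{p_0q}\;\le\;\sqrt{P_0(A)Q(A)}+\sqrt{P_0(A^c)Q(A^c)}\;\le\;\sqrt{Q(A)}+\tfrac14 ,
\]
using $P_0(A)\le1$, $Q(A^c)\le1$ and $P_0(A^c)\le1/16$. Consequently, $\He\le1$ implies $\sqrt{Q(A)}\ge1/4$, that is, $Q(A)\ge 1/16$.

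\emph{Step 2 (small ball mass under a large eigenvalue).} Let $v$ be a unit eigenvector for $\lambda:=\mathrm{eig}_d(\Sigma)$. Under $Q$ we may write $X=Z+\varepsilon$ with $Z\sim F$ independent of $\varepsilon\sim\Gauss{0}{\Sigma}$, so $v^\top\varepsilon\sim\Gauss{0}{\lambda}$. Since $A\subseteq\{|v^\top x|\le R_\star\}$, conditioning on $Z$ gives
\[
Q(A)\le \sup_{c\in\R}\Prob\bigl(|v^\top\varepsilon+c|\le R_\star\bigr)\le \frac{2R_\star}{\sqrt{2\pi\lambda}}=R_\star\sqrt{\frac{2}{\pi\lambda}},
\]
because the $\Gauss{0}{\lambda}$ density is bounded by $(2\pi\lambda)^{-1/2}$. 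This bound is uniform in $F$, which is why the lemma holds for arbitrary mixing measures, not only those in $\mathcal F_{\N,a}$.

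\emph{Step 3 (conclusion).} Combining the two steps, $\He\le1$ yields $1/16\le R_\star\sqrt{2/(\pi\lambda)}$, i.e.\ $\lambda\le 2\cdot 256\,R_\star^2/\pi=\bar\sigma^2$. This is exactly the claimed cap.

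There is no real obstacle here; the only points requiring care are bookkeeping ones. First, the Hellinger normalization must be the paper's unnormalized one, so that $\He\le1$ means affinity $\ge1/2$. Second, the constant must come out as $512/\pi$, which it does with the crude density bound in Step 2. Third, Step 2 must reduce to a one-dimensional projection so that the location of the mixing measure $F$ plays no role.
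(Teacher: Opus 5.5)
Your proof is correct and follows the paper's proof essentially step for step. You use the same affinity identity $\He^2=2(1-\rho)$, the same Cauchy--Schwarz split over $A=\{\lVert x\rVert\le R_\star\}$ giving $Q(A)\ge 1/16$, the same projection onto the top eigenvector with the density bound $(2\pi\lambda)^{-1/2}$, and the same arithmetic yielding $512R_\star^2/\pi$. The ``contraposition'' framing is only cosmetic, since you actually argue directly, just as the paper does.
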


\begin{proof}
Write $Q:=P_{F,\Sigma}$ with density $q=p_{F,\Sigma}$ and let $\rho(p_{0},q)$ denote the Hellinger affinity.
With the convention $\He^{2}(p_{0},q)=\int(\sqrt{p_{0}}-\sqrt q)^{2}$ used throughout the paper,
\[
  \He^{2}(p_{0},q)\;=\;2\bigl(1-\rho(p_{0},q)\bigr),
  \qquad\text{so}\qquad
  \He(p_{0},q)\le 1\;\Longrightarrow\;\rho(p_{0},q)\ge\tfrac12 .
\]
Let $A:=\{x\in\R^{d}:\lVert x\rVert\le R_{\star}\}$. By assumption, $P_0(A^c) \leq 1/16$. So
\begin{align*}
     \rho(p_{0},q)\;&=\;\int_{A}\sqrt{p_{0}}\,\sqrt{q}\;+\;\int_{A^{c}}\sqrt{p_{0}}\,\sqrt{q}\\
     &\leq \Bigl(\int_{A}p_{0}\Bigr)^{1/2}\Bigl(\int_{A}q\Bigr)^{1/2}  + \Bigl(\int_{A^{c}}p_{0}\Bigr)^{1/2}\Bigl(\int_{A^{c}}q\Bigr)^{1/2}\\
     &\leq \sqrt{Q(A)} + \sqrt{1/16}.
\end{align*}
Hence $Q(A) \geq (1/2 - 1/4)^2 = 1/16$.

Represent $X\sim Q$ as $X=Z+\Sigma^{1/2}\xi$ with $Z\sim F$ independent of $\xi\sim\mathcal N(0,I_{d})$, so that $X$ has density $q=p_{F,\Sigma}$. Let $v$ be a unit eigenvector of $\Sigma$ associated with $\mathrm{eig}_{d}(\Sigma)$, the largest eigenvalue of $\Sigma$.

As $A \subseteq \{ |v^\top x| \leq R_* \}$, and using $v^{\top}X\mid Z=z\sim\mathcal N(v^{\top}z,\mathrm{eig}_{d}(\Sigma))$, we have
\begin{align*}
  Q(A)
   &\;\le\;\Prob\bigl(\lvert v^{\top}X\rvert\le R_{\star}\bigr)
    \;=\;\int\Prob\bigl(\lvert\mathcal N(v^{\top}z,\mathrm{eig}_{d}(\Sigma))\rvert\le R_{\star}\bigr)\,\mathrm dF(z)\\
   &\;\le\;\sup_{\mu\in\R}\Prob\bigl(\lvert\mathcal N(\mu,\mathrm{eig}_{d}(\Sigma))\rvert\le R_{\star}\bigr)
    \;\le\;\frac{2R_{\star}}{\sqrt{2\pi\mathrm{eig}_{d}(\Sigma)}},
\end{align*}
where the first equality follows by an application of Fubini's Theorem, the second inequality uses that $F$ is a probability measure, so the average over $z$ is at most the supremum over $\mu$; and the last uses that the $\mathcal N(\mu,\lambda)$ density is at most $(2\pi\lambda)^{-1/2}$ for every $\mu$, integrated over an interval of length $2R_{\star}$.

Combining the upper and lower bounds of $Q(A)$ gives 
\begin{align*}
    \frac1{16}\;\le\;\frac{2R_{\star}}{\sqrt{2\pi\mathrm{eig}_{d}(\Sigma)}}
  \quad\Longrightarrow\quad
  \sqrt{\mathrm{eig}_{d}(\Sigma)}\;\le\;\frac{32R_{\star}}{\sqrt{2\pi}}
  \quad\Longrightarrow\quad
  \mathrm{eig}_{d}(\Sigma)\;\le\;\frac{512R_{\star}^{2}}{\pi}= \bar\sigma^{2}.
\end{align*}
\end{proof}

\begin{cor}[The local sieve lies in a capped sieve]\label{cor:incl}
Set, with $\bar\sigma$ as in Lemma~\ref{lem:loc},
\[
  \mathcal D^{\star}_{\sigma}:=\bigl\{\Sigma:\sigma^{2}\le\mathrm{eig}_{1}(\Sigma)\le\mathrm{eig}_{d}(\Sigma)\le\bar\sigma^{2}\bigr\},
  \qquad
  \Pnsieve^{\star}:=\bigl\{p_{F,\Sigma}:F\in\mathcal F_{\N,a},\ \Sigma\in\mathcal D^{\star}_{\sigma}\bigr\}.
\]
Then for all $n$ large enough that $\delta_{n}/\sqrt{c_{0}}\le 1$ and $\bar\sigma^{2}<\sigma^{2}(1+\delta_{n}^{2})^{n}$,
\[
  \Pnloc{\delta_{n}/\sqrt{c_{0}}}\;\subseteq\;\Pnsieve^{\star}\;\subseteq\;\Pnsieve,
\]
so the local sieve is eventually contained in the eigenvalue-capped sieve, by Lemma~\ref{lem:loc}, which in turn is a subset of $\Pnsieve$.
\end{cor}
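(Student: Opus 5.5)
The plan is to prove the two inclusions separately; the second is essentially immediate.

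For $\Pnsieve^{\star}\subseteq\Pnsieve$, observe that the two sieves use the same class of mixing measures $\mathcal F_{\N,a}$ and differ only in the allowed covariances. It therefore suffices to show $\mathcal D^{\star}_{\sigma}\subseteq\mathcal D_{\sigma,\delta_n}$. Both sets impose the same lower eigenvalue constraint $\sigma^2\le\mathrm{eig}_1(\Sigma)$. For the upper constraint, any $\Sigma\in\mathcal D^\star_\sigma$ satisfies
\[
\mathrm{eig}_d(\Sigma)\le\bar\sigma^2<\sigma^2(1+\delta_n^2)^n
\]
by the second standing hypothesis on $n$. This is exactly the strict upper bound defining $\mathcal D_{\sigma,\delta_n}$.

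For the first inclusion, take any $p\in\Pnloc{\delta_n/\sqrt{c_0}}$. By the definition of the localized sieve, $p\in\Pnsieve$, so $p=p_{F,\Sigma}$ for some $F\in\mathcal F_{\N,a}$ and some $\Sigma\in\mathcal D_{\sigma,\delta_n}$, with $\He(p_0,p)\le\delta_n/\sqrt{c_0}$. The first hypothesis on $n$ gives $\delta_n/\sqrt{c_0}\le1$, so $\He(p_0,p_{F,\Sigma})\le1$. Now apply Lemma~\ref{lem:loc}. Its hypotheses require only that $F$ be a probability measure, which holds because the weights lie in the simplex $\Delta_{\N}$, and that $\Sigma$ be positive definite, which holds because $\mathrm{eig}_1(\Sigma)\ge\sigma^2>0$. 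The lemma yields $\mathrm{eig}_d(\Sigma)\le\bar\sigma^2$. Combined with the lower bound $\sigma^2\le\mathrm{eig}_1(\Sigma)$, inherited from $\mathcal D_{\sigma,\delta_n}$, this gives $\Sigma\in\mathcal D^\star_\sigma$ and hence $p\in\Pnsieve^\star$.

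The only point that needs care is confirming that the two conditions on $n$ hold eventually, so that the conclusion is not vacuous. Since $\delta_n=\CIV n^{-\beta/(2\beta+d)}(\log n)^t\to0$ and $c_0$ is fixed, the condition $\delta_n/\sqrt{c_0}\le1$ holds for all large $n$. For the second condition, use $\sigma^{-2}=n\delta_n^2$ together with $(1+\delta_n^2)^n\ge 1+n\delta_n^2$, which gives
\[
\sigma^2(1+\delta_n^2)^n\ge \frac{1+n\delta_n^2}{n\delta_n^2}>1.
\]
This lower bound alone does not suffice when $\bar\sigma^2>1$. Instead, use $(1+\delta_n^2)^n\ge \exp\bigl(n\delta_n^2/2\bigr)$, valid once $\delta_n^2\le1$ because $\log(1+x)\ge x/2$ on $[0,1]$. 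Then
\[
\sigma^2(1+\delta_n^2)^n\ge e^{n\delta_n^2/2}/(n\delta_n^2)\to\infty,
\]
since $n\delta_n^2\to\infty$. Because $\bar\sigma^2$ depends only on $P_0$ and not on $n$, the condition $\bar\sigma^2<\sigma^2(1+\delta_n^2)^n$ therefore holds for all large $n$.
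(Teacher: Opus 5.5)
Your proof is correct and follows the paper's argument: Lemma~\ref{lem:loc} gives the first inclusion, and $\mathcal D^{\star}_{\sigma}\subseteq\mathcal D_{\sigma,\delta_n}$ gives the second. Your explicit bound $\sigma^2(1+\delta_n^2)^n\ge e^{n\delta_n^2/2}/(n\delta_n^2)\to\infty$ simply fills in the paper's one-line claim that the right side diverges because $n\delta_n^2\to\infty$.
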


\begin{proof}
The first inclusion is Lemma~\ref{lem:loc} applied to each $p_{F,\Sigma}\in\Pnloc{\delta_{n}/\sqrt{c_{0}}}$,
whose Hellinger distance to $p_{0}$ is at most $\delta_{n}/\sqrt{c_{0}}\le1$.
The second holds because $\mathcal D^{\star}_{\sigma}\subseteq\mathcal D_{\sigma,\delta_{n}}$ once
$\bar\sigma^{2}<\sigma^{2}(1+\delta_{n}^{2})^{n}$, which is eventually true since the right side diverges (as $n\delta_{n}^{2} \to \infty$).
\end{proof}

We next derive two auxiliary lemmas that will help bound the bracketing number of the sieve in Section \ref{sec_densityestimation}. 
We recall that each $p_{F,\Sigma}\in\mathcal{P}_n$ is indexed by
$\theta=(w,z,\Sigma)\in\Theta_n
:=\Delta_{\N}\times[-a,a]^{d\N}\times\mathcal D_{\sigma,\delta_{n}}$, and write
$\Theta^{\star}_n:=\Delta_{\N}\times[-a,a]^{d\N}\times\mathcal D^{\star}_{\sigma}\subseteq\Theta_n$
for the parameters of $\Pnsieve^{\star}$. We
metrize both by
\begin{equation*}
\varrho(\theta,\theta')
   :=\lVert w-w' \rVert_{1}
     +\max_{1\le j\le \N}\lVert z_{j}-z_{j}' \rVert_{\infty}
     +\lVert \Sigma-\Sigma' \rVert_{\mathrm{op}},
\end{equation*}
where $\max_{1\le j\le \N}\lVert z_{j}-z_{j}'\rVert_{\infty} = \max_{j,k}|z_{j,k} - z^{'}_{j,k}|$. 

\begin{lemma}[Lipschitz in parameter, with a Gaussian envelope]\label{lem:lip}
Define the function $G_{n}:\mathbb{R}^{d}\to[0,\infty)$, as
\[
  G_{n}(x):=\sup_{\substack{\Sigma\in\mathcal D^{\star}_{\sigma}\\
     \lVert z \rVert_{\infty}\le a}}
     \bigl(1+\lVert \Sigma^{-1/2}(x-z) \rVert_{2}^{2}\bigr)\phi(x-z;\Sigma).
\]
Then there are constants $C_{\mathrm L}=C_{\mathrm L}(d)$ and $C_{G}=C_{G}(d,\bar\sigma)$, the latter depending on $P_{0}$ only through the radius $R_{\star}$ of Lemma~\ref{lem:loc}, such that
$\int_{\mathbb{R}^{d}}G_{n}\le C_{G}\,\sigma^{-2}(a/\sigma)^{d}$, and, for all
$\theta,\theta'\in\Theta^{\star}_n$ and all $x\in\mathbb{R}^{d}$,
\begin{align}\label{eq:lipbound}
  \lvert p_{F,\Sigma}(x)-p_{F',\Sigma'}(x) \rvert
   \le C_{\mathrm L}\,\sigma^{-(d+2)}\,G_{n}(x)\,\varrho(\theta,\theta'). 
\end{align}
\end{lemma}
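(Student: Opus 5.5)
We will prove the two claims of Lemma~\ref{lem:lip} separately: first the integral bound on the envelope $G_n$, then the parameter-Lipschitz inequality \eqref{eq:lipbound}. The Lipschitz bound comes from controlling the derivatives of $\phi(x-z;\Sigma)$ in $z$ and $\Sigma$ by a multiple of the envelope integrand.

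\emph{Envelope integral.} For $\Sigma\in\mathcal D^\star_\sigma$ we have $\sigma^2 I_d\preceq\Sigma\preceq\bar\sigma^2 I_d$. Put $u=\Sigma^{-1/2}(x-z)$. Then
\[
\phi(x-z;\Sigma)\le(2\pi)^{-d/2}\sigma^{-d}e^{-\|u\|^2/2},
\qquad
\|u\|^2\ge\|x-z\|^2/\bar\sigma^2 .
\]
Since $(1+s)e^{-s/2}\le 2e^{-s/4}$, this gives
\[
G_n(x)\lesssim\sigma^{-d}\sup_{\|z\|_\infty\le a}e^{-\|x-z\|^2/(4\bar\sigma^2)} .
\]
This supremum equals $1$ on the cube $[-a-1,a+1]^d$, and outside it decays like $e^{-\mathrm{dist}(x,[-a,a]^d)^2/(4\bar\sigma^2)}$. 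Integrating therefore gives
\[
\int G_n\lesssim\sigma^{-d}\bigl((2a+2)^d+C(\bar\sigma,d)(a+1)^{d-1}\bigr)\lesssim C_G\,\sigma^{-d}a^d .
\]
We have $\sigma\to0$ and $a\to\infty$; for $n$ large, $\sigma\le 1\le a$, so $\sigma^{-d}a^d\le\sigma^{-2}(a/\sigma)^d$. The factor $\sigma^{-2}$ leaves slack, which we use below.

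\emph{Lipschitz step.} We use the triangle inequality
\[
|p_{F,\Sigma}-p_{F',\Sigma'}|\le\Bigl|\sum_j(w_j-w_j')\phi(x-z_j;\Sigma)\Bigr|+\sum_j w_j'\bigl|\phi(x-z_j;\Sigma)-\phi(x-z_j';\Sigma)\bigr|+\sum_j w_j'\bigl|\phi(x-z_j';\Sigma)-\phi(x-z_j';\Sigma')\bigr| .
\]
\begin{enumerate}[label=\textup{(\alph*)}]
\item The weight term is at most $\|w-w'\|_1G_n(x)$.
\item For the location term, the path $z_t=z_j+t(z_j'-z_j)$ stays in $[-a,a]^d$ by convexity, and the mean value theorem applies. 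We have $\nabla_z\phi=\Sigma^{-1}(x-z)\phi$, so $\|\nabla_z\phi\|\le\sigma^{-1}\|u\|\phi\le\sigma^{-1}(1+\|u\|^2)\phi\le\sigma^{-1}G_n(x)$. Together with $\|z-z'\|_2\le\sqrt d\|z-z'\|_\infty$, this bounds the term by $\sqrt d\,\sigma^{-1}G_n(x)\max_j\|z_j-z_j'\|_\infty$.
\item For the covariance term, the segment $\Sigma_t=(1-t)\Sigma+t\Sigma'$ stays in $\mathcal D^\star_\sigma$, since the eigenvalue bounds are preserved by convexity of the Loewner interval. The derivative of $\phi$ along the segment is
\[
\tfrac12\bigl(-\mathrm{tr}(\Sigma_t^{-1}\Delta)+u^\top\Sigma_t^{-1/2}\Delta\Sigma_t^{-1/2}u\bigr)\phi,
\qquad\Delta=\Sigma'-\Sigma .
\]
Its absolute value is at most $\tfrac12\sigma^{-2}\|\Delta\|_{\mathrm{op}}(d+\|u\|^2)\phi\le d\,\sigma^{-2}\|\Delta\|_{\mathrm{op}}G_n(x)$.
\end{enumerate}
Summing the three terms and using $\sigma\le1$ gives a factor at most $C_{\mathrm L}\sigma^{-2}\le C_{\mathrm L}\sigma^{-(d+2)}$, which is \eqref{eq:lipbound}.

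\emph{Main obstacle.} The only delicate point is the bookkeeping that keeps $G_n$ a single fixed function. The supremum must be taken over the intermediate parameters $(z_t,\Sigma_t)$, not only over the endpoints. This is why convexity of $[-a,a]^d$ and of $\mathcal D^\star_\sigma$ is needed, and why the cap $\bar\sigma$ from Lemma~\ref{lem:loc} is essential: without it the integral of $G_n$ would not be finite uniformly. The powers of $\sigma$ in the statement are generous, so matching them exactly is not a concern.
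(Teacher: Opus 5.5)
Your proof is correct and follows the paper's route. The Lipschitz part uses the same three-way split: weights, locations via the mean value theorem along a segment in $[-a,a]^d$, and covariance via the segment $\Sigma_t$ kept inside the convex set $\mathcal D^\star_\sigma$. The envelope part uses the same eigenvalue sandwich $\sigma^2 I_d\preceq\Sigma\preceq\bar\sigma^2 I_d$ and the distance to $[-a,a]^d$.

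Two of your local steps are actually tighter than the paper's. Absorbing $(1+\|u\|^2)$ into $e^{-\|u\|^2/4}$ before passing to $\|x-z\|$ avoids the paper's extra factor of order $\bar\sigma^2\sigma^{-2}$ and gives the sharper $\int G_n\lesssim\sigma^{-d}a^d$. Your gradient bound $\|\Sigma^{-1}(x-z)\|_2\le\sigma^{-1}\|u\|_2$ is also the right way to reach $\sigma^{-1}G_n(x)$ in the location step.

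One small wording fix: $\sup_{\|z\|_\infty\le a}e^{-\|x-z\|^2/(4\bar\sigma^2)}=e^{-\mathrm{dist}(x,[-a,a]^d)^2/(4\bar\sigma^2)}$ equals $1$ only on $[-a,a]^d$. On $[-a-1,a+1]^d$ it is merely at most $1$, which is all your integration uses. In fact, since $\mathrm{dist}(x,[-a,a]^d)^2=\sum_i(|x_i|-a)_+^2$, the integral factorizes exactly to $(2a+2\sqrt{\pi}\,\bar\sigma)^d$.
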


\begin{proof}
Write $p_{\theta}=p_{F,\Sigma}$ and bound the three coordinate directions
of~\eqref{eq:paramnorm} separately.

\emph{Weights.} For $(z,\Sigma)=(z',\Sigma')$,
\[
  \lvert p_{\theta}(x)-p_{\theta'}(x) \rvert
   =\Bigl\lvert\sum_{j}(w_{j}-w_{j}')\phi(x-z_{j};\Sigma)\Bigr\rvert
   \le\Bigl(\max_{j}\phi(x-z_{j};\Sigma)\Bigr)\lVert w-w' \rVert_{1}
   \le G_{n}(x)\lVert w-w' \rVert_{1}.
\]

\emph{Locations.} With
$\nabla_{z_{j}}\phi(x-z_{j};\Sigma)=\Sigma^{-1}(x-z_{j})\phi(x-z_{j};\Sigma)$
and $\lVert \Sigma^{-1} \rVert_{\mathrm{op}}\le\sigma^{-2}$, the mean value theorem together with the Cauchy-Schwarz inequality give
\[
  \lvert \phi(x-z_{j};\Sigma)-\phi(x-z_{j}';\Sigma) \rvert
   \le\sigma^{-2}\,\lVert x-\bar z \rVert_{2}\,\phi(x-\bar z;\Sigma)\,
      \lVert z_{j}-z_{j}' \rVert_{2},
\]
for an intermediate $\bar z$. Since $\mathrm{eig}_{1}(\Sigma)\ge\sigma^{2}$ implies
$\lVert x-\bar z \rVert_{2}\le\sigma\lVert \Sigma^{-1/2}(x-\bar z) \rVert_{2}$, one has
$\lVert x-\bar z \rVert_{2}\phi(x-\bar z;\Sigma)\le\sigma G_{n}(x)$, hence, summing over
the mixture and using $\lVert z_{j}-z_{j}' \rVert_{2}\le\sqrt d\,\lVert z_{j}-z_{j}' \rVert_{\infty}$,
$\lvert p_{\theta}(x)-p_{\theta'}(x) \rvert\le\sqrt d\,\sigma^{-1}G_{n}(x)\max_{j}\lVert z_{j}-z_{j}' \rVert_{\infty}$.

\emph{Covariance.} For $s\in[0,1]$, $\Sigma_{s}=(1-s)\Sigma+s\Sigma'\in\mathcal
D^{\star}_{\sigma}$ (since $\mathcal D^{\star}_{\sigma}=\{\Sigma:\sigma^{2}I\preceq\Sigma\preceq\bar\sigma^{2}I\}$ is an intersection of two Loewner half-spaces, hence convex),
\[
  \partial_{s}\phi(u;\Sigma_{s})
   =\tfrac12\phi(u;\Sigma_{s})\,
     \mathrm{tr}\!\bigl[\Sigma_{s}^{-1}(\Sigma'-\Sigma)
       (\Sigma_{s}^{-1}uu^{\top}-I)\bigr],
\]
so that, with $u=x-z$ and $\lVert \Sigma_{s}^{-1} \rVert_{\mathrm{op}}\le\sigma^{-2}$,
\[
  \lvert \partial_{s}\phi(x-z;\Sigma_{s}) \rvert
   \le\tfrac d2\sigma^{-2}\bigl(1+\lVert \Sigma_{s}^{-1/2}(x-z) \rVert_{2}^{2}\bigr)
      \phi(x-z;\Sigma_{s})\,\lVert \Sigma-\Sigma' \rVert_{\mathrm{op}}
   \le\tfrac d2\sigma^{-2}G_{n}(x)\,\lVert \Sigma-\Sigma' \rVert_{\mathrm{op}}.
\]
Integrating in $s$ and summing over the mixture gives the covariance bound.

Using the triangle inequality to bound $\lvert p_{F,\Sigma}(x)-p_{F',\Sigma'}(x) \rvert$ by the difference in the three contributions, and thus by the sum of the three bounds and further majorizing each prefactor by $\sigma^{-(d+2)}$
(valid for $\sigma<1$) yields~\eqref{eq:lipbound}.


\emph{Envelope integrability.}
We bound each integrand of $G_{n}$ by a scalar-covariance surrogate. Since
$\mathrm{eig}_{1}(\Sigma)\ge\sigma^{2}$ and $\mathrm{eig}_{d}(\Sigma)\le\bar\sigma^{2}$ for $\Sigma\in\mathcal D^{\star}_{\sigma}$,
we have $\det\Sigma\ge\sigma^{2d}$ and
\[
  \mathrm{eig}_{d}(\Sigma)^{-1}\lVert x-z \rVert_{2}^{2}
  \;\le\;\lVert \Sigma^{-1/2}(x-z) \rVert_{2}^{2}
  \;\le\;\mathrm{eig}_{1}(\Sigma)^{-1}\lVert x-z \rVert_{2}^{2},
\]
hence
$\bar\sigma^{-2}\lVert x-z \rVert_{2}^{2}
\le\lVert \Sigma^{-1/2}(x-z) \rVert_{2}^{2}
\le\sigma^{-2}\lVert x-z \rVert_{2}^{2}$.
The upper bound serves the quadratic weight and the lower bound serves the exponent.
Writing $q(t):=(1+\sigma^{-2}t^{2})\,e^{-t^{2}/(2\bar\sigma^{2})}$ for $t\ge0$,
these bounds combine into:
\begin{align*}
  \bigl(1+\lVert \Sigma^{-1/2}(x-z) \rVert_{2}^{2}\bigr)\phi(x-z;\Sigma)
   &=\bigl(1+\lVert \Sigma^{-1/2}(x-z) \rVert_{2}^{2}\bigr)
     \frac{e^{-\frac12\lVert \Sigma^{-1/2}(x-z) \rVert_{2}^{2}}}
          {(2\pi)^{d/2}\sqrt{\det\Sigma}} \\
   &\le(2\pi)^{-d/2}\sigma^{-d}
     \bigl(1+\sigma^{-2}\lVert x-z \rVert_{2}^{2}\bigr)
     e^{-\lVert x-z \rVert_{2}^{2}/(2\bar\sigma^{2})} \\
   &=(2\pi)^{-d/2}\sigma^{-d}\,q \;\!\bigl(\lVert x-z \rVert_{2}\bigr),
\end{align*}
where the inequality applies $\det\Sigma\ge\sigma^{2d}$ to the denominator,
$\lVert \Sigma^{-1/2}(x-z) \rVert_{2}^{2}\le\sigma^{-2}\lVert x-z \rVert_{2}^{2}$
to the quadratic weight, and
$\lVert \Sigma^{-1/2}(x-z) \rVert_{2}^{2}\ge\bar\sigma^{-2}\lVert x-z \rVert_{2}^{2}$
to the exponent, the latter being available because $\mathcal D^{\star}_{\sigma}$
caps the largest eigenvalue.

Write $\kappa:=\sigma^{2}/\bar\sigma^{2}\in(0,1]$ and $u:=\sigma^{-2}t^{2}$, so that
$t^{2}/(2\bar\sigma^{2})=\kappa u/2$. Since
$\sup_{u\ge0}(1+u)e^{-\kappa u/4}=\tfrac4\kappa e^{-1+\kappa/4}\le\tfrac4\kappa$,
\begin{equation}\label{eq:qmajor}
  q(t)
   =(1+\sigma^{-2}t^{2})e^{-t^{2}/(4\bar\sigma^{2})}\cdot e^{-t^{2}/(4\bar\sigma^{2})}
   \;\le\;\frac{4\bar\sigma^{2}}{\sigma^{2}}\,e^{-t^{2}/(4\bar\sigma^{2})},
   \qquad t\ge 0,
\end{equation}
a bound that is nonincreasing in $t$. Let
$\mathrm{proj}(x):=\arg\min_{\lVert z \rVert_{\infty}\le a}\lVert x-z \rVert_{2}$
denote the Euclidean projection of $x$ onto $[-a,a]^{d}$, and
$\mathrm{dist}(x):=\lVert x-\mathrm{proj}(x) \rVert_{2}$. Since the map
$z\mapsto q(\lVert x-z \rVert_{2})$ is bounded over
$\lVert z \rVert_{\infty}\le a$ by its nonincreasing majorant \eqref{eq:qmajor},
which is largest at the smallest available
$\lVert x-z \rVert_{2}=\mathrm{dist}(x)$,
\[
  G_{n}(x)
   \;\le\;(2\pi)^{-d/2}\sigma^{-d}\,
     \sup_{\lVert z \rVert_{\infty}\le a} q\!\bigl(\lVert x-z \rVert_{2}\bigr)
   \;\le\;4\bar\sigma^{2}(2\pi)^{-d/2}\sigma^{-(d+2)}\,e^{-\mathrm{dist}(x)^{2}/(4\bar\sigma^{2})}.
\]
Split $\mathbb{R}^{d}=[-a,a]^{d}\cup([-a,a]^{d})^{c}$. On the box
$\mathrm{dist}(x)=0$, so the exponential is $1$ and
\[
  \int_{[-a,a]^{d}}G_{n}
   \;\le\;4\bar\sigma^{2}(2\pi)^{-d/2}\sigma^{-(d+2)}(2a)^{d}
   \;=\;4\bar\sigma^{2}\,(2\pi)^{-d/2}\,\sigma^{-2}(2a/\sigma)^{d}.
\]
On the complement, the projection $\mathrm{proj}$ is piecewise affine \textup{(}each
piece determined by which coordinates are clamped to $\pm a$\textup{)}, and on each
piece the change of variables $y=(x-\mathrm{proj}(x))/(\sqrt2\,\bar\sigma)$ has constant
Jacobian $(\sqrt2\,\bar\sigma)^{d}$; bounding the union of the images by $\mathbb{R}^{d}$ gives
\[
  \int_{([-a,a]^{d})^{c}}G_{n}
   \;\le\;4\bar\sigma^{2}(2\pi)^{-d/2}\sigma^{-(d+2)}
     \int_{\mathbb{R}^{d}} e^{-\lVert y \rVert_{2}^{2}/2}\,(\sqrt2\,\bar\sigma)^{d}\,\mathrm{d} y
   \;=\;4\bar\sigma^{2}\,(\sqrt2\,\bar\sigma)^{d}\,\sigma^{-(d+2)}.
\]
Adding the two pieces and using $2a\ge\sqrt2\,\bar\sigma$ for large $n$,
\[
  \int_{\mathbb{R}^{d}}G_{n}
   \;\le\;4\bar\sigma^{2}\sigma^{-(d+2)}\Bigl[(2\pi)^{-d/2}(2a)^{d}+(\sqrt2\,\bar\sigma)^{d}\Bigr]
   \;\le\;C_{G}\,\sigma^{-2}\,(a/\sigma)^{d},
   \qquad
   C_{G}:=4\bar\sigma^{2}2^{d}\bigl(1+(2\pi)^{-d/2}\bigr),
\]
with $C_{G}$ depending only on $d$ and $\bar\sigma$.
\end{proof}

The bound \eqref{eq:lipbound} can then be used to bound the bracketing number of $\Pnsieve^{\star}$:


\begin{restatable}[Covering of $\Theta^{\star}_n$ $\Rightarrow$ bracketing of $\mathcal{P}^{\star}_n$]{lemma}{bracketingresult}\label{lem:cov2br}
For every $\eta\in(0,1)$,
\begin{equation}\label{eq:cov2br}
  \log{N_{[\,]}}(\eta,\Pnsieve^{\star},\He)
   \;\le\;\log{N}\!\;\Bigl(\tfrac{\eta^{2}}{8C_{\mathrm L}C_{G}(a/\sigma)^{d}\sigma^{-(d+4)}},
      \Theta^{\star}_n,\varrho\Bigr),
\end{equation}
for constants $C_{\mathrm L}=C_{\mathrm L}(d)\ge1$ and $C_{G}=C_{G}(d,\bar\sigma)\ge1$, the latter depending on $P_0$ only through the radius $R_\star$ of Lemma~\ref{lem:loc}.
Consequently there is a constant $A_{1}$ depending only on $d$ and $\bar\sigma$ such that,
for all $\eta\in(0,1)$,
\begin{equation}\label{eq:brentropy}
  \log{N_{[\,]}}(\eta,\Pnsieve^{\star},\He)
   \;\le\;A_{1}\bigl(\N+d^{2}\bigr)\Bigl[\log\tfrac a\sigma
        +\log\tfrac1\eta+\log\tfrac1\sigma\Bigr].
\end{equation}
\end{restatable}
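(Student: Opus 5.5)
The plan is to build brackets directly from a parameter-space cover, using the Lipschitz bound \eqref{eq:lipbound} of Lemma~\ref{lem:lip} and its integrable envelope $G_n$.

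Fix $\eta\in(0,1)$ and set $\epsilon:=\eta^{2}/\bigl(8C_{\mathrm L}C_{G}(a/\sigma)^{d}\sigma^{-(d+4)}\bigr)$. Let $\theta_1,\dots,\theta_K$ be an $\epsilon$-net of $\Theta^\star_n$ in $\varrho$, with $K=N(\epsilon,\Theta^\star_n,\varrho)$. For each $k$ define
\[
  \ell_k:=\bigl(p_{\theta_k}-C_{\mathrm L}\sigma^{-(d+2)}\epsilon\,G_n\bigr)\vee 0,
  \qquad
  u_k:=p_{\theta_k}+C_{\mathrm L}\sigma^{-(d+2)}\epsilon\,G_n .
\]
By \eqref{eq:lipbound}, every $p_\theta$ with $\varrho(\theta,\theta_k)\le\epsilon$ satisfies $\ell_k\le p_\theta\le u_k$ pointwise. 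So the $K$ brackets cover $\Pnsieve^\star$.

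To bound the size of each bracket, use $(\sqrt u-\sqrt\ell)^2\le u-\ell$ for $0\le\ell\le u$. This gives
\[
  \He^2(\ell_k,u_k)\le\int(u_k-\ell_k)\le 2C_{\mathrm L}\sigma^{-(d+2)}\epsilon\int G_n\le 2C_{\mathrm L}C_G\,\sigma^{-(d+4)}(a/\sigma)^{d}\epsilon=\eta^2/4,
\]
where the last inequality uses the integral bound of Lemma~\ref{lem:lip}. Hence each bracket has $\He$-size at most $\eta/2<\eta$, which proves \eqref{eq:cov2br}. Taking $C_{\mathrm L},C_G\ge1$ is harmless, since both can be enlarged.

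For \eqref{eq:brentropy}, bound the covering number of the product $\Theta^\star_n=\Delta_{\N}\times[-a,a]^{d\N}\times\mathcal D^\star_\sigma$ factor by factor, splitting $\epsilon$ into thirds, since $\varrho$ is a sum of three terms:
\begin{itemize}
\item The simplex in $\ell_1$ has covering number at most $(5/\epsilon)^{\N}$, by the standard volumetric bound.
\item The locations under the max-$\ell_\infty$ metric form a grid with at most $(2a/\epsilon+1)^{d\N}$ points.
\item The covariances sit in the operator-norm ball of radius $\bar\sigma^2$ among symmetric $d\times d$ matrices, a space of dimension $d(d+1)/2$. Its covering number is at most $(3\bar\sigma^2/\epsilon)^{d^2}$.
\end{itemize}
Summing the logarithms gives
\[
  \log N\lesssim (\N+d\N+d^2)\log(C a/\epsilon).
\]

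It remains to expand $\log(1/\epsilon)$:
\[
  \log(1/\epsilon)=2\log(1/\eta)+d\log(a/\sigma)+(d+4)\log(1/\sigma)+\log\bigl(8C_{\mathrm L}C_G\bigr).
\]
The constants $d$, $C_{\mathrm L}$ and $C_G$, together with the $\log a$ term (dominated by $\log(a/\sigma)+\log(1/\sigma)$), are absorbed into $A_1$, using that $a/\sigma\ge1$ and $\sigma<1$ for large $n$. The result is \eqref{eq:brentropy}.

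The main point needing care is that the brackets are taken with non-density endpoints. This is allowed, since bracketing numbers in $\He$ permit nonnegative integrable endpoints, as already used in Lemma~\ref{lem:entropyinherit}. Otherwise the argument is routine bookkeeping of the constants.
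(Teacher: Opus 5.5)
Your proposal is correct and follows the paper's proof essentially step for step. You build brackets $[(p_{\theta_k}-c_nG_n)_+,\,p_{\theta_k}+c_nG_n]$ from a $\varrho$-net via Lemma~\ref{lem:lip}, bound $\He^2(\ell_k,u_k)\le 2c_n\int G_n=\eta^2/4$, and then count the product cover factor by factor, expanding $\log(1/\epsilon)$ and absorbing constants into $A_1$ for large $n$. The remaining loose ends are bookkeeping that only changes the constants in $A_1$: your factor bounds are stated at mesh $\epsilon$ rather than $\epsilon/3$; the net centers must lie in $\Theta^\star_n$ for Lemma~\ref{lem:lip} to apply, which costs a factor $2$ in the mesh when covering via superset balls; and absorbing the additive constants needs $a/\sigma\to\infty$, not just $a/\sigma\ge1$.
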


\begin{proof}
The parameter set
$\Theta^{\star}_n=\Delta_{\N}\times[-a,a]^{d\N}\times\mathcal D^{\star}_{\sigma}$
is a bounded subset of a finite--dimensional Euclidean space, hence totally
bounded. Consequently, there exists a finite $r$-cover for any $r > 0$, i.e., ${N}(r,\Theta^{\star}_n,\varrho)<\infty$, and centers
$\theta^{(1)},\dots,\theta^{(K)}\in\Theta^{\star}_n$ with $K={N}(r,\Theta^{\star}_n,\varrho)$
whose closed $r$--balls cover $\Theta^{\star}_n$:
\begin{equation}\label{eq:rcover}
  \forall\,\theta\in\Theta^{\star}_n\ \ \exists\,k\in\{1,\dots,K\}:\quad
  \varrho\bigl(\theta,\theta^{(k)}\bigr)\le r.
\end{equation}
Now we construct brackets from this covering. For $\eta \in (0,1)$ arbitrary, set  $r:=\eta^{2}/(8C_{\mathrm L}C_{G}(a/\sigma)^{d}\sigma^{-(d+4)})$ and $c_{n}:=C_{\mathrm L}\sigma^{-(d+2)}r$. For each $\theta^{(k)}$, construct
\[
  \ell^{(k)}:=\bigl(p_{\theta^{(k)}}-c_{n}G_{n}\bigr)_{+},
  \qquad
  u^{(k)}:=p_{\theta^{(k)}}+c_{n}G_{n}.
\]
This gives a valid bracket, as
given any $p_{\theta}\in\Pnsieve^{\star}$, which corresponds to some $\theta\in\Theta^{\star}_n$, 
there exists some $k$ such that $\varrho\bigl(\theta,\theta^{(k)}\bigr)\le r$ which forces
$\lvert p_{\theta}-p_{\theta^{(k)}} \rvert\le C_{\mathrm L}\sigma^{-(d+2)}r\,G_{n}
=c_{n}G_{n}$ pointwise by Lemma~\ref{lem:lip}. Hence $p_{\theta} \in [\ell^{(k)}, u^{(k)}]$.

Using $(\sqrt u-\sqrt\ell)^{2}\le\lvert u-\ell \rvert$ for $u\ge\ell\ge0$, and
writing $\bar C_{G}:=C_{G}\,\sigma^{-2}(a/\sigma)^{d}$ for the envelope bound of
Lemma~\ref{lem:lip},
\[
  \He^{2}(\ell^{(k)},u^{(k)})
   =\int(\sqrt{u^{(k)}}-\sqrt{\ell^{(k)}})^{2}
   \le\int\lvert u^{(k)}-\ell^{(k)} \rvert
   \le\int 2c_{n}G_{n}
   =2c_{n}\!\int G_{n}\le 2c_{n}\bar C_{G}.
\]
Choosing $r:=\eta^{2}/(8C_{\mathrm L}\bar C_{G}\sigma^{-(d+2)})$ gives
$2c_{n}\bar C_{G}=\eta^{2}/4$, hence
$\He(\ell^{(k)},u^{(k)})\le\eta/2\le\eta$. Thus, we constructed ${N}(r,\Theta^{\star}_n,\varrho)$ $\eta$--brackets covering $\Pnsieve^{\star}$, proving~\eqref{eq:cov2br}.

To show $\eqref{eq:brentropy}$, it remains to bound $K={N}(r,\Theta^{\star}_n,\varrho)$. We first reduce the product
space $\Theta_n$ to its factors, then bound each factor based on volumes (Lemma~5.7, \citet{wainwright2019high}).
The metric $\varrho$ in~\eqref{eq:paramnorm} is the \emph{sum} of the three
factor metrics
$\varrho=\varrho_{\Delta}+\varrho_{Z}+\varrho_{\Sigma}$, with
$\varrho_{\Delta}=\lVert w-w' \rVert_{1}$,
$\varrho_{Z}=\max_{j}\lVert z_{j}-z_{j}' \rVert_{\infty}$, and
$\varrho_{\Sigma}=\lVert \Sigma-\Sigma' \rVert_{\mathrm{op}}$. If
$\mathcal N_{\Delta},\mathcal N_{Z},\mathcal N_{\Sigma}$ are $(r/3)$--covers of the
three factors in their respective metrics, then for any
$\theta=(w,z,\Sigma)$ one may choose factor--centers within $r/3$ of $w,z,\Sigma$
respectively, and the sum metric gives
\[
  \varrho\bigl(\theta,\text{chosen center}\bigr)
   =\varrho_{\Delta}+\varrho_{Z}+\varrho_{\Sigma}
   \le\tfrac r3+\tfrac r3+\tfrac r3=r.
\]
Hence the product grid
$\mathcal N_{\Delta}\times\mathcal N_{Z}\times\mathcal N_{\Sigma}$ is an
$r$--cover of $\Theta^{\star}_n$, so
\begin{equation}\label{eq:prod}
  {N}(r,\Theta^{\star}_n,\varrho)
   \le
   {N}(\tfrac r3,\Delta_{\N},\varrho_{\Delta})\cdot
   {N}(\tfrac r3,[-a,a]^{d\N},\varrho_{Z})\cdot
   {N}(\tfrac r3,\mathcal D^{\star}_{\sigma},\varrho_{\Sigma}).
\end{equation}

Since $a=\sigma^{-2}$ by \eqref{eq:paramrates}, the radius chosen above satisfies $r=\eta^{2}\sigma^{4d+4}/(8C_{\mathrm L}C_{G})\le\sigma^{2}$, using $\eta\le1$, $\sigma\le1$ and $C_{\mathrm L},C_{G}\ge1$. Together with $\sigma^{2}\le1\le a$ and $\sigma^{2}\le\bar\sigma^{2}$, the latter holding for all $n$ large enough since $\sigma\to0$, this yields $1+6X/r\le 7X/r$ for each of the three scale factors $X$ arising in the estimates below, namely $X=1$, $X=a$ and $X=D$ with $D$ as in \eqref{eq:Ddef}. We write $C:=7$ for this constant throughout.
First, $([-a,a]^{d\N}, \varrho_Z)$ is the $\ell_{\infty}$--ball of radius $a$
in dimension $d\N$, so Wainwright~(2019, Example~5.2), applied at mesh
$\rho=r/3$, gives
\begin{equation}\label{eq:cube_number}
  N\bigl(\tfrac r3,[-a,a]^{d\N},\varrho_{Z}\bigr)
   \le \Bigl(1+\tfrac{2a}{r/3}\Bigr)^{d\N}
    =\Bigl(1+\tfrac{6a}{r}\Bigr)^{d\N}
   \le \Bigl(\tfrac{Ca}r\Bigr)^{d\N}.
\end{equation}

The set
$\mathcal D^{\star}_{\sigma}=\{\Sigma\in\mathrm{Sym}(d):\sigma^{2}\le
\mathrm{eig}_{1}(\Sigma)\le\cdots\le\mathrm{eig}_{d}(\Sigma)\le\bar\sigma^{2}\}$
lies in the vector space $\mathrm{Sym}(d)$ of dimension $m=\tfrac{d(d+1)}2$, on
which $\lVert \cdot \rVert_{\mathrm{op}}$ is a norm. Since every
$\Sigma\in\mathcal D^{\star}_{\sigma}$ is symmetric positive definite,
$\lVert \Sigma \rVert_{\mathrm{op}}=\mathrm{eig}_{d}(\Sigma)$, and the eigenvalue cap
of Lemma~\ref{lem:loc} gives the constant, free of $n$,
\begin{equation}\label{eq:Ddef}
  D:=\bar\sigma^{2},
  \qquad \lVert \Sigma \rVert_{\mathrm{op}}\le D
  \ \ \text{for all }\Sigma\in\mathcal D^{\star}_{\sigma}.
\end{equation}
Consequently
\begin{equation}
  \mathcal D^{\star}_{\sigma}\ \subseteq\ \mathbb B_{\mathrm{op}}(0,D)
   :=\{\Sigma\in\mathrm{Sym}(d):\lVert \Sigma \rVert_{\mathrm{op}}\le D\}.
\end{equation}
Covering a subset is no harder than covering a superset:
$N(\rho,\mathcal D^{\star}_{\sigma},\lVert \cdot \rVert_{\mathrm{op}})\le
N(\rho,\mathbb B_{\mathrm{op}}(0,D),\lVert \cdot \rVert_{\mathrm{op}})$, because any $\rho$--cover of the
larger set restricts to a $\rho$--cover of the smaller.
Since $\mathbb B_{\mathrm{op}}(0,D)=D\,\mathbb B_{\mathrm{op}}(0,1)$, we have
$N(\rho,\mathbb B_{\mathrm{op}}(0,D),\lVert \cdot \rVert_{\mathrm{op}})
=N(\rho/D,\mathbb B_{\mathrm{op}}(0,1),\lVert \cdot \rVert_{\mathrm{op}})$, and
Lemma~5.7 (self-covering, Example~5.8 of \citet{wainwright2019high}) applied at mesh $\rho=r/3$ with dimension
$m=\tfrac{d(d+1)}2$ yields
\begin{equation}\label{eq:Sigma_number}
  N\bigl(\tfrac r3,\mathcal D^{\star}_{\sigma},\lVert \cdot \rVert_{\mathrm{op}}\bigr)
   \le N\bigl(\tfrac{r}{3D},\mathbb B_{\mathrm{op}}(0,1),\lVert \cdot \rVert_{\mathrm{op}}\bigr)
   \le\Bigl(1+\tfrac{2D}{r/3}\Bigr)^{d(d+1)/2}
   \le\Bigl(\tfrac{CD}{r}\Bigr)^{d(d+1)/2},
\end{equation}
using $1+\tfrac{6D}r\le\tfrac{CD}r$ for $C=7$, since $r\le\sigma^{2}\le D$.

Finally, we bound $N\!\left(\tfrac{r}{3},\, \Delta_{\N},\, \varrho_\Delta = \lVert \cdot \rVert_1 \right)$.
Since every $w \in \Delta_{\N}$ satisfies $w \ge 0$ and $\lVert{w}\rVert_1 = 1$, we have
$\Delta_{\N} \subseteq \mathbb{B}_1^{\N}$, the $\ell_1$ unit ball in $\R^{\N}$. This
ball {is} a norm's unit ball, so self-covering applies at this larger space, and
\begin{equation}\label{eq:Delta_number}
N(\delta,\, \Delta_{\N},\, \ell_1) \leq 
     N(\delta,\, \mathbb{B}_1^{\N},\, \ell_1)
    \;\le\; \!\left(1 + \tfrac{2}{\delta}\right)^{\N}
    =\Bigl(1+\tfrac{6}{r}\Bigr)^{\N} 
    \le\Bigl(\tfrac{C}{r}\Bigr)^{\N}.
\end{equation}

Substituting \eqref{eq:cube_number}, \eqref{eq:Sigma_number}, and
\eqref{eq:Delta_number} into the product bound \eqref{eq:prod} and taking
logarithms gives
\begin{equation}\label{eq:logprod}
  \log{N}(r,\Theta^{\star}_n,\varrho)
   \le
   \N\log\tfrac{C}{r}
   +d\N\log\tfrac{Ca}{r}
   +\tfrac{d(d+1)}{2}\log\tfrac{CD}{r}.
\end{equation}
Group the first two ($\N$--proportional) terms and use
$\log D=2\log\bar\sigma$ from \eqref{eq:Ddef} in the third:
\begin{equation}\label{eq:logprod2}
  \log{N}(r,\Theta^{\star}_n,\varrho)
   \le
   d\,\N\Bigl[\log C+\log\tfrac a r+\log\tfrac1r\Bigr]
   +\tfrac{d(d+1)}{2}\Bigl[\log\tfrac{C}{r}+2\log\bar\sigma\Bigr].
\end{equation}
$r=\eta^{2}/\bigl(8C_{\mathrm L}C_{G}(a/\sigma)^{d}\sigma^{-(d+4)}\bigr)$ gives
\[
  \log\tfrac1r
   =2\log\tfrac1\eta+(d+4)\log\tfrac1\sigma+d\log\tfrac a\sigma+\log(8C_{\mathrm L}C_{G}),
\]
so each $\log(\cdot/r)$ term in \eqref{eq:logprod2} is bounded by a
$d$--dependent multiple of
$\log\tfrac a\sigma+\log\tfrac1\eta+\log\tfrac1\sigma$ (the additive constants
$\log(8C_{\mathrm L}C_{G})$ and $2\log\bar\sigma$, the extra $d\log\tfrac a\sigma$ from
the envelope factor $(a/\sigma)^{d}$, and the extra $2\log\tfrac1\sigma$ from the factor
$\sigma^{-2}$ in $\bar C_{G}$ are all of this order and absorbed into $A_{1}$).
Collecting the $\N$--proportional terms, bounding
$\tfrac{d(d+1)}2\le d^{2}$ for the covariance term, and using
$\log{N_{[\,]}}(\eta,\Pnsieve^{\star},\He)\le\log{N}(r,\Theta^{\star}_n,\varrho)$
from~\eqref{eq:cov2br}, yields~\eqref{eq:brentropy}, with $A_{1}$ depending only on $d$ and $\bar\sigma$.
\end{proof}

Finally, we need the following additional Lemma to prove Theorem \ref{Thm:DensityResult}:

\begin{lemma}\label{lem:kaji}
Let $\delta\in\bigl(0,(\log4)^{-1}\bigr)$. For every $\rho$ with $4<\rho\le e^{1/\delta}$,
\begin{equation}\label{eq:kaji-pt}
   (\log\rho)^{2}\;\le\;\frac{4}{\delta^{2}}\,\frac{(\sqrt{\rho}-1)^{2}}{\rho}.
\end{equation}
\end{lemma}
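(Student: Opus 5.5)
The inequality is scalar, so the plan is to substitute $s:=\sqrt{\rho}$. With this choice $(\log\rho)^2=4(\log s)^2$ and $(\sqrt\rho-1)^2/\rho=(1-1/s)^2$. After cancelling the common factor $4$, the claim becomes
\[
  \log s\;\le\;\frac{1}{\delta}\Bigl(1-\frac1s\Bigr),\qquad 2<s\le e^{1/(2\delta)},
\]
where both sides are positive, so taking square roots is legitimate.

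The first step uses the upper constraint $\rho\le e^{1/\delta}$, which gives $\log s=\tfrac12\log\rho\le \tfrac{1}{2\delta}$. The second step uses the lower constraint $s>2$, which gives $1-1/s>\tfrac12$. Combining the two,
\[
  \log s\le \frac{1}{2\delta}=\frac1\delta\cdot\frac12<\frac1\delta\Bigl(1-\frac1s\Bigr).
\]
Squaring and multiplying back by $4$ then yields exactly \eqref{eq:kaji-pt}.

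The condition $\delta<(\log 4)^{-1}$ is needed only so that the range $(4,e^{1/\delta}]$ is nonempty, since $e^{1/\delta}>4$. No real obstacle is expected; the only point to check is that squaring preserves the inequality, which holds because both sides are nonnegative on $s>2$. As a sanity check at the endpoint $\rho=e^{1/\delta}$, the left side equals $\delta^{-2}$ and the right side equals $4\delta^{-2}(1-e^{-1/(2\delta)})^2\ge 4\delta^{-2}\cdot\tfrac14$, so the bound is consistent there.
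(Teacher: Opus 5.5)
Your proof is correct and takes essentially the same route as the paper. Both arguments use $\rho\le e^{1/\delta}$ to get $(\log\rho)^2\le\delta^{-2}$, and then use $\rho>4$ to get $(1-\rho^{-1/2})^2>\tfrac14$; the only difference is that you substitute $s=\sqrt{\rho}$ and compare square roots, whereas the paper substitutes $x=1/\rho$ and uses that $x\mapsto(\sqrt{x}-1)^2$ is decreasing on $(0,1)$.
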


\begin{proof}
Put $x=1/\rho$, so that $e^{-1/\delta}\le x<1/4$. On this range
$\bigl(\log(1/x)\bigr)^{2}\le\delta^{-2}$, and since $x\mapsto(\sqrt{x}-1)^{2}$ is
decreasing on $(0,1)$,
\[
   \delta^{-2}\;=\;4\delta^{-2}\bigl(\sqrt{1/4}-1\bigr)^{2}\;\le\;4\delta^{-2}(\sqrt{x}-1)^{2}.
\]
Substituting back and using $\bigl(\sqrt{1/\rho}-1\bigr)^{2}=(\sqrt{\rho}-1)^{2}/\rho$
gives \eqref{eq:kaji-pt}.
\end{proof}

\densityresult*

\begin{proof}


We recall that
\begin{align*}
    \delta_n :=  C_4 n^{-\beta/(2\beta + d)} (\log(n))^t, \qquad    \N := C\frac{n\delta_n^2}{\log(n\delta_n^2)},\qquad
    a := n \delta_n^2, \qquad
    \sigma^{-2} := n\delta_n^2,
\end{align*}
as defined in \eqref{eq:paramrates}.
Denote $\sigma_A$ the common standard deviation of the approximating point covariance $\sigma^2_A I$ used below. Since $n\delta_n^{2}\to\infty$, we assume throughout that $n$ is large enough that
$n\delta_n^{2}\ge1$, equivalently $\sigma\le1\le a$. This is in any case required
for $\N$ to be a positive integer.

\paragraph{Step 1} We first show that there exists $p_n \in \Pnsieve$, such that \eqref{eq:approx} and \eqref{eq:tail} hold. We start by following \cite[Proposition 9.14]{Fundamentals}: Let for $a_0$ large, $a_{\sigma_A}=a_0 \logm(\sigma_A)^{1/\tau}$, there exists a discrete probability measure
\[
F^*_{\sigma_A} =\sum_{j=1}^{\NI} w_j^* \delta_{z_j},
\]
with at most $\NI \lesssim (a_{\sigma_A}/\sigma_A)^d (\logm (\epsilon))^d$ support points $z_1, \ldots, z_{\NI}$ inside $[-a_{\sigma_A}, a_{\sigma_A}]^d$. The support points and weights of $F^*_{\sigma_A}$ can be specifically chosen, such that $\He(p_0, p_{F^*_{\sigma_A}, \sigma_A^2 I})\lesssim \epsilon + \sigma^{\beta}_A$. 
Moreover, as in \cite[Proposition 9.14]{Fundamentals}, it is also possible to construct a
partition $U_1, \ldots, U_{\NII}$ of $[-a_{\sigma_A}, a_{\sigma_A}]^d$ of size
$\NII \lesssim \NI$ into nonempty cells of diameter at most $\sigma_A$, refined so that
every cell containing a support point $z_j$ has diameter $\lesssim \sigma_A \epsilon^2$.
Throughout we write
\[
   w_j^* := F^*_{\sigma_A}(U_j), \qquad j = 1, \ldots, \NII,
\]
for the mass the approximating measure assigns to cell $U_j$, so that $\sum_{j \le \NII} w_j^* = 1$.
We then define the set
\[
B := \left\{ p_{F, \Sigma} : \sum_{j=1}^{\NII} |F(U_j) - w_j^*| \le \epsilon^2, \ \min_{1 \le j \le \NII} F(U_j) \ge \epsilon^4, \ \frac{1}{\sigma_A^2} \le  \mathrm{eig}_{1}(\Sigma^{-1}) \le \mathrm{eig}_{d}(\Sigma^{-1}) \le \frac{1 + \sigma_A^\beta}{\sigma_A^2} \right\},
\]
which is an adapted version of the set $B$ in \cite[Equation (9.14)]{Fundamentals}. The main difference is that the conditions on $F$ are formulated over the cube partition $U_1, \ldots, U_{\NII}$ of $[-a_{\sigma_A}, a_{\sigma_A}]^d$, rather than a partition of all of $\R^d$ as in the book.

 In the following we choose
\begin{align}\label{eq:sigmaAcalib}
    \epsilon^2 \asymp \min(\sigma_A^d, \sigma_A^{2\beta})
       \bigl(\logm(\sigma_A)\bigr)^{-(d/\tau + d + 1)},
    \qquad
    \sigma_A^{2\beta + d} \asymp n^{-1} (\log n)^{d/\tau + d - 1},
\end{align}
so that $\sigma_A \asymp n^{-1/(2\beta+d)}(\log n)^{(d/\tau+d-1)/(2\beta+d)}$ and
$\delta_n \asymp \sigma_A^\beta \logm(\sigma_A)$.
 \cite[Proof of Theorem 5.1]{Bayes} show that with this choice, for all $n$ large enough,
\begin{align}\label{eq_Cond1}
\frac{\epsilon^4}{\sigma_A} < 0.4 e \max_m \|p^{(m)}_{0}\|_{\infty} 
\end{align}
\begin{align}\label{eq_Cond2}
    (\sigma_A^{2\beta} + \epsilon^2) \logm^2\left(\frac{\epsilon^4}{\sigma_A}\right) \lesssim \delta_n^2, \ \ \frac{\epsilon^4}{\sigma_A} < 1
\end{align}
 \begin{align}\label{eq_Cond3}
     (\logm(\sigma_A))^{d/\tau} \sigma_A^{-d} (\logm(\epsilon))^{d+1} \lesssim (\logm(\sigma_A))^{d/\tau+d+1} \sigma_A^{-d} \lesssim n \delta_n^2, 
 \end{align}
where the first inequality in \eqref{eq_Cond3} follows from the choice of $\epsilon$. Then we have that:

\claim{Claim 1:} For any $p_{n, \infty} \in B$, \eqref{eq:approx} holds.

It immediately follows from \cite[Chapter 9.4]{Fundamentals}, in particular page 249, that
\[
\HeMAR(p_0,p_{n, \infty}) \leq \He(p_0,p_{n, \infty}) \lesssim \sigma_A^{\beta} + \epsilon \lesssim \delta_n,
\]
where the last inequality follows by \eqref{eq_Cond2}.

\claim{Claim 2:} For any $p_{n, \infty} \in B$, \eqref{eq:tail} holds.

Following \cite[Proof of Proposition 9.14]{Fundamentals} 
for any density $p_{n, \infty} \in B$,
\begin{align}\label{eq_lowerbound_1}
    \frac{1}{p_{n, \infty}(x)}  \leq \frac{\sigma_A^d}{e^{-1} \epsilon^4}, \text{ for } x \in [-a_{\sigma_A}, a_{\sigma_A}]^d, \text{ and }   \frac{1}{p_{n, \infty}(x)} \lesssim \sigma_A^d e^{2d \|x\|^2/\sigma_A^2}, \text{ for } x \notin [-a_{\sigma_A}, a_{\sigma_A}]^d,
\end{align}
where $a_{\sigma_A}=a_0 \logm(\sigma_A)^{1/\tau}$ with $a_0$ large.
Let
\[
p_{m,0,F,\Sigma}(x) := p_{0}^{(m)}(x^{(m)})p_{n,\infty}(x^{(\bar{m})} \mid x^{(m)}),
\]
which has the property that for $\sigma_A < 1$, 
\begin{align}\label{eq_upperbound_1}
    p_{m,0,F,\Sigma}(x) \leq \frac{2\max_{m}\|p_{0}^{(m)} \|_{\infty}}{\sigma_A^{d-1}} \;\; \text{ for all } x \in \mathcal{X}, m\in\{0,1\}^d\setminus \{1\}^d.
\end{align}
Factoring $p_{n,\infty}(x)=p_{n,\infty}^{(m)}(x^{(m)})\,p_{n,\infty}(x^{(\bar{m})}\mid x^{(m)})$, the conditional cancels and
\begin{align*}
    \rho^{(m)}(x^{(m)}) \;=\; \frac{p_{0}^{(m)}(x^{(m)})}{p_{n, \infty}^{(m)}(x^{(m)})} = \frac{p_{m,0,F,\Sigma}(x)}{p_{n, \infty}(x)},
    \quad\text{for every }x.
\end{align*}
This transfers the joint bounds \eqref{eq_lowerbound_1} and
\eqref{eq_upperbound_1} onto the marginal ratio we need in \eqref{eq:tail}.
Set 
$
   \tilde\epsilon\;:=\;\frac{2\max_{m}\lVert p_{0}^{(m)}\rVert_{\infty}\,\sigma_{A}}{e^{-1}\epsilon^{4}} .
$
From \eqref{eq_lowerbound_1} and \eqref{eq_upperbound_1}, 
\begin{align}\label{eq:localise}
    \frac{p_{0}^{(m)}(x^{(m)})}{p_{n, \infty}^{(m)}(x^{(m)})}  > \tilde{\epsilon} \implies \|x \|_{\infty} > a_{\sigma_A} \implies \log\left(\frac{1}{p_{n, \infty}(x)}\right) \lesssim  \logm(\sigma_A) +\frac{\|x\|^2}{\sigma_A^2}.
\end{align}
Since $\epsilon^{4}/\sigma_{A}\to0$ by the choice of $\epsilon$ and $\sigma_{A}$,
\begin{equation}\label{eq_larger4}
   \tilde\epsilon\;>\;4\qquad\text{for all $n$ large enough.}
\end{equation}
Moreover, writing $c:=\max(d,2\beta)$, \eqref{eq:sigmaAcalib} gives
$\epsilon^{4}/\sigma_{A}\asymp\sigma_{A}^{2c-1}\bigl(\logm(\sigma_{A})\bigr)^{-2(d/\tau+d+1)}$, so that
$\logm(\epsilon^{4}/\sigma_{A})=(2c-1)\logm(\sigma_{A})+2(d/\tau+d+1)\log\logm(\sigma_{A})+O(1)\asymp\logm(\sigma_{A})$, and hence
\begin{equation}\label{eq:deltasigma}
   \log\tilde\epsilon
   \;=\;\log\bigl(2e\max_{m}\lVert p_{0}^{(m)}\rVert_{\infty}\bigr)
        +\logm\!\Bigl(\frac{\epsilon^{4}}{\sigma_{A}}\Bigr)
   \;\asymp\;\logm(\sigma_{A}).
\end{equation}
We now show a pointwise bound on the tail $\{x \in \mathcal{X}: \rho^{(m)}(x^{(m)})>\tilde\epsilon\}$:
On this set, for any $m \in \{0, 1\}^d \setminus \{1\}^d$, 
the denominator of $\rho_{p_0, p_{n,\infty}}^{(m)}(x^{(m)})$ obeys the second implication of \eqref{eq:localise}; 
and its numerator obeys \eqref{eq_upperbound_1}, so
\begin{align}
    \Bigl(\log\rho_{p_0, p_{n,\infty}}^{(m)}(x^{(m)})\Bigr)^{2} 
    &\leq \left(  \log(2\max_{m}\|p_{0}^{(m)} \|_{\infty}) + (d-1)\logm(\sigma_A)  + \logm(\sigma_A) +\frac{\|x\|^2}{\sigma_A^2} \right)^2 \nonumber \\
    &\leq  2 \left(  \logm(\sigma_A) +\frac{\|x\|^2}{\sigma_A^2}  \right)^2 + 2\left(  \log(2\max_{m}\|p_{0}^{(m)} \|_{\infty}) + (d-1)\logm(\sigma_A)  \right)^2 \nonumber \\
    & \lesssim \left(  \logm(\sigma_A) +\frac{\|x\|^2}{\sigma_A^2}  \right)^2, \label{eq:pointwise-tail}
\end{align}
where, for $\sigma_A < 1$ small, the last step followed because
\begin{align*}
   \log(2\max_{m}\|p_{0}^{(m)} \|_{\infty}) + (d-1)\logm(\sigma_A)\leq C \left(\logm(\sigma_A) +\frac{\|x\|^2}{\sigma_A^2}\right).
\end{align*} 
On $\{4<\rho^{(M)}\le\tilde\epsilon\}$:
Let $\delta_{\sigma}:=1/\log\tilde\epsilon$, so that $e^{1/\delta_{\sigma}}=\tilde\epsilon$ and $\delta_{\sigma}\in\bigl(0,(\log4)^{-1}\bigr)$. Hence Lemma~\ref{lem:kaji} applies pointwise, giving us
\begin{align}
   &\E_{(X,M) \sim \Pjoint}\Bigl[\bigl(\log\rho^{(M)}_{ p_0, p_{n, \infty}} (X^{(M)}) \bigr)^{2}\one\{4<\rho_{ p_0, p_{n, \infty}}^{(M)} (X^{(M)}) \le\tilde\epsilon\}\Bigr] \\
   \le &\frac{4}{\delta_{\sigma}^{2}}\sum_{m}\int
      \frac{\bigl(\sqrt{\rho_{ p_0, p_{n, \infty}}^{(m)}(x^{(m)})}-1\bigr)^{2}}{\rho_{ p_0, p_{n, \infty}}^{(m)}(x^{(m)})}
      \,p_{0}(x)\,\Prob(M=m\mid x)\,\mathrm{d}x \notag\\
   = &\frac{4}{\delta_{\sigma}^{2}}\sum_{m}\int
      \Bigl(\sqrt{p_{0}^{(m)}}-\sqrt{p_{n,\infty}^{(m)}}\Bigr)^{2}
      \Prob(M=m\mid x^{(m)})\,\mathrm{d}x^{(m)} \notag\\
   = &\frac{4}{\delta_{\sigma}^{2}}\,\HeMAR^{2}(p_{0},p_{n,\infty})
   \;\le\;\frac{4}{\delta_{\sigma}^{2}}\,\He^{2}(p_{0},p_{n,\infty}) \notag\\
   \lesssim &\logm^{2}(\sigma_{A})\bigl(\sigma_{A}^{2\beta}+\epsilon^{2}\bigr)
   \;\lesssim\;\delta_{n}^{2},
   \label{eq:annulus}
\end{align}
where the last two steps use \eqref{eq:deltasigma}, claim 1, and \eqref{eq_Cond2}.

On $\{\rho^{(M)}>\tilde\epsilon\}$: 
\begin{align*}
    & \E_{(X,M) \sim \Pjoint}\!\left[\Bigl(\log\rho_{ p_0, p_{n, \infty}}^{(M)}(X^{(M)})\Bigr)^2\,\one\{\rho_{ p_0, p_{n, \infty}}^{(M)}(X^{(M)})>\tilde{\epsilon}\}\right]\\
    &=\sum_{m} \int \log\left(\frac{p_{m,0,F,\Sigma}(x)}{p_{n, \infty}(x)}\right)^2\mathbf{1}\left \{\frac{ p_{m,0,F,\Sigma}(x)}{p_{n, \infty}(x)} > \tilde{\epsilon} \right \}p_0(x) \Prob(M=m\mid x) \mathrm{d}x\\
    &\lesssim  \int_{\|x \|_{\infty} > a_{\sigma_A}} \left(\logm(\sigma_A)^2  +  \frac{\|x\|^4}{\sigma_A^4} + 2 \logm(\sigma_A) \frac{\|x\|^2}{\sigma_A^2}\right)p_0(x)  \mathrm{d}x,
\end{align*}
using \eqref{eq:pointwise-tail} and $\sum_{m} \Prob(M=m \mid x )=1$. Now by the tail assumption on $p_0$, the fact that $\|x \| \geq \|x \|_{\infty}$ and the choice of $a_{\sigma_A}=a_0 \logm(\sigma_A)^{1/\tau}$, for large enough $a_0$, 
\begin{align*}
     &\int_{\|x \| > a_{\sigma_A}} \left(\logm(\sigma_A)^2  +  \frac{\|x\|^4}{\sigma_A^4} + 2 \logm(\sigma_A) \frac{\|x\|^2}{\sigma_A^2}\right)c e^{-b \|x\|^{\tau}}  \mathrm{d}x\\
          &= \frac{e^{-b a_{\sigma_A}^{\tau}}}{\sigma_A^4}\int_{\|x \| > a_{\sigma_A}} \left(\sigma_A^4\logm(\sigma_A)^2  +  \|x\|^4 + 2 \logm(\sigma_A) \sigma_A^2\|x\|^2\right)c e^{-b (\|x\|^{\tau} -a_{\sigma_A}^{\tau})}  \mathrm{d}x \\
          &\lesssim \frac{e^{-b a_{\sigma_A}^{\tau}}}{\sigma_A^4}\\
          &=\sigma_A^{b a_0^{\tau}-4},
\end{align*}
which is strictly smaller than $\sigma_A^{2\beta}$ for $a_0$ large. It follows from \eqref{eq_Cond2} that \eqref{eq:tail} holds.

\bigskip

\claim{Claim 3:} There exists $p_n \in B \cap \Pnsieve$.

We start by showing that
\begin{align}\label{eq_firstclaim}
    (\logm(\sigma_A))^{d/\tau} \sigma_A^{-d} (\logm(\epsilon))^{d} \lesssim \N
\end{align}
If this holds, we have that,
$$\NII \lesssim \NI \lesssim \left(\frac{a_{\sigma_A}}{\sigma_A}\right)^d \logm(\epsilon)^d \lesssim  (\logm(\sigma_A))^{d/\tau} \sigma_A^{-d} (\logm(\epsilon))^{d} \lesssim \N, $$
and we can choose the constant in $N$ large enough, such that $\NII \leq \N$.

From \eqref{eq_Cond3}, and that $\logm(\epsilon) \asymp \logm(\min(\sigma_A^d, \sigma_A^{2\beta})) \lesssim \logm(\sigma_A)$ it follows that
\begin{align}\label{eq_Cond3d}
    (\logm(\sigma_A))^{d/\tau} \sigma_A^{-d} (\logm(\epsilon))^{d} \lesssim (\logm(\sigma_A))^{d/\tau+d} \sigma_A^{-d} \lesssim \frac{n \delta_n^2}{\logm(\sigma_A)}.
\end{align}
Moreover, since,
\begin{align*}
    (\logm(n^{-1} \log(n)^{d/\tau + d -1}))^{d/\tau + d +1}=(\log(n) - (d/\tau + d -1)\log(\log(n)))^{d/\tau + d +1} \gtrsim \log(n)^{d/\tau + d +1},
\end{align*}
a refinement shows that 
\begin{align*}
    (\logm(\sigma_A))^{d/\tau+d+1} \sigma_A^{-d}&= (\logm(n^{-1} \log(n)^{d/\tau + d -1}))^{d/\tau + d +1} n^{d/(2\beta + d)} (\log(n))^{-(d/\tau + d -1) \cdot d/(2\beta + d)}\\
    &\gtrsim \log(n)^{d/\tau + d +1} n^{d/(2\beta + d)} (\log(n))^{-(d/\tau + d -1) \cdot d/(2\beta + d)}
\end{align*}
Collecting the exponents, we have $n^{d/(2\beta + d)}= n n^{d/(2\beta + d)-1}=n n^{-2\beta /(2\beta + d)}$, while the power of $\log(n)$ is
\begin{align*}
d/\tau+d+1 - \frac{(d/\tau+d-1)d}{2\beta+d} 
&= \frac{(d/\tau+d+1)(2\beta+d) - (d/\tau+d-1)d}{2\beta+d}\\
&= \frac{2\beta d/\tau + 2\beta d + 2\beta + 2d}{2\beta+d}\\
&= \frac{2(\beta d/\tau + \beta d + \beta + d)}{2\beta+d},
\end{align*}
so that we recover $n \delta_n^2$, showing that 
$$\log(n \delta_n^2) \lesssim d\log(\sigma_A^{-1}) + (d/\tau+d+1)\log(\logm(\sigma_A) ) \lesssim \logm(\sigma_A) $$

Plugging the result back into \eqref{eq_Cond3d}, we obtain
\[
(\logm(\sigma_A))^{d/\tau} \sigma_A^{-d} (\logm(\epsilon))^{d} \lesssim \frac{n \delta_n^2}{\log(n \delta_n^2)} \lesssim \N,
\]
as desired.

Take the partition $\{U_j\}_{j=1}^{\NII}$ constructed above, from which we build $\tilde F \in \mathcal{F}_{\N, \delta_n, a}$ and verify the constraints on $F$ defining the set $B$.
Assume
\begin{equation}\label{eq:witness-cond}
   2\,\NII\,\epsilon^{2}\;\le\;1,
\end{equation}
which is possible as $\NII\,\epsilon^{2} \to 0$.
Let $\lambda:=\NII\epsilon^{4}\in(0,1)$ and define the probability vector
\[
   v_{j}\;:=\;(1-\lambda)\,w_{j}^{*}+\frac{\lambda}{\NII}
   \;=\;(1-\NII\epsilon^{4})\,w_{j}^{*}+\epsilon^{4},
   \qquad j=1,\dots,\NII.
\]
Then $\sum_{j\le\NII}v_{j}=(1-\lambda)+\lambda=1$, and
\[
   \min_{j\le\NII}v_{j}\;\ge\;\frac{\lambda}{\NII}\;=\;\epsilon^{4},
   \qquad
   \sum_{j\le\NII}\lvert v_{j}-w_{j}^{*}\rvert
   =\lambda\sum_{j\le\NII}\Bigl\lvert\frac{1}{\NII}-w_{j}^{*}\Bigr\rvert
   \;\le\;2\lambda\;=\;2\NII\epsilon^{4}\;\le\;\epsilon^{2},
\]
the last step by \eqref{eq:witness-cond}. 
Pick a $z_{j}\in U_{j}$ for each $j\le\NII$; by $\NII \lesssim \N$ we may pick further points $z_{\NII+1},\dots,z_{\N}\in[-a,a]^{d}$ arbitrarily and set
\[
   \tilde F\;:=\;\sum_{j=1}^{\N}v_{j}\delta_{z_{j}},
   \quad v_{\NII+1}=\dots=v_{\N}:=0.
\]
Hence $\tilde F(U_{j})\;=\;v_{j}$ for every $j\le\NII$ and $\tilde F$ inherits both $F$-constraints. 
Since $U_{1},\dots,U_{\NII}$ are disjoint and $z_{j}\in U_{j}$ for $j\le\NII$, while $v_{\NII+1}=\dots=v_{\N}=0$, we have $\tilde F(U_{j})=v_{j}$ for every $j\le\NII$, so $\tilde F$ inherits both $F$-constraints of the set $B$. Moreover $a_{\sigma_{A}}=a_{0}\logm(\sigma_{A})^{1/\tau}\lesssim n\delta_n^{2}=a$ by \eqref{eq_Cond3}, so $[-a_{\sigma_{A}},a_{\sigma_{A}}]^{d}\subset[-a,a]^{d}$ for $n$ large; hence $z_{j}\in[-a,a]^{d}$ for every $j\le\N$, giving $\tilde F \in \mathcal{F}_{\N, \delta_n, a}$.


For $\Sigma$, the conditions in $B$ dictate that
\begin{align*}
 \frac{\sigma_A^2}{1 + \sigma_A^\beta}\le \mathrm{eig}_{1}(\Sigma) \le   \mathrm{eig}_{d}(\Sigma)  \le \sigma_A^2.
\end{align*}
Now because $\sigma_A, \epsilon < 1$ for all $n$ large enough, \eqref{eq_Cond3} implies that,
\begin{align*}
    \sigma_A^2 \geq  \sigma_A^{d} (\logm(\sigma_A))^{-d/\tau}  (\logm(\epsilon))^{-(d+1)} \gtrsim (n \delta_n^2)^{-1}=\sigma^2 .
\end{align*}
Moreover, since $\sigma^2_A \to 0$ and the upper bound $ \sigma^2 (1+ \delta_n^2)^n \to \infty$, $\sigma^2_A\leq \sigma^2 (1+ \delta_n^2)^n$ for $n$ large enough. Thus, we can simply take $\Sigma=\sigma_A^2 I$ to meet the eigenvalue criteria of both the set $B$ and $\mathcal{D}_{\sigma, \delta_n}$. Thus we have by construction that, 
\[
p_n:=p_{\tilde{F}, \sigma_A^2 I} \in B \cap \Pnsieve,
\]
as required.

\paragraph{Step 2} We show that \eqref{eq:entropy} is satisfied. Specifically, that 
there are constants $C_1$ and $C_2$, independent of $n$,
such that for all large $n$,
\begin{equation}\label{eq:clearance}
  J_{[\,]}\bigl(\delta_{n},\mathcal P_{n,\delta_{n}/\sqrt{c_{0}}},\He\bigr)
   \;\le\;C_1\,\delta_{n}\sqrt{\N \log n}
   \;\le\; C_2 \delta_{n}^{2}\sqrt n .
\end{equation}

By Corollary~\ref{cor:incl}, $\mathcal P_{n,\delta_{n}/\sqrt{c_{0}}}\subseteq\Pnsieve^{\star}$
for all $n$ large enough, so bracketing numbers are monotone under inclusion and
Lemma~\ref{lem:cov2br} applies. Since $\log\tfrac a\sigma+\log\tfrac1\sigma=\log\tfrac a{\sigma^{2}}$,
the bound \eqref{eq:brentropy} collapses to a single logarithm: setting
\[
  K_{n}:=A_{1}\bigl(\N+d^{2}\bigr),
  \qquad
  B_{n}:=\frac{a}{\sigma^{2}}=(n\delta_{n}^{2})^{2},
\]
the latter by \eqref{eq:paramrates}, we obtain for every $\eta\in(0,\delta_{n}]$
\begin{equation}\label{eq:entropysingle}
  1+\log N_{[\,]}(\eta,\mathcal P_{n,\delta_{n}/\sqrt{c_{0}}},\He)
   \;\le\;K_{n}\log\tfrac{B_{n}}\eta+1 .
\end{equation}
Since $B_{n}=(n\delta_{n}^{2})^{2}\to\infty$ while $\delta_{n}\to0$, we have $B_{n}\ge e$ and
$\delta_{n}\le B_{n}/e$ for all $n$ large, which the integral bound below requires.

For $B\ge e$ and $\delta\in(0,B/e]$, $\displaystyle\int_{0}^{\delta}\sqrt{\log\tfrac B\eta}\,\mathrm{d}\eta \le 2\delta\sqrt{\log\tfrac B\delta}$: Indeed, writing $s_{0} := \log(B/\delta) \geq 1$, the substitution $\eta=\delta e^{-s}$ gives
\begin{align}\label{eq:step3x}
    &\int_{0}^{\delta}\sqrt{\log\tfrac B\eta}\,\mathrm{d}\eta
   = \delta\int_{0}^{\infty}\sqrt{s_{0}+s}\;e^{-s}\,\mathrm{d} s \nonumber \\
   \le &\delta\int_{0}^{\infty}\bigl(\sqrt{s_{0}}+\sqrt s\bigr)e^{-s}\,\mathrm{d} s \\
   = &\delta\Bigl(\sqrt{s_{0}}+\tfrac{\sqrt\pi}{2}\Bigr)
   \le \delta\bigl(\sqrt{s_{0}}+1\bigr) \le 2\delta\sqrt{s_{0}}, \nonumber
\end{align}
where the second equality follows from $\int_0^\infty e^{-s}\mathrm{d} s=1,\ \int_0^\infty\sqrt s\,e^{-s}\mathrm{d} s=\tfrac{\sqrt\pi}2$; the second inequality follows from $\tfrac{\sqrt\pi}2\le1$, and the last from $s_{0}\ge1$, which holds since $\delta\le B/e$.
Now we bound $J_{[\,]}(\delta_{n},\mathcal P_{n,\delta_{n}/\sqrt{c_{0}}},\He)$:
\begin{align}
  J_{[\,]}(\delta_{n},\mathcal P_{n,\delta_{n}/\sqrt{c_{0}}},\He)
  &=\int_0^{\delta_n}\sqrt{1+\log N_{[\,]}(\eta,\mathcal P_{n,\delta_{n}/\sqrt{c_{0}}},\He)}\,\mathrm{d}\eta \notag\\
  &\le\sqrt{K_{n}}\int_{0}^{\delta_{n}}\sqrt{\log\tfrac{B_{n}}\eta}\,\mathrm{d}\eta
      +\delta_{n} \notag\\
   &\le2\sqrt{K_{n}}\,\delta_{n}\sqrt{\log\tfrac{B_{n}}{\delta_{n}}}
      +\delta_{n}
   \;\le\;C_1\,\delta_{n}\sqrt{\N \log n},
      \label{eq:Jbound}
\end{align}
where the first inequality uses $\sqrt{x+y}\le\sqrt x+\sqrt y$ with
$x=K_{n}\log(B_{n}/\eta)$ and $y=1$, by \eqref{eq:entropysingle}; the second is
\eqref{eq:step3x}; and the last uses $K_{n}\le 2A_{1}\N$ for $n$ large (as $\N\to\infty$
with $d$ fixed) together with
$\log(B_{n}/\delta_{n})=\log B_{n}+\log(1/\delta_{n})\asymp\log n$, the trailing $\delta_{n}$
being absorbed since $\N\log n\ge1$. This proves the first inequality
in~\eqref{eq:clearance}.
By \eqref{eq:paramrates}, $\log(n\delta_{n}^{2})\asymp\log n$ and hence
$\N \log n\asymp n\delta_{n}^{2}\asymp n^{\frac d{2\beta+d}}(\log n)^{2t}$, so both sides
of \eqref{eq:clearance} carry the same order,
\[
  \delta_{n}\sqrt{\N \log n}
   \asymp n^{-\frac\beta{2\beta+d}}(\log n)^{t}\cdot n^{\frac d{2(2\beta+d)}}(\log n)^{t}
   =n^{\frac{d-2\beta}{2(2\beta+d)}}(\log n)^{2t},
  \qquad
  \delta_{n}^{2}\sqrt n\asymp n^{\frac{d-2\beta}{2(2\beta+d)}}(\log n)^{2t}.
\]
So from \eqref{eq:Jbound} there is a constant $C$ such that $J_{[\,]}(\delta_{n},\mathcal P_{n,\delta_{n}/\sqrt{c_{0}}},\He) \;\le\;C\, \delta_{n}^{2}\sqrt{n}$ for all large $n$.

\end{proof}

\clearpage
{\small
\bibliographystyle{apalike}
\bibliography{reference}
}


\end{document}